\def\vn{{\mathbf{n}}}
\def\vx{{\mathbf{x}}}
\def\tx{{\tilde{\mathbf{x}}}}
\def\vy{{\mathbf{y}}}
\def\vz{{\mathbf{z}}}
\def\vI{{\mathbf{I}}}
\def\vR{{\mathbf{R}}}
\def\vN{{\mathbf{N}}}
\def\vmu{{\mathbf{\mu}}}
\def\veps{{\boldsymbol{\epsilon}}}
\def\mI{\mathbf{I}}
\def\mSig{{\mathbf{\Sigma}}}
\def\mA{{\mathbf{A}}}
\def\mS{{\mathbf{S}}}
\def\mH{{\mathbf{H}}}
\def\mF{{\mathbf{F}}}
\def\mLambda{{\mathbf{\Lambda}}}
\def\R{{\mathbb{R}}}
\def\N{{\mathbb{N}}}
\def\cK{\mathbf{K}}
\def\cL{\mathbf{L}}
\def\sS{\mathbf{S}}
\def\fR{\mathcal{R}}
\def\fF{\mathcal{F}}
\def\fJ{\mathcal{J}}
\def\fP{\mathcal{P}}
\def\eE{\mathbb{E}}
\def\nN{\mathcal{N}}
\def\Id{{{\operatorname{Id}}}}
\def\Com{{\operatorname{Com}}}
\DeclareMathOperator*{\argmax}{arg\,max}
\DeclareMathOperator*{\argmin}{arg\,min}
\newcommand{\Prb}[1]{\mathbb{P}\left(#1\right)}
\theoremstyle{plain}
\newtheorem{theorem}{Theorem}[section]
\newtheorem{proposition}{Proposition}[section]
\theoremstyle{definition}
\newtheorem*{definition}{Definition}
\newtheorem{assumption}[theorem]{Assumption}
\theoremstyle{remark}
\icmltitlerunning{Plug-and-Play image restoration with Stochastic deNOising REgularization}
\begin{document}

\twocolumn[
\icmltitle{{Plug-and-Play image restoration with Stochastic deNOising REgularization}}

\begin{icmlauthorlist}
\icmlauthor{Marien Renaud}{1}
\icmlauthor{Jean Prost}{1}
\icmlauthor{Arthur Leclaire}{2}
\icmlauthor{Nicolas Papadakis}{1}
\end{icmlauthorlist}

\icmlaffiliation{1}{Institut de Mathématiques de Bordeaux (IMB), France}
\icmlaffiliation{2}{Télécom Paris, IP Paris, Palaiseau, France}

\icmlcorrespondingauthor{Marien Renaud}{marien.renaud@math.u-bordeaux.fr}

\icmlkeywords{Machine Learning, ICML}

\vskip 0.3in
]
\printAffiliationsAndNotice{}

\begin{abstract}
Plug-and-Play (PnP) algorithms are a class of iterative algorithms that address image inverse problems by combining a physical model and a deep neural network for regularization. 
Even if they produce impressive image restoration results, these algorithms rely on a non-standard use of a denoiser on images that are less and less noisy along the iterations, which contrasts with recent algorithms based on Diffusion Models (DM), where the denoiser is applied only on re-noised images.
We propose a new PnP framework, called Stochastic deNOising REgularization (SNORE), 
which applies the denoiser only on images with noise of the adequate level.
It is based on an explicit stochastic regularization, which leads to a stochastic gradient descent algorithm to solve ill-posed inverse problems.
A convergence analysis of this algorithm and its annealing extension is provided. 
Experimentally, we prove that SNORE is competitive with respect to state-of-the-art methods on deblurring and inpainting tasks, both quantitatively and qualitatively.
\end{abstract}

\section{Introduction}

Many imaging problems can be formulated as \textit{inverse problems} seeking to recover high-quality images $\vx^*$ from their low-quality observations $\vy$ by solving a problem of the form
\begin{align}\label{eq:minimization_problem}
    \vx^* \in \argmin_{\vx \in \R^d}{\fF(\vx, \vy)} + \alpha \fR(\vx),
\end{align}
where $\fF$ measures the fidelity to the degraded observation~$\vy$ and $\fR$ is a regularization term weighted by a parameter $\alpha > 0$.

The regularization is crucial to complete the missing information of the observation by bringing prior knowledge on the high-quality image. 
Choosing a relevant prior $\fR$ in Problem~\eqref{eq:minimization_problem} is a long-standing~\citep{RUDIN1992259, mallat1999, zoran2011} and challenging task and recent approaches explore deep learning techniques that learn a prior from a database of clean images~\citep{zhang2017learning, lunz2018adversarial, laumont2022maximumaposteriori}. 

Problem~\eqref{eq:minimization_problem} can be addressed with proximal splitting algorithms~\citep{Boyd2011} which are first-order optimization algorithms based on the recursive application of gradient-descent and/or proximal operators of functions $\fF$ and $\fR$. 

The Plug-and-Play (PnP) framework~\cite{venkatakrishnan2013plug} consists in replacing, within a proximal splitting algorithm, the proximal step on the regularization $\fR$ by a denoising operation; and it allows to use implicit regularization priors encoded by pre-trained image denoisers.  
Similarly, following the Regularization by Denoising (RED) framework~\citep{romano2017little}, a gradient-descent step on the regularization can be substituted by a learned denoiser. It has been 
observed that plugging a pre-trained state-of-the-art deep denoiser is essential for achieving the best quality results in many imaging contexts~\citep{Metzler.etal2018, Ryu2019, hurault2022proximal, renaud2023plugandplay, Ulondu2023}.

In another line of works, inverse problems solvers based on denoising diffusion models (DDM)~\cite{ho2020denoising, song2021scorebased} have demonstrated their ability to provide high-quality restoration even for severely ill-posed problems where a large amount of information is missing~\citep{chung2023diffusion, song2023pseudoinverseguided}. 
DDM and PnP both rely on deep denoisers to implicitly model the prior distribution, and they decouple prior and  data-fidelity terms in order to provide flexible solvers. 
However, while the theoretical properties of PnP algorithms regarding convergence have been studied in depth~\cite{sreehari2016plug, sun2019online, gavaskar2019proof, Ryu2019, cohen2021regularization,hurault2022gradient}, the theoretical properties of restoration algorithms based on DDM remain poorly understood.
Indeed, DDM methods rely on heuristics to approximate the score of the intractable likelihood. To the best of our knowledge, the impact of the approximation error on the generated samples remains to be quantified.

In this work, we seek to develop an inverse problem solver that inherits the superior restoration quality of DDM methods, while satisfying the theoretical guarantees of convergence that are met for some PnP algorithms.
Our key observation is that, contrary to the PnP framework, in diffusion based methods the denoiser is applied on a \emph{noisy version of the image} at each iteration of the sampling process. We postulate that applying the denoiser to noisy images is a main ingredient to the impressive performance of diffusion-based samplers, as it avoids a domain shift relative to the data on which the denoiser was trained.

This motivates us to propose SNORE (Stochastic deNOising REgularization), a stochastic PnP algorithm, which differs from classical PnP schemes by injecting noise to the input of the denoiser at each iteration.
SNORE minimizes a classical variational objective, where the regularization term is defined as the average value of the smoothed log prior on noisy version of the image of interest, and can be viewed as a relaxed version of the usual negative log-prior.

\textbf{Contributions.} \textbf{(a)}~We propose a new explicit regularization leading to a novel PnP framework, named \textit{Stochastic deNOising Regularization (SNORE)}, in which 
the denoiser is applied on a noisy version of the image at each iteration. 
\textbf{(b)}~We show that SNORE can be optimized by a stochastic gradient-descent algorithm (Algorithm~\ref{alg:Average_PnP}). We prove that this algorithm converges with the exact MMSE denoiser (Proposition~\ref{prop:convergence_unbiased}) and we bound the error with an inexact MMSE denoiser (Proposition~\ref{prop:biais_convergence}). 
\textbf{(c)}~With a critical point analysis (Proposition~\ref{prop:critical_point_accumulate_values}), we motivate the practical use of an annealed algorithm (Algorithm~\ref{alg:Annealead_SNORE}). Finally, we demonstrate the efficiency of SNORE to solve inverse problems.

\section{Stochastic deNOising REgularization (SNORE)}

In this section, we propose a new stochastic regularization, SNORE (Equation~\ref{eq:new_regularization}), that can be used for PnP restoration in such a way that the denoising step now applies to a noisy version of the current image.

We first recall the RED regularization (Section~\ref{sec:background}) and the corresponding PnP algorithm (Algorithm~\ref{alg:RED}).
Then we introduce the SNORE regularization leading to Algorithm~\ref{alg:Average_PnP}. 
We also propose an Annealing SNORE algorithm (Algorithm~\ref{alg:Annealead_SNORE}) inspired by annealed importance sampling~\cite{neal1998annealed}. Finally, we discuss the positioning of our method in relation to existing related works (Section~\ref{sec:related_works}).

\subsection{Background}\label{sec:background}
\textbf{Bayesian inverse problem}
An inverse problem formulated as in Problem~\eqref{eq:minimization_problem} has a general Bayesian interpretation. From the observation $\vy \in \R^q$ (typically $q < d$), we can restore the image by computing the Maximum A Posteriori (MAP) estimator $\hat{\vx}$ defined by
\begin{align*}
    \hat{\vx} &= \argmax_{\vx \in \R^d}{p(\vx | \vy)}
    = \argmin_{\vx \in \R^d}{- \log p(\vx | \vy)} \\
    &= \argmin_{\vx \in \R^d}{\underbrace{- \log p(\vy | \vx)}_{=\fF(\vx, \vy)} \underbrace{- \log p(\vx)}_{=\fR(\vx)}}.
\end{align*}
Thus the data-fidelity $\fF$ is related to the image forward model, and the regularization $\fR$ reflects the adopted prior model on images (which can be improper).
Adding a weighting parameter $\alpha > 0$ (see Problem~\ref{eq:minimization_problem}) is equivalent to adding a temperature parameter on the prior $p$, which becomes $p^{\alpha}$.

\textbf{Data Fidelity} The forward model is supposed to have a known form
$$\vy = \mathcal{A}(\vx) + \vn,$$
with the degradation operator $\mathcal{A} : \R^d \mapsto \R^q$, and the noise $\vn \sim \nN(0, \sigma_{\vy}^2 \mI_q)$ where $\sigma_{\vy} >0$. Then $\fF(\vx, \vy) = \frac{1}{\sigma_{\vy}^2}\| \mathcal{A}(\vx) - \vy\|^2$.

\textbf{Deep Learning regularization}
With a Bayesian interpretation, the regularization $\fR = - \log p$ defines a model on the data. 
Recently, Deep Neural Networks (DNN) have proved their effectiveness in learning a realistic model from a database of observations.
The RED framework~\cite{romano2017little} uses the performance of DNN for image restoration. It consists in adopting a prior regularization $\fR$ whose gradient $\nabla \fR$ is given by a pre-trained denoiser.
This implicit relation relies on the regularization defined by
\begin{equation}\label{eq:prior_approx}
    \fR(\vx) \approx \fP_{\sigma}(\vx) := -\log p_{\sigma}(\vx),
\end{equation}
where $p_{\sigma}$ is the convolution $p \ast \nN_{\sigma}$ between $p$ and  $\nN_{\sigma} = \nN(0,\sigma^2 \mI_d)$.
The relation between denoising and regularization is  made explicit with Tweedie's formula~\citep{Efron2011}:
\begin{equation}\label{eq:tweedie}
    \nabla \fP_{\sigma}(\vx) = - \frac{1}{\sigma^2} \left( D_{\sigma}^\star(\vx) - \vx \right),
\end{equation}
where $D_{\sigma}^\star$ is the Minimum Mean Square Error (MMSE) denoiser is defined by
\begin{equation}
\label{eq:exact_mmse}
    D_{\sigma}^\star(\tx) := \eE [\vx | \tx ] = \int_{\R^d}{\vx p_{\vx|\tx}(\vx | \tx) d\vx},
\end{equation}
for $\tx = \vx + \veps \text{ with } \vx \sim p(\vx), \veps \sim \nN(0, \sigma^2 \mI_d)$.

In practice, we do not have access to the exact MMSE denoiser $D_{\sigma}^{\star}$, but only to a deep denoiser $D_{\sigma}$ that is trained to approximate the MMSE $D_{\sigma}^{\star}$. Then a gradient descent scheme (Algorithm~\ref{alg:RED}), as described by~\citet{Reehorst2019}, can be run to obtain an approximate solution of Problem~\eqref{eq:minimization_problem}.

\begin{algorithm}
\caption{RED~\citep{romano2017little}}\label{alg:RED}
\begin{algorithmic}[1]
\STATE \textbf{Param.:} $\vx_0 \in \R^d$, $\sigma > 0$, $\alpha > 0$, $\delta > 0$, $N \in \N$
\STATE \textbf{Input:} degraded image $\vy$
\STATE \textbf{Output:} restored image $\vx_{N}$
\FOR{$k = 0, 1, \dots, N-1$}
    \STATE $\vx_{k+1} \gets \vx_k - \delta \nabla \fF(\vx_k, \vy) - \frac{\alpha \delta}{\sigma^2} \left(\vx_k - D_{\sigma}(\vx_k) \right)$ 
\ENDFOR
\end{algorithmic}
\end{algorithm}

Algorithm~\ref{alg:RED} involves the computation of $D_{\sigma}(\vx_k)$, in which the denoiser is applied to an image iterate $\vx_k$ that is not necessarily noisy. As a denoiser is trained to denoise images with noise, the application of $D_\sigma$ to images  that are out of the training domain might be irrelevant. 
To bypass this issue, we propose a new regularization.

\subsection{SNORE regularization}\label{sec:average_pnp_intro}
We propose the SNORE regularization $\fR_{\sigma}$, whose gradient applies the MMSE denoiser on noisy images. This new regularization is defined by
\begin{align}\label{eq:new_regularization}
    \fR_{\sigma}(\vx) &= - \eE_{\tx \sim p_{\sigma}(\tx|\vx)}\left(  \log p_{\sigma}(\tx) \right) \\
    \nabla_{\vx} \fR_{\sigma}(\vx)
    &= - \frac{1}{\sigma^2} \left(\eE_{\tx \sim p_{\sigma}(\tx|\vx)}\left(  D_{\sigma}^{\star}(\tx) \right) - \vx \right).\label{eq:gradient_average_pnp_reg}
\end{align}

Minimizing $\fR_{\sigma}(\vx)$ is equivalent to maximizing 
$\eE_{\tx \sim p_{\sigma}(\tx|\vx)}\left(\log p_{\sigma}(\tx) \right)$. 
The last quantity is maximum if noisy versions of $\vx$ are highly probable in the noisy prior distribution, $p_{\sigma}(\vx)$. In other words: \textit{An image looks clean if its noisy versions look as noisy images}.

SNORE regularization can be seen as a relaxation of the classical PnP regularization $- \log (p \ast \nN_{\sigma})$, following the idea of~\citet{scarvelis2023closed}.
\begin{align}\label{eq:new_regularization_convolution}
    \fR_{\sigma}(\vx)= - \left(\log (p \ast \nN_{\sigma})\ast \nN_{\sigma}\right)(\vx).
\end{align}
    
In Appendix~\ref{sec:simple_cases_apnp}, we prove that $\fR_{\sigma}$ provides the same minimum than $\fP_{\sigma}$ if the prior is Gaussian. We also detail the case of Gaussian Mixture prior, with the convergence analysis of $\nabla \fP_{\sigma}$ and $\nabla \fR_{\sigma}$ to $-\nabla \log p$ when $\sigma \to 0$ and with a 1D illustration of the difference between $\nabla \fR_{\sigma}$ and $-\nabla \log p_{\sigma}$.

\paragraph{Interpretation of the SNORE regularization}
We first underline that $\fR_{\sigma}$ can be re-written as
\begin{align}
    &\fR_{\sigma}(\vx) = - \eE_{\tx \sim p_{\sigma}(\tx|\vx)}\left(  \log p_{\sigma}(\tx) \right)\nonumber \\
    &= \mathcal{KL}(p_{\sigma}(\tx|\vx)\|p_{\sigma}(\tx)) - \eE_{\tx \sim p_{\sigma}(\tx|\vx)}\left(  \log p_{\sigma}(\tx|\vx) \right)\nonumber\\
    &=\mathcal{KL}(p_{\sigma}(\tx|\vx)\|p_{\sigma}(\tx)) + C,\label{eq:regularization_as_kl}
\end{align}

where we introduced the Kullback-Leibler divergence 
$\mathcal{KL}(\mu \| \nu) := \int_{\R^d}{\log\left(\frac{d\mu}{d\nu}\right) d\mu}$ 
and  the constant 

\begin{align*}
 C= - \eE_{\tx \sim p_{\sigma}(\tx|\vx)}\left(  \log p_{\sigma}(\tx|\vx) \right) = \frac{d}{2} \left( 1 + \log \left( 2\pi\sigma^2\right)\right).
\end{align*}

Hence the potential $\fR_{\sigma}(\vx)$ has the same optimization profile than $\mathcal{KL}(p_{\sigma}(\tx|\vx)\|p_{\sigma}(\tx))$. This last quantity leads to another interpretation. Minimizing $\fR_{\sigma}(\vx)$ is equivalent to \textit{find the Gaussian mode $p_{\sigma}(\tx|\vx)$ of standard deviation~$\sigma$ that best approximates the noisy prior distribution $p_{\sigma}(\tx)$ in terms of KL divergence}.

\paragraph{Optimization algorithms}
With SNORE regularization, we solve the following optimization problem to restore an image
\begin{align}
\label{eq:opt_problem_sigma}
    \argmin_{\vx \in \R^d}{\fJ(\vx)  := \fF(\vx, \vy) + \alpha \fR_{\sigma}(\vx)}.
\end{align}

Due to the formulation of $\fR_{\sigma}$ as an expectation, we  implement\footnote{Note that a possible stochastic gradient of $\fR_{\sigma}$ can be $\frac{1}{\sigma^2} \left(\tx - D_{\sigma}(\tx) \right)$. We choose to only add noise in the denoiser to reduce the residual noise on the image.}  a stochastic gradient descent algorithm (Algorithm~\ref{alg:Average_PnP}) to solve Problem~\eqref{eq:opt_problem_sigma}.

\begin{algorithm}
\caption{SNORE}\label{alg:Average_PnP}
\begin{algorithmic}[1]
\STATE \textbf{Param.:} init. $\vx_0 \in \R^d$, $\sigma > 0$, $\alpha > 0$, $\delta > 0$, $N \in \N$
\STATE \textbf{Input:} degraded image $\vy$
\STATE \textbf{Output:} restored image $\vx_{N}$
\FOR{$k = 0, 1, \dots, N-1$}
    \STATE $\veps \gets \mathcal{N}(0, \mI_d)$
    \STATE $\tx_k \gets \vx_k + \sigma \veps$
    \STATE $\vx_{k+1} \gets \vx_k - \delta \nabla \fF(\vx_k, \vy) - \frac{\alpha \delta}{\sigma^2} \left(\vx_k - D_{\sigma}(\tx_k)\right)$ 
\ENDFOR
\end{algorithmic}
\end{algorithm}

\begin{algorithm}
\caption{Annealed SNORE}\label{alg:Annealead_SNORE}
\begin{algorithmic}[1]
\STATE \textbf{Param.:} init. $\vx_0 \in \R^d$, $\delta > 0$, annealing schedule $m \in \N$, $\sigma_0 > \dots > \sigma_{m-1} \approx 0$, $\alpha_0, \dots, \alpha_{m-1} > 0$, $N_0, \dots, N_{m-1} \in \N$
\STATE \textbf{Input:} degraded image $\vy$
\STATE \textbf{Output:} restored image $\vx_{N_{m-1}}$
\FOR{$i = 0, 1, \dots, m-1$}
    \FOR{$k = 0, 1, \dots, N_i-1$}
        \STATE $ \veps \gets \mathcal{N}(0, \mI_d)$
        \STATE $\tx_k \gets \vx_k + \sigma_i \veps$
        \STATE $\vx_{k+1} \gets \vx_k - \delta \nabla \fF(\vx_k, \vy) - \frac{\alpha_i \delta}{\sigma_i^2} \left(\vx_k - D_{\sigma_i}(\tx_k) \right)$ 
    \ENDFOR
\ENDFOR
\end{algorithmic}
\end{algorithm}

Inspired by annealed importance sampling~\cite{neal1998annealed} and the recent use of such a decreasing of $\sigma$ in diffusion model~\cite{sun2023provable}, we also develop an Annealed SNORE Algorithm (Algorithm~\ref{alg:Annealead_SNORE}). This algorithm, which proves more efficient in practice, is supported by a critical point analysis (Proposition~\ref{prop:critical_point_accumulate_values}).

\subsection{Related Works}\label{sec:related_works}
\paragraph{Other Stochastic Plug-and-Play algorithms}
In the existing literature, stochastic versions of Plug-and-Play have already been proposed. 
Most of these works intend to accelerate the computation by a stochastic mini-batch approximation on the data-fidelity~\citep{tang2020fast} or the regularization~\citep{sun2019block}. On the other hand, SNORE does not aim at accelerating PnP algorithms but it proposes a stochastic improvement of PnP by injecting noise inside the classical PnP regularization.

\citet{laumont2022maximumaposteriori} propose to run a stochastic gradient descent algorithm (PnP SGD) with the PnP regularization. Contrary to PnP SGD, SNORE injects the noise inside the denoiser only, and not in the data-fidelity term. Moreover, in SNORE, the standard deviation of the injected noise is fixed  (for fixed $\sigma$). 

Another line of works target image restoration by sampling the posterior law instead of solving Problem~\ref{eq:minimization_problem}.. This can be done within a PnP framework using the Tweedie formula (Equation~\ref{eq:tweedie}) to compute a Langevin dynamic~\citep{Laumont_2022_pnpula, renaud2023plugandplay} or a Gibbs sampling~\citep{coeurdoux2023plugandplay, Bouman2023}.

\paragraph{Link with diffusion based method}

Denoising diffusion models (DDM) are a class of generative models that can generate images by gradually transforming noise into data with deep denoising networks~\cite{ho2020denoising, song2021scorebased}.
A key feature of DDMs is that, for the adequate weighting schedule $w_t$, the log-likelihood of the generative model $p_{\theta}(\vx)$ is lower-bounded by a (negative) mixture of denoising losses at different noise levels~\citep{ho2020denoising, song2021maximum}: 
\begin{equation}
    \log p_{\theta}(\vx) 
    \geq 
    \underbrace{
    -\mathbb{E}_{t, \tx \sim p_{\sigma}(\tx|\vx)} \left[w_t \|\vx - D_{\sigma_t}(\tx)\|^2 \right]
    }_{\mathcal{L}\left( \vx\right)} 
    \label{eq:elbo_ddm}
\end{equation}
In order to use DDMs for regularizing inverse problems, several works have proposed to replace the intractable log-likelihood by its lower-bound~\eqref{eq:elbo_ddm}~\cite{poole2022dreamfusion, wang2023score, feng2023efficient, mardani2023variational}. In particular, the gradient of the lower-bound is:
\begin{equation}
    \nabla_{\vx}\mathcal{L}\left( \vx\right) = \mathbb{E}_{t, \tx \sim p_{\sigma}(\tx|\vx)} \left[\frac{w_t}{2} J_{D_{\sigma_t}}^ \top\left(\vx - D_{\sigma_t}(\tx)\right)^2 \right] \label{eq:elbo_ddm_grad}
\end{equation}
where $J_{D_{\sigma_t}}$ is the Jacobian matrix of the denoiser with respect to the input. Hence, the gradient of DDMs lower-bound~\eqref{eq:elbo_ddm_grad} has a similar formulation than the gradient of our regularization function~\eqref{eq:gradient_average_pnp_reg}, with the difference that it includes the Jacobian of the denoiser, and it is averaged over multiple noise levels $\sigma_t$. It has been found  that removing the Jacobian in~\eqref{eq:elbo_ddm_grad} yields better results in practice~\cite{poole2022dreamfusion}. Several works propose theoretical justifications to omit the Jacobian matrix~\citep{wang2023score, mardani2023variational}, by assuming that the denoising network provides the exact score of some prior function. Although we rely on a similar assumption, our theoretical analysis in Section~\ref{sec:biased_cvg} also covers the case of an imperfect denoiser.

A different approach for solving inverse problem with a DDM prior is to guide the generative process of an unconditional DDM to generate images consistent with an observation $\vy$~\cite{song2022solving, kawar2022denoising, chung2022improving, chung2023diffusion, song2023pseudoinverseguided, luther2023ddgm, Zhu_2023_CVPR}. Notice that~\citet{luther2023ddgm} proposed an algorithm similar to Algorithm~\ref{alg:Annealead_SNORE} but do not provide strong theoretical motivation or analysis.
Such reverse diffusion processes involve a gradual decrease of the strength of the denoising network, analogous to our annealing procedure. 
However, those approaches aim at sampling from the posterior distribution of the inverse problem, whereas we adopt a (stochastic) optimization perspective.
Despite their impressive practical results, DDM guided rely on heuristics to approximate the intractable likelihood model on noisy data. The impact of the approximation error on the distribution of generated samples remains to be quantified.

\section{Convergence Analysis}\label{sec:convergence_analysis}

In this section, we provide a theoretical analysis of our regularization SNORE and a convergence analysis of the associated algorithm.
Problem~\eqref{eq:opt_problem_sigma} is non-convex due to our  regularization. Hence, in the best-case scenario, one can only expect a convergence of the algorithm towards a critical point of the target functional $\fJ$, defined in relation~\eqref{eq:opt_problem_sigma}. Note that all global and local minima of $\fJ$  are critical points of~$\fJ$.
Based on the existing literature~\citep{tadic2017asymptotic, laumont2022maximumaposteriori}, we analyze our stochastic gradient descent in this challenging non-convex context.

We first show (Section~\ref{sec:critical_point}) that  our regularization $\fR_{\sigma}$ is a relevant approximation of the ideal regularization $-\log p$. Then we analyze  the asymptotic behavior of the critical points of Problem~\eqref{eq:opt_problem_sigma} which motivates the annealing Algorithm~\ref{alg:Annealead_SNORE}. Next we  prove, at fixed $\sigma$, the convergence of Algorithm~\ref{alg:Average_PnP} to a critical point of Problem~\eqref{eq:opt_problem_sigma} in the case of using the exact MMSE denoiser $D_{\sigma}^\star$ (Section~\ref{sec:unbiased_cvg}). 
In Section~\ref{sec:biased_cvg}, we quantify the error of Algorithm~\ref{alg:Average_PnP} with an inexact denoiser $D_{\sigma}$. Proofs can be found in Appendix~\ref{sec:proof_convergence_result}. 
A discussion on technical assumptions is given in Appendix~\ref{sec:discussion_assumptions}.

\subsection{Asymptotics of critical points when $\sigma \to 0$}\label{sec:critical_point}
Inspired by~\citet[Proposition 1]{laumont2022maximumaposteriori}, we first demonstrate that $\nabla \fR_{\sigma}$ converges uniformly to $-\nabla \log p$ on every compact when $\sigma \rightarrow 0$. This type of result requires technical assumptions, such as $\nabla \log p$ to be defined everywhere and smooth. 

\begin{assumption}\label{ass:prior_regularity}
    \textbf{(a)} The prior distribution $p \in \mathrm{C}^1(\R^d, ]0, +\infty[)$ with $\|p\|_{\infty} + \|\nabla p\|_{\infty} < + \infty$. \textbf{(b)} The prior score is sub-polynomial, there exist $A \in \R^+$ and $q \in \N$ such that $\forall \vx \in \R^d$, $\|\nabla \log p(\vx)\| \le A (1 + \|\vx\|^q)$.
\end{assumption}
Assumption~\ref{ass:prior_regularity}\textbf{(a)} ensures that the prior is smooth, non-degenerate and Lipschitz.

\begin{assumption}\label{ass:prior_score_approx}
    The noisy prior score is sub-polynomial, there exist $B\in \R^+$, $\beta \in \R$ and $r \in \N$ such that $\forall \vx \in \R^d$, $\|\nabla \log p_{\sigma}(\vx)\| \le B \sigma^{\beta} (1 + \|\vx\|^r)$.
\end{assumption}
Under the so-called manifold hypothesis~\cite{debortoli2023convergence} (see Assumption~\ref{ass:manifold}), Assumption~\ref{ass:prior_score_approx} is verified with $r = 1$ and $\beta = -2$. Assumption~\ref{ass:prior_score_approx} has also been proved with $r = 1$ and $\beta = 0$ in \citep{debortoli2023diffusion}, under the Assumption~\ref{ass:prior_regularity}\textbf{(b)} with $q = 1$ and the fact that there exist $m_0 > 0$ and $d_0 \ge 0$ such that $\forall \vx \in \R^d$, $\langle \nabla \log p(\vx), \vx  \rangle \le - m_0 \|\vx\|^2 + d_0 \|\vx\|$.

\begin{proposition}\label{prop:score_approx_apnp}
     Under Assumptions~\ref{ass:prior_regularity}-\ref{ass:prior_score_approx}, for $\cK$ a compact of $ \R^d$, $\nabla \fR_{\sigma}$ converges uniformly to $-\nabla \log p$ on $\cK$, 
    \begin{equation*}
        \lim_{\sigma \to 0} \sup_{\cK} \|\nabla \fR_{\sigma} + \nabla \log p\| = 0.
    \end{equation*}
\end{proposition}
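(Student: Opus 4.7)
The plan is to rewrite $\nabla \fR_\sigma$ as an expectation of the smoothed score, invoke the known compact convergence $\nabla \log p_\sigma \to \nabla \log p$, and finally handle the extra layer of Gaussian averaging by a near/far split whose tail is killed by super-exponential Gaussian concentration. First, inserting Tweedie's formula~\eqref{eq:tweedie} into~\eqref{eq:gradient_average_pnp_reg} (so that $D_\sigma^\star(\tx) = \tx + \sigma^2 \nabla \log p_\sigma(\tx)$ and the $\eE[\tx - \vx]$ piece vanishes) rewrites
\[
\nabla \fR_\sigma(\vx) \;=\; -\eE_{\veps \sim \nN(0,\mI_d)}\!\bigl[\nabla \log p_\sigma(\vx + \sigma\veps)\bigr],
\]
so the task becomes bounding $\eE_\veps[\nabla \log p(\vx) - \nabla \log p_\sigma(\vx+\sigma\veps)]$ uniformly for $\vx \in \cK$.

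As a preliminary I would import the uniform-on-compacts convergence $\nabla \log p_\sigma \to \nabla \log p$ from \citet[Prop.~1]{laumont2022maximumaposteriori}, which under Assumption~\ref{ass:prior_regularity}(a) follows from uniform convergence of the convolutions $p_\sigma = p \ast \nN_\sigma$ and $\nabla p_\sigma = \nabla p \ast \nN_\sigma$ on compacts together with the strict positivity of $p$. I would then split the Gaussian integral at $\{\|\veps\|\le 1/\sigma\}$ and its complement. On the \emph{near} event, $\vx + \sigma\veps$ lies in the fixed compact $\cK_1 := \cK + \bar B(0,1)$, and the triangle inequality
\[
\|\nabla \log p(\vx) - \nabla \log p_\sigma(\vx+\sigma\veps)\| \;\le\; \|\nabla \log p(\vx) - \nabla \log p(\vx+\sigma\veps)\| + \sup_{\vy\in \cK_1}\|\nabla \log p(\vy) - \nabla \log p_\sigma(\vy)\|
\]
reduces the near contribution to (i) the standard convolution fact $\nabla \log p \ast \nN_\sigma \to \nabla \log p$ uniformly on $\cK$ (for the $\veps$-expectation of the first summand) and (ii) the preliminary applied to $\cK_1$; both vanish uniformly in $\vx \in \cK$.

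The \emph{main obstacle} is the tail event $\{\|\veps\|>1/\sigma\}$. There, Assumptions~\ref{ass:prior_regularity}(b) and~\ref{ass:prior_score_approx} give the pointwise bound
\[
\|\nabla \log p(\vx)\| + \|\nabla \log p_\sigma(\vx+\sigma\veps)\| \;\le\; A(1+\|\vx\|^q) + B\sigma^\beta\bigl(1+(\|\vx\|+\sigma\|\veps\|)^r\bigr),
\]
which for $\vx \in \cK$ is a polynomial in $\|\veps\|$ multiplied by $\sigma^\beta$. The delicate case is $\beta < 0$ (e.g.\ $\beta = -2$ under the manifold hypothesis), where $\sigma^\beta$ diverges; this is nevertheless crushed by the Gaussian tail estimate $\mathbb{P}(\|\veps\|>1/\sigma) \le C\exp(-1/(2\sigma^2))$. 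Cauchy--Schwarz then shows that every quantity $\sigma^\beta\,\eE[\mathbf{1}_{\|\veps\|>1/\sigma}\|\veps\|^k]$ is $o(\sigma^n)$ for all $n \in \N$, so the far-event contribution vanishes super-polynomially in $\sigma$. Summing the near and far contributions and taking the supremum over $\vx \in \cK$ yields the announced uniform convergence.
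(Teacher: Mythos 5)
Your proposal is correct and follows essentially the same route as the paper's proof: rewrite $\nabla\fR_\sigma(\vx)$ as $-\eE_\zeta[\nabla\log p_\sigma(\vx+\zeta)]$, apply the triangle inequality against $\nabla\log p(\vx+\zeta)$ and $\nabla\log p(\vx)$, use the uniform convergence of $\nabla\log p_\sigma$ on the enlarged compact $\cK+\bar{\mathcal{B}}(0,1)$ together with uniform continuity of $\nabla\log p$ there, and kill the far Gaussian tail — where the sub-polynomial bounds of Assumptions~\ref{ass:prior_regularity}(b) and~\ref{ass:prior_score_approx} apply — by the super-exponential concentration that dominates $\sigma^\beta$ times any polynomial. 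The only cosmetic difference is that you control the tail via Cauchy--Schwarz while the paper absorbs it through the pointwise bound $\nN(\zeta;0,\sigma^2\mI_d)\le 2^{d/2}\exp(-\mu^2/(2\sigma^2))\,\nN(\zeta;0,2\sigma^2\mI_d)$; both yield the same conclusion.
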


Proposition~\ref{prop:score_approx_apnp} proves that our score $\nabla \fR_{\sigma}$ is close to the ideal score $-\nabla \log p$ when $\sigma \to 0$.
With this uniform approximation result, we are now able to study the behavior of the critical points of our optimization problem when $\sigma \to 0$. 
For $\cK$ a compact of $\R^d$, we define $\mathbf{S}_{\sigma, \cK} = \{\vx \in \cK | \nabla \fF(\vx, \vy) + \alpha \nabla \fR_{\sigma}(\vx) = 0\}$, the set of critical points of Problem~\eqref{eq:opt_problem_sigma} in $\cK$. In order to study the behavior of the critical points set $\sS_{\sigma, \cK}$ when $\sigma \to 0$, we define below the notion of limit for sets $\sS_{\sigma, \cK}$ when $\sigma \to 0$. To do so, we first introduce cluster points of sets.
\begin{definition}
    For a sequence of sets $(\sS_k)_{k \in \N} \in \left(\R^d\right)^{\N}$, $\vz$ is called a cluster point of these sets 
    if any neighborhood of $\vz$ is visited infinitely often by $(\sS_k)$, 
    i.e. $\forall \epsilon > 0$, $\forall k_0 \in \N$, there exist $k \ge k_0$, and $\vx_k \in \sS_k$ such that $\|\vx_k - \vz\| \le \epsilon$.
\end{definition}
We now apply this definition of cluster points of sets for a decreasing sequence of $\sigma > 0$.
For $\mathbf{E} = \{(\sigma_n)_{n\in \N} \in (\R)^{\N} | \forall n \in \N, \sigma_n > 0, \sigma_n \text{ decreases to }0\}$, and for $\boldsymbol{\sigma} = (\sigma_n)_{n\in \N} \in \mathbf{E}$, we define the cluster points of the sequence of set $(\mathbf{S}_{\sigma_n, \cK})_{n \in \N}$ by $\mathbf{S}_{\boldsymbol{\sigma}, \cK} = \{ \vx \in \cK | \forall \epsilon> 0, \forall m \in \N, \exists n \ge m, \vz_n \in \mathbf{S}_{\sigma_n, \cK}, \|\vx - \vz_n\| \le \epsilon\}$. 
Finally, we can define a limit of sets $\sS_{\sigma, \cK}$ for a continuous $\sigma \to 0$ with $\mathbf{S}_{\cK} = \cup_{\boldsymbol{\sigma} \in \mathbf{E}}{\mathbf{S}_{\boldsymbol{\sigma}, \cK}}$.
Our target points are the critical points of Problem~\eqref{eq:minimization_problem}, $\mathbf{S}^{\star}_{\cK} = \{\vx \in \cK | \nabla \fF(\vx) + \alpha \nabla \fR(\vx) = 0\}$ where $\fR = - \log p$. The following proposition finally establishes that the limit of set $\sS_{\sigma, \cK}$ (in the sense of cluster point) is included in the targeted points.

\begin{proposition}\label{prop:critical_point_accumulate_values}
    Under Assumptions~\ref{ass:prior_regularity}-\ref{ass:prior_score_approx}, for $\cK$ a compact subset of $\R^d$, we have
    \begin{equation*}
        \mathbf{S}_{\cK} \subseteq \mathbf{S}^{\star}_{\cK}.
    \end{equation*}
\end{proposition}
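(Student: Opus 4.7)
My plan is to pick an arbitrary $\vx \in \mathbf{S}_{\cK}$ and show, by passing to the limit in the critical-point equation along a well-chosen subsequence, that $\nabla \fF(\vx,\vy) - \alpha \nabla \log p(\vx) = 0$, which is exactly the defining condition of $\mathbf{S}^{\star}_{\cK}$.

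First, I would unpack the cluster-point definition to produce a convergent subsequence. Since $\vx \in \mathbf{S}_{\cK}$, there exists $\boldsymbol{\sigma} = (\sigma_n)_{n \in \N} \in \mathbf{E}$ with $\vx \in \mathbf{S}_{\boldsymbol{\sigma}, \cK}$. Applying the defining property with $\epsilon = 1/k$ and $m = n_{k-1}+1$ (setting $n_0 = 0$), a diagonal extraction produces indices $n_k \nearrow \infty$ and points $\vz_{n_k} \in \mathbf{S}_{\sigma_{n_k}, \cK}$ satisfying $\|\vx - \vz_{n_k}\| \le 1/k$. Hence $\vz_{n_k} \to \vx$, and since $(\sigma_n)$ decreases to $0$, also $\sigma_{n_k} \to 0$. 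By definition of $\mathbf{S}_{\sigma_{n_k}, \cK}$, each $\vz_{n_k}$ satisfies
\[
    \nabla \fF(\vz_{n_k}, \vy) + \alpha\, \nabla \fR_{\sigma_{n_k}}(\vz_{n_k}) = 0.
\]

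Next, I would pass to the limit $k \to \infty$ term by term. Continuity of $\nabla \fF(\cdot, \vy)$ (inherited from the smooth data-fidelity model, e.g., the squared residual when $\mathcal{A}$ is smooth) yields $\nabla \fF(\vz_{n_k}, \vy) \to \nabla \fF(\vx, \vy)$. For the regularization gradient, I would use the triangle-style split
\[
    \bigl\| \nabla \fR_{\sigma_{n_k}}(\vz_{n_k}) + \nabla \log p(\vx) \bigr\| \le \sup_{\vz \in \cK}\bigl\|\nabla \fR_{\sigma_{n_k}}(\vz) + \nabla \log p(\vz)\bigr\| + \bigl\|\nabla \log p(\vz_{n_k}) - \nabla \log p(\vx)\bigr\|.
\]
The first term tends to $0$ by Proposition~\ref{prop:score_approx_apnp} (uniform convergence on the compact $\cK$). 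The second tends to $0$ by continuity of $\nabla \log p$, which follows from Assumption~\ref{ass:prior_regularity}(a): since $p \in \mathrm{C}^1(\R^d, (0, +\infty))$, the map $\vx \mapsto \nabla p(\vx)/p(\vx)$ is continuous on $\R^d$. Therefore $\nabla \fR_{\sigma_{n_k}}(\vz_{n_k}) \to -\nabla \log p(\vx)$, and sending $k \to \infty$ in the critical-point equation gives $\nabla \fF(\vx,\vy) - \alpha \nabla \log p(\vx) = 0$, so $\vx \in \mathbf{S}^{\star}_{\cK}$.

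The only delicate step is the first one: the cluster-point definition involves nested quantifiers ($\forall \epsilon, \forall m, \exists n \ge m$) across both the index $n$ and the accuracy $\epsilon$, so one must be careful to select a \emph{single} diagonal subsequence along which $\vz_{n_k} \to \vx$ and $\sigma_{n_k} \to 0$ hold simultaneously; the remaining passage to the limit is then a routine application of the uniform convergence from Proposition~\ref{prop:score_approx_apnp} together with continuity of $\nabla \fF(\cdot, \vy)$ and of $\nabla \log p$. No additional assumption beyond~\ref{ass:prior_regularity}--\ref{ass:prior_score_approx} is required, since those already furnish everything needed for the limiting arguments.
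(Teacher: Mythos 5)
Your proof is correct and follows essentially the same route as the paper: extract a convergent subsequence $\vz_{n_k}\to\vx$ from the cluster-point definition, then pass to the limit in the critical-point equation using the uniform convergence of $\nabla\fR_\sigma$ to $-\nabla\log p$ on $\cK$ (Proposition~\ref{prop:score_approx_apnp}) together with continuity of $\nabla\fF(\cdot,\vy)$ and $\nabla\log p$. Your triangle-inequality split is just the explicit justification of the "uniform convergence plus moving point" step that the paper states more tersely.
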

Proposition~\ref{prop:critical_point_accumulate_values} means that a sequence of computed critical points with $\sigma >0$ has all its cluster points in $\mathbf{S}^{\star}_{\cK}$, the set of critical points of the ideal optimization problem~\eqref{eq:minimization_problem}. This result suggests that the annealed algorithm may converge to a point of $\mathbf{S}^{\star}_{\cK}$. In fact annealing~\cite{neal1998annealed} consists in successively approximating critical points of $\mathbf{S}_{\sigma_i, \cK}$ for a decreasing sequence $\sigma_0 > \dots > \sigma_{m-1} \approx 0$.
Proposition~\ref{prop:critical_point_accumulate_values} thus motivates Algorithm~\ref{alg:Annealead_SNORE}, which will be proved efficient in practice.

\subsection{Unbiased algorithm analysis}\label{sec:unbiased_cvg}
In this section, we prove the convergence of the SNORE Algorithm~\ref{alg:Average_PnP} run with the exact MMSE denoiser~$D_{\sigma}^\star$. With this denoiser, an iteration of the algorithm is computed by
\begin{align}\label{eq:theoretical_process}
    \vx_{k+1} = \vx_k - \delta_k \nabla \fF(\vx_k, \vy) - \alpha \delta_k \nabla \log p_{\sigma}(\vx_k + \zeta_k),
\end{align}
with $\zeta_k \sim \mathcal{N}(0,\sigma^2\mI_d)$ and $(\delta_k)_{k \in \N} \in {\left(\R^+\right)}^{\N}$ the decreasing sequence of step-sizes. Algorithm~\ref{alg:Average_PnP} is a stochastic descent algorithm that solves Problem~\eqref{eq:opt_problem_sigma}.

One can note that the stochastic gradient estimation is unbiased. Indeed, by defining
\begin{align*}
    f(\vx, \zeta) = \nabla \fF(\vx, \vy) + \alpha \nabla \log p_{\sigma}(\vx + \zeta),
\end{align*}
we verify that
\begin{align*}
    \eE_{\zeta \sim \mathcal{N}(0, \sigma^2 \mI_d)}{\left(f(\vx, \zeta)\right)} = \nabla \fF(\vx, \vy) + \alpha \nabla \fR_{\sigma}(\vx).
\end{align*}

We make our convergence analysis based on previous studies on stochastic gradient algorithm~\citep[Corollary 6.7]{Benaim1999}, ~\citep[Section II-D]{Metivier1984} or~\cite{tadic2017asymptotic}. 
An assumption on the step-size decrease is required to ensure convergence.
\begin{assumption}
    \label{ass:step_size_decreas}
    \itshape The step-size decreases to zero but not too fast: $\sum_{k = 0}^{+\infty}{\delta_k} = + \infty$ and $\sum_{k = 0}^{+\infty}{\delta_k^2} < + \infty$.
\end{assumption}
Assumption~\ref{ass:step_size_decreas} guides the choice of the step-size rule to ensure convergence, for instance $\delta_k = \frac{\delta}{k^{a}}$ with $a \in ]\frac{1}{2},1]$.

\begin{assumption}
    \label{ass:data_fidelity_reg}
    \itshape The data-fidelity term $\mathcal{F}_y : \vx \in \R^d \mapsto \mathcal{F}(\vx, \vy)$ is $\mathcal{C}^{\infty}$. 
\end{assumption}
This assumption is typically verified for a data-fidelity term $\fF(\vx, \vy) = \frac{1}{2 \sigma_{\vy}^2} \|\vy - \mathcal{A} \vx \|^2$ associated to a linear inverse problem with additive white Gaussian noise.

We define the set of realizations where the sequence is bounded in the compact $\cK$ by
\begin{equation*}
    \Lambda_{\cK} = \bigcap_{k \in \N}{\{\vx_k \in \cK\}},
\end{equation*}
and the distance of a point to a set by $d(\vx, \mathbf{S}) = \inf_{\vy \in \mathbf{S}}{\|\vx - \vy \|}$, with $\vx \in \R^d$ and $\mathbf{S} \subset \R^d$.
The fact that we restrict to realizations in $\Lambda_{\cK}$ will be referred to as the "boundedness assumption".

\begin{proposition}\label{prop:convergence_unbiased}
    Under Assumptions~\ref{ass:prior_score_approx}-\ref{ass:data_fidelity_reg}, almost surely on $\Lambda_{\cK}$, we have
    \begin{align*}
        &\lim_{k \to + \infty}{d(\vx_k, \mathbf{S_{\sigma, \cK}})} = 0,\\
        &\lim_{k \to + \infty}{\|\nabla \fJ(\vx_k)\|} = 0,
    \end{align*}
    and $(\fJ(\vx_k))_{k \in \N}$ converges to a value of $\fJ(\mathbf{S_{\sigma, \cK}})$.
\end{proposition}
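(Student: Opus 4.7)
The plan is to view Algorithm~\ref{alg:Average_PnP} run with the exact MMSE denoiser as a classical stochastic gradient scheme on $\fJ$ and then apply a non-convex stochastic approximation theorem of Benaim--Tadic type, along the lines of \citet{laumont2022maximumaposteriori}. First, using Tweedie's identity~\eqref{eq:tweedie}, the regularization step $-\frac{\alpha\delta_k}{\sigma^2}(\vx_k - D_\sigma^{\star}(\tx_k))$ decomposes as $\alpha\delta_k\,\nabla\log p_\sigma(\tx_k) + \frac{\alpha\delta_k}{\sigma^2}\zeta_k$, and taking conditional expectation in $\zeta_k \sim \nN(0,\sigma^2\mI_d)$ together with~\eqref{eq:gradient_average_pnp_reg} shows that the one-step drift equals $-\delta_k\nabla\fJ(\vx_k)$. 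The iteration therefore rewrites as
\begin{equation*}
    \vx_{k+1} = \vx_k - \delta_k\bigl(\nabla\fJ(\vx_k) + \eta_k\bigr),
\end{equation*}
with $(\eta_k)$ a conditionally zero-mean martingale increment adapted to the natural filtration of $(\zeta_j)_{j\le k}$.

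Next I would verify the hypotheses of the stochastic approximation theorem. For regularity of the mean field, $p_\sigma = p\ast\nN_\sigma$ is $\mathcal{C}^\infty$ and strictly positive under Assumption~\ref{ass:prior_regularity}, so $\fR_\sigma$ is $\mathcal{C}^\infty$; combined with Assumption~\ref{ass:data_fidelity_reg}, this makes $\nabla\fJ$ locally Lipschitz and the mean ODE $\dot{\vx}(t) = -\nabla\fJ(\vx(t))$ well-posed. For the moment bound on the noise, Assumption~\ref{ass:prior_score_approx} gives $\|\nabla\log p_\sigma(\vx+\zeta_k)\| \le B\sigma^{\beta}(1+\|\vx+\zeta_k\|^r)$, and integrating against the Gaussian law of $\zeta_k$ yields a uniform bound $\sup_{\vx\in\cK}\eE[\|\eta_k\|^2\mid\vx_k=\vx]\le M(\cK,\sigma)$. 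The step-size conditions $\sum_k\delta_k=\infty$ and $\sum_k\delta_k^2<\infty$ are Assumption~\ref{ass:step_size_decreas}, and boundedness of the trajectory is exactly the restriction to $\Lambda_\cK$.

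Under these ingredients the theorem asserts that, almost surely on $\Lambda_\cK$, the piecewise-constant interpolation of $(\vx_k)$ is an asymptotic pseudo-trajectory of the mean ODE, so its limit set $L$ is an internally chain-recurrent subset of $-\nabla\fJ$. Since $\fJ$ is a strict Lyapunov function for this gradient flow, any internally chain-recurrent set is contained in the equilibrium set, hence $L\subseteq \mathbf{S}_{\sigma,\cK}$. This yields $\lim_k d(\vx_k,\mathbf{S}_{\sigma,\cK})=0$, and by continuity of $\nabla\fJ$ on $\cK$ also $\lim_k \|\nabla\fJ(\vx_k)\|=0$. To conclude that $\fJ(\vx_k)$ converges to a value in $\fJ(\mathbf{S}_{\sigma,\cK})$, I would apply Sard's theorem to the $\mathcal{C}^\infty$ function $\fJ$, which gives that $\fJ(\mathbf{S}_{\sigma,\cK})$ has Lebesgue measure zero in $\R$; then $\fJ(L)$ is a connected subset of a measure-zero set, hence a single point.

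The main obstacle will be the verification of the moment and regularity conditions in the precise form required by the cited theorem, in particular controlling $\eE[\|\eta_k\|^2\mid\vx_k]$ uniformly on $\cK$: the score $\nabla\log p_\sigma$ only has polynomial growth under Assumption~\ref{ass:prior_score_approx}, so I must use the Gaussian moments of $\zeta_k$ to integrate this polynomial and obtain a uniform bound on the compact. A secondary subtlety is the identification of the limit of $(\fJ(\vx_k))$: monotonicity of $\fJ$ fails along the stochastic trajectory, so the argument must go through Sard's theorem combined with connectedness of the limit set rather than a Lyapunov-descent argument along the iterates.
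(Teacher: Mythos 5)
Your proposal is correct and follows essentially the same route as the paper: the iteration is recast as an unbiased stochastic gradient step on $\fJ$, the conditional second moment of the martingale noise is bounded uniformly on $\cK$ by integrating the polynomial growth of $\nabla\log p_\sigma$ (Assumption~\ref{ass:prior_score_approx}) against the Gaussian law of $\zeta_k$, and smoothness of $\fJ$ plus the step-size conditions let one invoke a non-convex stochastic approximation theorem. The only difference is presentational: the paper verifies the hypotheses of \citet[Theorem 2.1(ii)]{tadic2017asymptotic} directly (using Doob's maximal inequality for the condition on $\sup_l\|\sum_{k=n}^l\delta_k\xi_k\|$), whereas you unpack that theorem into Benaim's asymptotic-pseudo-trajectory/chain-recurrence argument together with Sard's theorem, which is the machinery underlying the cited result.
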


Proposition~\ref{prop:convergence_unbiased} proves that Algorithm~\ref{alg:Average_PnP} run with the exact MMSE denoiser~\eqref{eq:exact_mmse} converges to the set of critical points of Problem~\eqref{eq:opt_problem_sigma}. This is a weak convergence in the sense that this does not give \emph{a priori} that there is $\vx^\star \in \mathbf{S_{\sigma, \cK}}$ such that $\lim_{k \to + \infty}{\|\vx_k-\vx^\star\|} = 0$. 
Assuming that the sequence is bounded in $\Lambda_{\cK}$ is standard in the stochastic gradient descent analysis~\citep{benaim2006dynamics, castera2021inertial}. We discuss this assumption in Appendix~\ref{sec:boundness_sequence}.

In the previous result, we do not assume that the prior $p$ is smooth but we make Assumption~\ref{ass:prior_score_approx} of a subpolynomial noisy score. This assumption is difficult to verify for a general prior distribution but can be verified in the case of the so-called manifold hypothesis~\cite{debortoli2023convergence}.

\begin{assumption}[Manifold hypothesis]\label{ass:manifold}
    The prior $p$ is supported on a compact set $\mathcal{M}$.
\end{assumption}

Assumption~\ref{ass:manifold} is typically true for an image distribution with bounded pixel values.

\begin{proposition}\label{prop:convergence_manifold_hypothesis}
    Under Assumptions~\ref{ass:step_size_decreas}-\ref{ass:manifold}, almost surely on $\Lambda_{\cK}$, we have
    \begin{align*}
        &\lim_{k \to + \infty}{d(\vx_k, \mathbf{S_{\sigma, \cK}})} = 0,\\
        &\lim_{k \to + \infty}{\|\nabla \fJ(\vx_k)\|} = 0,
    \end{align*}
    and $(\fJ(\vx_k))_{k \in \N}$ converges to a value of $\fJ(\mathbf{S_{\sigma, \cK}})$.
\end{proposition}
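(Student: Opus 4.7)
The plan is to derive Proposition~\ref{prop:convergence_manifold_hypothesis} as a direct corollary of Proposition~\ref{prop:convergence_unbiased}. Since Assumptions~\ref{ass:step_size_decreas} and~\ref{ass:data_fidelity_reg} appear in both hypothesis sets, and the theoretical dynamics~\eqref{eq:theoretical_process} underlying the two statements are identical, it suffices to verify that the manifold hypothesis (Assumption~\ref{ass:manifold}) implies the sub-polynomial noisy-score control of Assumption~\ref{ass:prior_score_approx}. Once that bound is in hand, Proposition~\ref{prop:convergence_unbiased} applies verbatim and yields the three convergence claims.

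To verify Assumption~\ref{ass:prior_score_approx}, I would start from Tweedie's formula~\eqref{eq:tweedie}, which rewrites the noisy score as $\nabla \log p_\sigma(\vx) = \sigma^{-2}\bigl(D_\sigma^\star(\vx) - \vx\bigr)$. Under Assumption~\ref{ass:manifold} the underlying random variable $\vx \sim p$ is almost surely in the compact set $\mathcal{M}$, so for every $\tx \in \R^d$ the MMSE denoiser $D_\sigma^\star(\tx) = \mathbb{E}[\vx \mid \tx]$ is a conditional expectation of an $\mathcal{M}$-valued random variable, hence belongs to the convex hull $\mathrm{conv}(\mathcal{M})$. In finite dimension the convex hull of a compact set is compact, so $R := \sup_{\vz \in \mathrm{conv}(\mathcal{M})} \|\vz\| < +\infty$ and $\|D_\sigma^\star(\tx)\| \le R$ uniformly in $\tx$ and $\sigma$. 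The triangle inequality then gives
\begin{equation*}
    \|\nabla \log p_\sigma(\vx)\| \le \sigma^{-2}\bigl(\|D_\sigma^\star(\vx)\| + \|\vx\|\bigr) \le (R+1)\,\sigma^{-2}(1 + \|\vx\|),
\end{equation*}
which is exactly Assumption~\ref{ass:prior_score_approx} with $B = R+1$, $\beta = -2$ and $r = 1$, in agreement with the remark made just after Assumption~\ref{ass:prior_score_approx}. Invoking Proposition~\ref{prop:convergence_unbiased} then concludes the argument.

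The main obstacle is essentially conceptual rather than technical: one must recognise that the conditional expectation $D_\sigma^\star$ takes values in the convex hull of $\mathrm{supp}\, p$ and that this hull is bounded, since this is the one step that genuinely exploits the manifold hypothesis. Beyond this observation, the proof is a bookkeeping reduction that inherits all the heavy stochastic-approximation machinery from Proposition~\ref{prop:convergence_unbiased}. Notably, no extra regularity on $p$ itself is required, because the smoothness that matters for the unbiased convergence result is that of $p_\sigma = p \ast \nN_\sigma$, which is automatically $\mathcal{C}^\infty$ as a convolution with a Gaussian and whose gradient is controlled by the bound derived above.
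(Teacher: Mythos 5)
Your proof is correct and follows essentially the same route as the paper: reduce to Proposition~\ref{prop:convergence_unbiased} by checking that Assumption~\ref{ass:manifold} implies Assumption~\ref{ass:prior_score_approx} with $\beta=-2$ and $r=1$. The paper verifies this by bounding $\|\vx-\vy\|\le\|\vx\|+R$ inside the integral defining $\nabla p_\sigma/p_\sigma$, which is the same computation as your Tweedie-formula argument that $D_\sigma^\star(\vx)=\eE[\vx\mid\tx]$ lies in the (compact) convex hull of $\mathcal{M}$, so both yield the identical bound $(\|\vx\|+R)/\sigma^2$.
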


Proposition~\ref{prop:convergence_manifold_hypothesis} establishes the convergence of the SNORE algorithm to the critical points of Problem~\eqref{eq:opt_problem_sigma}, under the three mild Assumptions~\ref{ass:step_size_decreas}-\ref{ass:manifold}.

\subsection{Biased algorithm analysis}\label{sec:biased_cvg}
We now quantify the error of the SNORE algorithm (Algorithm~\ref{alg:Average_PnP}) run with an inexact MMSE denoiser $D_{\sigma} \approx D_{\sigma}^\star$. 
Such a study is crucial as the algorithm is run in practice with a learned denoiser $D_{\sigma}$ which is not exact.
With this denoiser, an iteration of Algorithm~\ref{alg:Average_PnP} is computed by
\begin{equation}\label{eq:practical_process}
    \vx_{k+1} = \vx_k - \delta_k \nabla \fF(\vx_k, \vy) - \frac{\alpha \delta_k}{\sigma^2} \left(\vx_k - D_{\sigma}(\vx_k + \zeta_k) \right).
\end{equation}
It can be rewritten as
\begin{equation}
    \vx_{k+1} = \vx_k - \delta_k (\nabla \fJ(x_k) + \xi_k),
\end{equation}
where, by using Equation~\eqref{eq:gradient_average_pnp_reg}, the gradient perturbation writes as
\begin{align*}
    \xi_k &= \nabla \fF(\vx_k, \vy) + \frac{\alpha}{\sigma^2} \left( \vx_k - D_{\sigma}(\vx_k + \zeta_k) \right) - \nabla \fJ(x_k) \\
    &= \frac{\alpha}{\sigma^2}\left(D_{\sigma}(\vx_k + \zeta_k) - D_\sigma^*(x_k) \right).
\end{align*}
This stochastic shift $\xi_k$ is in general biased, i.e. ${\eE(\xi_k) \neq 0}$.

\begin{assumption}\label{ass:denoiser_approx}
    The learned denoiser $D_{\sigma}$ is $\mathcal{C}^{\infty}$ and is a bounded approximation of the exact MMSE denoiser $D_{\sigma}^\star$,  $\forall R > 0$, there exists $M(R) > 0$, such that $\forall \vx \in \mathcal{B}(0,R)$, $\|D_{\sigma}(\vx) - D_{\sigma}^{\star}(\vx)\| \le M(R)$.
\end{assumption}

Assumption~\ref{ass:denoiser_approx} is also made in~\cite{laumont2022maximumaposteriori}, and it can be ensured if the denoiser is learned with a specific loss \citep{Laumont_2022_pnpula}. Moreover, if the activation functions of the denoiser are $\mathcal{C}^{\infty}$ (in our case ELU), then the denoiser is $\mathcal{C}^{\infty}$.

\begin{assumption}\label{ass:denoiser_sub_linear}
    The exact MMSE denoiser and the learned denoiser are sublinear, there exists $C \ge 0$ such that $\forall \vx \in \R^d$, $\|D^{\star}_{\sigma}(\vx)\| \le \|\vx\| + C \sigma$ and $\|D_{\sigma}(\vx)\| \le \|\vx\| + C \sigma$.
\end{assumption}
Assumption~\ref{ass:denoiser_sub_linear} is the stable condition on the denoiser $D_{\sigma}$ in the sense that it is bounded in norm.
As a practical example, a bounded denoiser~\citep[Definition 1]{chan2016plugandplay} verifies Assumption~\ref{ass:denoiser_sub_linear}.

\begin{proposition}\label{prop:biais_convergence}
    Under Assumptions~\ref{ass:step_size_decreas},~\ref{ass:data_fidelity_reg},~\ref{ass:denoiser_approx}, \ref{ass:denoiser_sub_linear}, for $R>0$ and $\cK \subseteq \mathcal{B}(0,R)$ compact, almost surely on $\Lambda_{\cK}$, there exists $M_{\cK} \in ]0,+\infty[$ such that
    \begin{align*}
        \limsup_{k \to +\infty} \|\nabla \fJ(\vx_k)\| &\le M_{\cK} \eta^{\frac{1}{2}}, \\
        \limsup_{k \to +\infty} \fJ(\vx_k) - \liminf_{k \to +\infty} \fJ(\vx_k) &\le M_{\cK} \eta,
    \end{align*}
    with the bias $\eta = \underset{k \to + \infty}{\limsup}{\|\eE(\xi_k)\|} \underset{\sigma \to 0}{\le} \frac{\alpha}{\sigma^2} M(R) + o(\sigma)$.
\end{proposition}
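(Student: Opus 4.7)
The plan is to cast Algorithm~\ref{alg:Average_PnP} as a biased stochastic gradient descent on $\fJ$ and to invoke the standard machinery for biased stochastic approximation (in the spirit of \citet{tadic2017asymptotic} and \citet{laumont2022maximumaposteriori}). First, I would decompose the perturbation as $\xi_k = b_k + e_k$ with conditional bias $b_k = \eE[\xi_k \mid \mathcal{F}_k]$ and martingale difference $e_k$. Assumption~\ref{ass:denoiser_sub_linear} provides a uniform second-moment bound on $e_k$ over $\cK$ via the sublinear growth of both denoisers, while Assumption~\ref{ass:data_fidelity_reg} combined with the $\mathcal{C}^{\infty}$ regularity of $D_\sigma$ (Assumption~\ref{ass:denoiser_approx}) and of $p_\sigma = p \ast \nN_\sigma$ ensures that $\nabla \fJ$ is $L$-Lipschitz on a neighborhood of $\cK$.

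Next I would apply the Lipschitz descent inequality
\begin{equation*}
\fJ(\vx_{k+1}) \le \fJ(\vx_k) - \delta_k \langle \nabla \fJ(\vx_k), \nabla \fJ(\vx_k) + \xi_k \rangle + \tfrac{L \delta_k^2}{2}\|\nabla \fJ(\vx_k) + \xi_k\|^2,
\end{equation*}
and take conditional expectation: the martingale term vanishes and the cross-term is controlled by Young's inequality $|\langle \nabla \fJ, b_k \rangle| \le \tfrac{1}{2}\|\nabla \fJ\|^2 + \tfrac{1}{2}\|b_k\|^2$. Summing the resulting almost-supermartingale relations and using Assumption~\ref{ass:step_size_decreas} ($\sum_k \delta_k = \infty$, $\sum_k \delta_k^2 < \infty$) in a Robbins--Siegmund argument yields the fluctuation bound $\limsup_k \fJ(\vx_k) - \liminf_k \fJ(\vx_k) \le M_\cK \eta$. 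To pass to the gradient bound, I would use the inverse descent lemma $\fJ(\vx - \tfrac{1}{L} \nabla \fJ(\vx)) \le \fJ(\vx) - \tfrac{1}{2L}\|\nabla \fJ(\vx)\|^2$: since for large $k$ the value $\fJ(\vx_k)$ lies within $M_\cK \eta$ of a lower trajectory value, this gives $\|\nabla \fJ(\vx_k)\|^2 \le 2 L M_\cK \eta$ and hence the claimed $\eta^{1/2}$ bound.

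Finally, to justify the explicit bound on the asymptotic bias, I would compute, using Equation~\eqref{eq:gradient_average_pnp_reg},
\begin{equation*}
b_k = \tfrac{\alpha}{\sigma^2}\, \eE_{\zeta \sim \nN(0, \sigma^2 \mI_d)}\!\left[ D_\sigma(\vx_k + \zeta) - D_\sigma^{\star}(\vx_k + \zeta) \right],
\end{equation*}
and split the Gaussian expectation over $\mathcal{B}(0, R')$ and its complement: on the ball, Assumption~\ref{ass:denoiser_approx} gives $\|D_\sigma - D_\sigma^{\star}\| \le M(R')$, producing the leading $\tfrac{\alpha}{\sigma^2} M(R)$ contribution; on the complement, Assumption~\ref{ass:denoiser_sub_linear} combined with the super-polynomial decay of the Gaussian tail as $\sigma \to 0$ contributes $o(\sigma)$. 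The main obstacle I anticipate is the transition from the fluctuation bound on $\fJ$ to the $\sqrt{\eta}$-bound on $\|\nabla \fJ\|$: the identity $\|\nabla \fJ(\vx)\|^2 \le 2L(\fJ(\vx) - \fJ(\vx - \tfrac{1}{L}\nabla \fJ(\vx)))$ is pointwise, so linking the auxiliary point $\vx_k - \tfrac{1}{L}\nabla \fJ(\vx_k)$ to a low-$\fJ$ iterate of the actual trajectory requires either a subsequence extraction or the ODE-based perturbation machinery of~\citet{tadic2017asymptotic}, the latter demanding careful verification of additional continuity and moment hypotheses beyond those already stated.
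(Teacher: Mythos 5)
Your decomposition of the perturbation into a conditional bias plus a martingale difference, your second-moment bound on the martingale part via Assumption~\ref{ass:denoiser_sub_linear} and the boundedness on $\cK$, and your computation of the asymptotic bias
$b_k = \tfrac{\alpha}{\sigma^2}\,\eE_{\zeta}\left[ D_\sigma(\vx_k+\zeta) - D_\sigma^{\star}(\vx_k+\zeta)\right]$
(split over $\{\|\vx_k+\zeta\|\le R\}$ and its complement, with $M(R)$ on the ball and an exponentially small Gaussian-tail remainder) are exactly what the paper does; this is the substantive quantitative content of the proof, and you have it right.

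The gap is in how you propose to convert these estimates into the two $\limsup$ inequalities, and it is the gap you yourself flag at the end. The paper does not run a descent-lemma/Robbins--Siegmund argument by hand: it verifies the hypotheses of \citep[Theorem 2.1(ii)]{tadic2017asymptotic} (step-size condition, smoothness of $\fJ$, summability of $\delta_k^2\eE\|\gamma_k\|^2$ via Doob's inequality, and the bound on $\limsup_k\|\eta_k\|$) and then cites that theorem, which is precisely engineered to deliver $\limsup_k\|\nabla\fJ(\vx_k)\|\le M_\cK\,\eta^{1/2}$ and $\limsup_k\fJ(\vx_k)-\liminf_k\fJ(\vx_k)\le M_\cK\,\eta$. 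Your hand-rolled route does not reach these conclusions: the supermartingale recursion with Young's inequality yields at best $\liminf_k\|\nabla\fJ(\vx_k)\|\lesssim\eta$ (a $\liminf$ at order $\eta$, not a $\limsup$ at order $\eta^{1/2}$), and the ``inverse descent lemma'' $\|\nabla\fJ(\vx)\|^2\le 2L\bigl(\fJ(\vx)-\fJ(\vx-\tfrac{1}{L}\nabla\fJ(\vx))\bigr)$ compares $\fJ(\vx_k)$ to its value at an off-trajectory point, which you cannot relate to $\liminf_k\fJ(\vx_k)$ without exactly the kind of ODE/asymptotic-pseudo-trajectory analysis that the cited theorem encapsulates. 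So the correct completion of your argument is not an alternative proof but a reduction to the same external result the paper uses; as written, the passage from the fluctuation bound on $\fJ$ to the $\eta^{1/2}$ gradient bound is unproved. A minor additional point: the smoothness of $\fJ$ needed for the cited theorem comes from Assumption~\ref{ass:data_fidelity_reg} together with $\fR_\sigma$ being $\mathcal{C}^{\infty}$ by Gaussian convolution; the $\mathcal{C}^{\infty}$ regularity of the learned denoiser in Assumption~\ref{ass:denoiser_approx} enters the verification of the theorem's regularity hypotheses on the iteration, not the regularity of $\fJ$ itself.
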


The denoiser bias $\eta$ has a similar bound than in \citep[Proposition 3]{laumont2022maximumaposteriori}.
If the denoiser is well trained, $M(R) \approx 0$, i.e. the denoiser bias is small. 
Proposition~\ref{prop:biais_convergence} proves that the smaller is the denoiser bias, the closer the sequence $(x_k)_{k \in \mathbf{N}}$ is to critical points of Problem~\eqref{eq:opt_problem_sigma} (in terms of gradient norm).
This statement generalizes the convergence result of Proposition~\ref{prop:convergence_unbiased} to the biased case.

\begin{figure*}[!ht]
    \centering
    \begin{subfigure}{0.095\textwidth}
        \includegraphics[width=\textwidth]{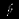}
    \end{subfigure}
    \begin{subfigure}{0.095\textwidth}
        \includegraphics[width=\textwidth]{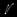}
    \end{subfigure}
    \begin{subfigure}{0.095\textwidth}
        \includegraphics[width=\textwidth]{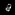}
    \end{subfigure}
    \begin{subfigure}{0.095\textwidth}
        \includegraphics[width=\textwidth]{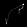}
    \end{subfigure}
    \begin{subfigure}{0.095\textwidth}
        \includegraphics[width=\textwidth]{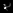}
    \end{subfigure}
    \begin{subfigure}{0.095\textwidth}
        \includegraphics[width=\textwidth]{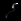}
    \end{subfigure}
    \begin{subfigure}{0.095\textwidth}
        \includegraphics[width=\textwidth]{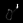}
    \end{subfigure}
    \begin{subfigure}{0.095\textwidth}
        \includegraphics[width=\textwidth]{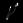}
    \end{subfigure}
    \begin{subfigure}{0.095\textwidth}
        \includegraphics[width=\textwidth]{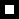}
    \end{subfigure}
    \begin{subfigure}{0.095\textwidth}
        \includegraphics[width=\textwidth]{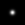}
    \end{subfigure}
    \caption{Kernels used for deblurring. As in~\citep{zhang2017learning, zhang2021plugandplay, pesquet2021learning, hurault2022gradient} we test the different methods on 8 real-world camera shake kernels proposed in~\citep{Levin2009} and on the uniform $9 \times 9$ kernel and the $25 \times 25$ Gaussian kernel with standard deviation 1.6 proposed in~\citep{romano2017little}.}
    \label{fig:kernels_of_blur}
\end{figure*}

\begin{figure*}[!ht]
    \centering
    \includegraphics[width=\textwidth]{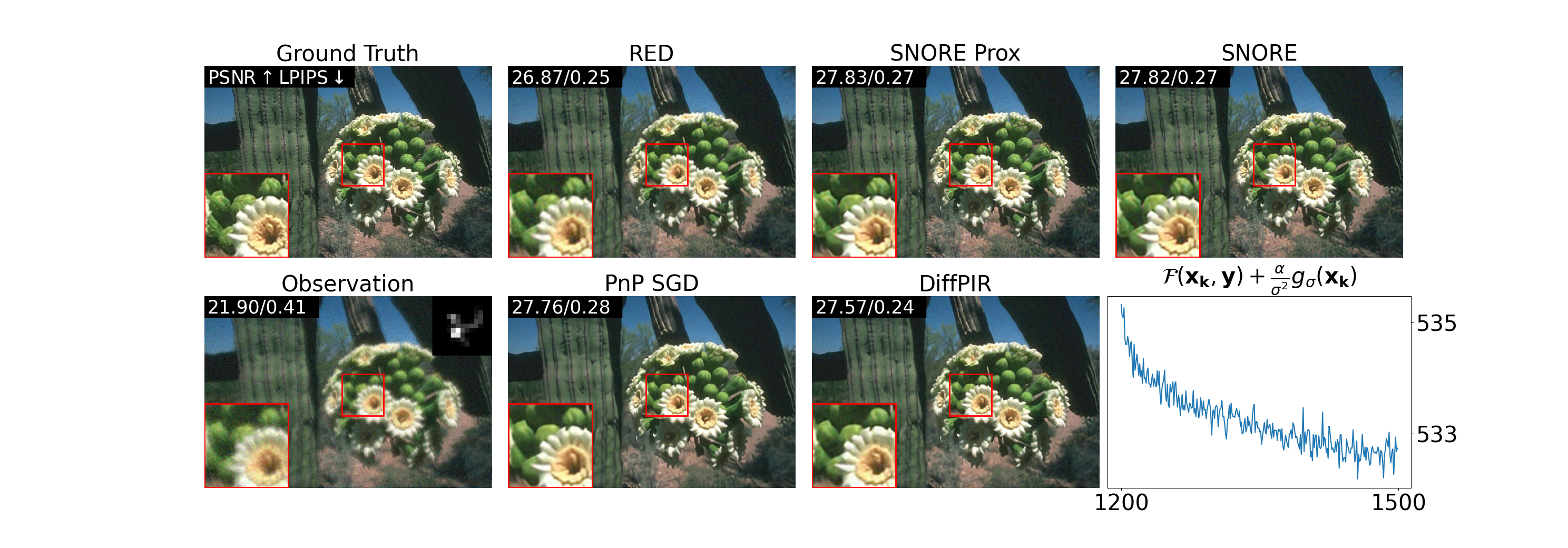}
    \caption{Deblurring with various methods of a motion blur kernel with input noise level $\sigma_{\vy} = 10 / 255$ with a GS-denoiser trained on natural images. Note that Ann-SNORE produces a better perceptual reconstruction (BRISQUE). 
    \textit{Bottom-Right:} Decrease of the optimized function~$\fJ$ (Equation~\eqref{eq:minimization_gs_denoiser}) along the stochastic gradient descent with the last parameters $(\alpha_{m-1}, \sigma_{m-1})$ of the annealing procedure. 
    }
    \label{fig:restauration}
\end{figure*}

\section{Experiments}
In this section, we show the performance of Annealing SNORE (Ann-SNORE) algorithm for image inverse problems, including deblurring and inpainting. Ann-SNORE is compared to several state-of-the-art image restoration methods. In Appendix~\ref{sec:more_exp} we give more details on our experiments and other results on various images and various inverse problems including deblurring, inpainting, super-resolution and despeckling. A study of Ann-SNORE sensitivity to its parameters and the randomness of the algorithm is also provided.
The code used in these experiments can be found in ~\hyperlink{https://github.com/Marien-RENAUD/SNORE}{https://github.com/Marien-RENAUD/SNORE}.

In our experiment, we use a Gradient-Step denoiser~\citep{hurault2022gradient} of the form
\begin{align}
    D_{\sigma} = \Id - \nabla g_{\sigma},
\end{align}
where $g_{\sigma}$ is a learned neural network. With this gradient-step denoiser, \citet{hurault2022gradient} demonstrated that RED converges to a critical point of an explicit objective function of the form $\fF(\vx, \vy) + \alpha g_{\sigma}\left(\vx\right)$. Furthermore, it is established that this objective function decreases throughout the algorithm.
Using the Gradient-Step denoiser in the SNORE Algorithm~\ref{alg:Average_PnP} yields a stochastic gradient descent that minimizes the objective
\begin{align}\label{eq:minimization_gs_denoiser}
    \hspace{-.3cm}\argmin_{\vx \in \R^d}{\mathcal{J}(\vx)}
   \hspace{-.1cm}:=\hspace{-.05cm}{\eE_{\tx \sim p_{\sigma}(\tx|\vx)}\hspace{-.05cm}\left(\fF(\vx, \vy) + \frac{\alpha}{\sigma^2} g_{\sigma}(\tx) \right)}. \hspace{-.1cm}
\end{align}

\subsection{Deblurring}
For image deblurring, the degradation operator is a convolution performed with circular boundary conditions. Therefore $\fF(\vx) = \frac{1}{2 \sigma_{\vy}^2} \|\vy - \mA \vx\|^2$, where $\mA = \mF \mLambda \mF^*$, $\mF$ is the orthogonal matrix of the discrete Fourier transform (and $\mF^*$ its inverse) and $\mLambda$ a diagonal matrix.

We make a gradual annealing, by keeping $\sigma$ and $\alpha$ fixed for some iterations, to efficiently minimize Problem~\eqref{eq:minimization_gs_denoiser}. 
We set $1500$ iterations and $m = 16$ annealing levels. To ensure  convergence, we run $300$ iterations with the last parameters $\sigma_{m-1}, \alpha_{m-1}$.
For all input noise levels $\sigma_{\vy}$, we set $\sigma_0 =  1.8 \sigma_{\vy}$, $\sigma_{m-1} =  0.5 \sigma_{\vy}$, $\alpha_0 = 0.1 \sigma_0^2\sigma_{\vy}^{-2}$ and $\alpha_{m-1} = \sigma_{m-1}^2\sigma_{\vy}^{-2}$. We initialize with the observation $\vx_0 = \vy$ and use a fixed step-size, 
$\delta_k = \delta = 0.1$, as we observe that a decreasing $\delta_k$ 
leads to a slower convergence, as notice by~\citet{laumont2022maximumaposteriori}.

\begin{figure*}[!ht]
  \noindent
  \makebox[\textwidth]{\includegraphics[width=1.01\textwidth]{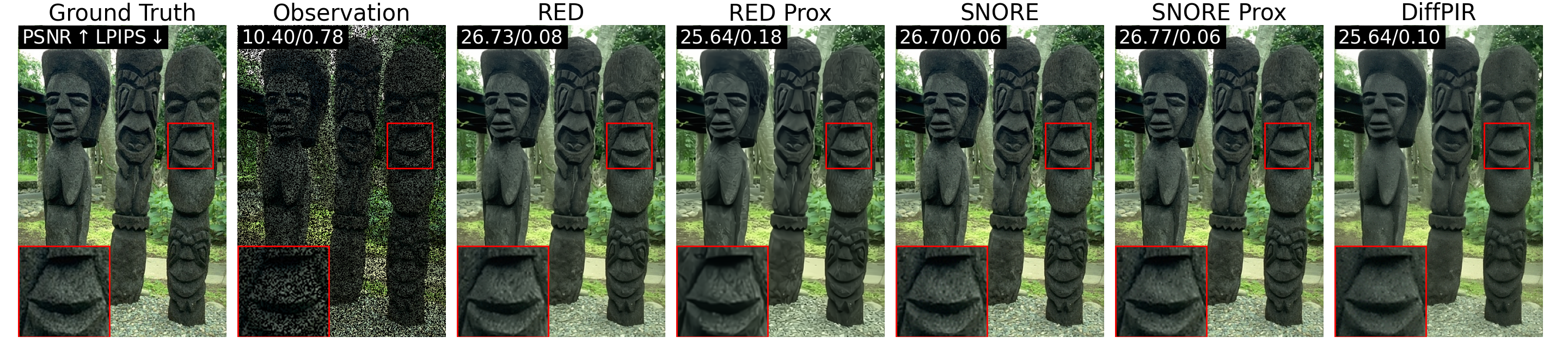}}
    \caption{Inpainting with various methods on a random mask (with a proportion $p=0.5$ of masked pixels) with a GS-denoiser trained on natural images. 
    One can observe here the ability of Ann-SNORE to recover both sharp structures and textural content.}
    \label{fig:restauration_inpainting}
\end{figure*}

We compare in Table~\ref{table:deblurring} our method to RED restoration algorithm of~\citep{romano2017little} (see Algorithm~\ref{alg:RED}) with a gradient-descent step on the data-fidelity, ``RED Prox" method~\citep{romano2017little} with a proximal-descent step on the data-fidelity (see Algorithm~\ref{alg:RED_Prox}), DiffPIR~\citep{Zhu_2023_CVPR} and PnP SGD~\citep{laumont2022maximumaposteriori}. 
As~\citep{hurault2022gradient}, we evaluate each method on $10$ images from CBSD68~\citep{Martin2001} and $10$ blur kernels presented in Figure~\ref{fig:kernels_of_blur}.
The same denoiser trained by~\citep{hurault2022gradient} on natural images is used for all methods to ensure a fair comparison. We compare two variants of the Ann-SNORE algorithm, with gradient step on the data-fidelity (Algorithm~\ref{alg:Annealead_SNORE}) or with a proximal step (Algorithm~\ref{alg:Annelead_SNORE_Prox} in Appendix~\ref{sec:more_exp}). DiffPIR is used with $t_{\text{start}} < t_{\text{trained}}$ as suggested by the authors~\citep{Zhu_2023_CVPR}. However, this diffusion method does not outperform in our experiments because our denoiser is not trained to tackle highly-noised images. An extra-parameter $\beta > 0$ is added to PnP SGD algorithm to increase the algorithm performances. On Table~\ref{table:deblurring}, results are presented with distorsion metrics (PSNR, SSIM) and perceptual metrics with reference (LPIPS) and without reference (BRISQUE).

On Table~\ref{table:deblurring}, we observe that Ann-SNORE has similar performance than other state-of-the-art methods. 
If we compare Ann-SNORE to RED Prox, which performs the best in terms of distortion, we observe that Ann-SNORE performs favorably in terms of perceptual metrics.
However, Ann-SNORE remains slower than other methods for deblurring as it requires a sufficient number of annealing levels 
for the restored images to have a high visual quality (see discussion in Appendix~\ref{sec:more_exp}).

On Figure~\ref{fig:restauration}, we provide a qualitative comparison. Note that the Ann-SNORE algorithm provides a more realistic result than RED Prox with equivalent quantitative score (PSNR). The global decreasing behavior of the function $\fJ$~\eqref{eq:minimization_gs_denoiser} empirically confirms that our algorithm minimizes this function.

\begin{table}
\centering
\resizebox{\linewidth}{!}{%
\begin{tabular}{ c c c c c c }
 Noise level & Method & PSNR$\uparrow$ & SSIM$\uparrow$&LPIPS$\downarrow$&BRISQUE$\downarrow$\\
 \hline
 & RED & 29.82 & 0.84 & 0.17 & \underline{21.09} \\
 & RED Prox & \textbf{30.64} & \textbf{0.87} & \textbf{0.15} & 45.27 \\ 
5/255 & Ann-SNORE & \underline{29.92} & \underline{0.85} & 0.17 & 28.24 \\ 
& Ann-SNORE Prox & \underline{29.92} & \underline{0.85} & 0.17 & 28.25 \\
& DiffPIR & 28.55 & 0.76 & \underline{0.16} & \textbf{17.87} \\

& PnP SGD & 29.37 & 0.83 & 0.20 & 26.44 \\

 \hline
 & RED & 27.18 & 0.72 & 0.25 & \textbf{20.16} \\
& RED Prox & \textbf{28.50} & \textbf{0.80} & \textbf{0.23} & 51.41 \\ 

10/255 & Ann-SNORE & \underline{27.91} & \underline{0.78} & \underline{0.24} & 27.89 \\ 
& Ann-SNORE Prox & \underline{27.91} & \underline{0.78} & \textbf{0.23} & 27.97 \\
& DiffPIR & 27.47 & 0.74 & \underline{0.24} & \underline{21.12} \\
& PnP SGD & 27.61 & 0.75 & 0.26 & 25.74 \\

 \hline
& RED & 24.03 & 0.54 & 0.43 & \textbf{20.75} \\
& RED Prox & \textbf{26.31} & \textbf{0.71} & \textbf{0.31} & 55.48 \\ 
20/255 & Ann-SNORE & 25.61 & 0.66 & \underline{0.32} & 28.88 \\ 
& Ann-SNORE Prox & 25.61 & 0.66 & \underline{0.32} & 29.00 \\
& DiffPIR & \underline{26.05} & \underline{0.69} & 0.33 & 33.57 \\
& PnP SGD & 25.50 & 0.63 & 0.33 & \underline{22.67} \\
 \hline
\end{tabular}
}
\caption{Quantitative comparisons  of image deblurring methods on CBSD10 with $10$ different blur kernels (see Figure~\ref{fig:kernels_of_blur}) and three different level of noise ${\sigma_{\vy} \in \{ 5, 10, 20  \} / 255}$. 
Best and second best results are respectively displayed in bold and underlined.}
\label{table:deblurring}
\end{table}

\subsection{Inpainting}

For image inpainting, the degradation operator $\mA$ is a diagonal matrix with coefficient in $\{0,1\}$. No noise is added to the degraded observation, $\vy = \mA \vx$. The proximal operator of $\fF$ is the orthogonal projection which imposes observed pixels values. This strict condition is relaxed in RED (Algorithm~\ref{alg:RED}) and Ann-SNORE (Algorithm~\ref{alg:Annealead_SNORE}) by taking $\nabla \fF(\vx, \vy) = \mA (\vx - \vy)$. 
We focus on random mask inpainting with a proportion of masked pixels $p = 0.5$.

For Ann-SNORE, we keep the same annealing scheme than for deblurring with $500$ iterations. We set $\sigma_0 = 50/255$, $\sigma_{m-1} = 5/255$, $\alpha_0 = \alpha_{m-1} = 0.15$. The initialization is done with a modified version of the observation $\vy$, where the masked pixels are set to the $0.5$ value. We set a fixed step-size $\delta_k = \delta = 0.5$.

On Table~\ref{table:inpainting}, we compare Ann-SNORE to other methods for the inpainting task with $p=0.5$ on CBSD68. A qualitative comparison is given in Figure~\ref{fig:restauration_inpainting}. We observe that Ann-SNORE outperforms other approaches with better distortion and perceptual scores (PSNR, SSIM, LPIPS). Visually, we observe that Ann-SNORE succeeds to restore textures. Other qualitative results are given in Appendix~\ref{sec:more_exp}. Note that each method (except DiffPIR) is run with the same number of iterations. Hence, contrary to the deblurring task, inpainting with Ann-SNORE does not involve any additional computational load with respect to RED methods.

\begin{table}
\centering
\resizebox{\linewidth}{!}{%
\begin{tabular}{c c c c c }
 Method & PSNR$\uparrow$&SSIM$\uparrow$&LPIPS$\downarrow$&BRISQUE$\downarrow$\\
 \hline
 RED & 31.26 & 0.91 & \underline{0.07} & 17.13 \\
 RED Prox & 30.31 & 0.89 & 0.12 & 38.29 \\ 
 Ann-SNORE & \underline{31.65} & \underline{0.92} & \textbf{0.04} & \underline{7.10} \\
Ann-SNORE Prox & \textbf{31.94} & \textbf{0.93} & \textbf{0.04} & 8.35 \\
DiffPIR & 29.57 & 0.87 & \underline{0.07} & \textbf{4.17} \\
 \hline
\end{tabular}
}
\caption{Inpainting result for random missing pixel with probability $p = 0.5$ on CBSD68 dataset. Best and second best results are respectively displayed in bold and underlined.}
\label{table:inpainting}
\end{table}

\section{Conclusion}
In this work, we introduce a stochastic denoising regularization (SNORE) for solving imaging inverse problems within the PnP framework.
This regularization realizes the heuristic idea that \textit{an image looks clean if its noisy versions look as noisy images (with the same noise level)}. 
Solving inverse problems with this regularization can be addressed with a provably-convergent stochastic optimization algorithm.
The algorithm differs from standard PnP in the fact that the regularization step consists in denoising noised images, thus avoiding a distribution-shift from the data on which the denoiser is trained.
It also draws a connection with recent diffusion-based approaches, which also involve noising-denoising steps, but included in a different global scheme that relates to backward diffusion.
Experiments conducted on  ill-posed inverse problems (deblurring, inpainting) show that SNORE attains state-of-the-art image restoration performance (in terms of full-reference and no-reference quality measures), at the expense of a computational cost which is, for now, larger than competing methods for deblurring.
It would be interesting to determine whether the computational cost could be reduced by relying on other stochastic gradient-descent algorithms (e.g. ADAM~\citep{kingma2014adam}, INNA~\citep{castera2021inertial}), for which theoretical convergence has not been shown yet for functionals including regularizations such as RED or SNORE.

\section{Impact Statement}\label{sec:ethic_statement}
The work presented in this paper addresses the highly ill-posed problem of restoring missing information within an image. 
This sensitivity to errors is particularly pronounced in scenarios involving post-processing algorithms such as segmentation, detection, or classification applied to the reconstructed image, where errors in the reconstruction process may propagate into erroneous decision-making based on the image data. 
This concern is particularly critical in the context of medical images.
SNORE, functioning as a stochastic process reliant on a learned denoiser, inherently produces random fluctuations in its output.
In Section~\ref{sec:uncertainity_snore}, we provide an analysis of SNORE uncertainty to its random seed and its initialization. 
These experiments suggest a robustness of our method. 
Subsequent research effort should focus on quantifying the errors associated with SNORE in order to confirm its utility as a reliable reconstruction algorithm.

\section{Acknowledgements}
This study has been carried out with financial support from the French Direction G\'en\'erale de l’Armement and the French Research Agency through the PostProdLEAP project (ANR-19-CE23-0027-01). Experiments presented in this paper were carried out using the PlaFRIM experimental testbed, 
supported by Inria, CNRS (LABRI and IMB), Universite de Bordeaux, Bordeaux INP and Conseil Regional d’Aquitaine (see https://www.plafrim.fr). We thank Samuel Hurault and Jean-Philippe Rolin for there time and discussions. We thank Gersende Fort that detect an error in the proofs of Proposition\ref{prop:convergence_unbiased}-\ref{prop:biais_convergence} and help us to correct it.

\bibliography{ref}
\bibliographystyle{icml2024}

\newpage
\appendix
\onecolumn

\section*{Supplementary Material}
Our analysis of SNORE is based on stochastic gradient-descent theory and our implementation on the Gradient-Step PnP code~\citep{hurault2022gradient} as well as the Python library DeepInverse for comparisons. In Supplement~\ref{sec:reproducibility}, we provide a reproducibility statement for this work. In Supplement~\ref{sec:related_works_annexe}, we present in more details some related works. In Supplement~\ref{sec:simple_cases_apnp}, SNORE regularization is detailed in simple cases such as Gaussian or Gaussian Mixture priors. In Supplement~\ref{sec:proof_convergence_result}, we provide the proofs of the different propositions presented in Section~\ref{sec:convergence_analysis}. In Supplement~\ref{sec:discussion_assumptions}, we give more explanations about our technical assumptions. In Supplement~\ref{sec:more_exp}, we provide more details on our experimental setting and present additional numerical results. In Supplement~\ref{sec:boundness_sequence}, we discuss the boundedness hypothesis and perspectives of generalization.

\section{Reproducibility Statement}\label{sec:reproducibility}
Anonymous source code is given in supplementary material. It contains a README.md file that explains step by step how to run the algorithm and replicate the results of the paper. 
Moreover, the pseudocode of SNORE algorithm is given in Algorithm~\ref{alg:Annealead_SNORE} and every comparing methods pseudocodes are given in Appendix~\ref{sec:more_exp}.
All parameters setting is detailed in Appendix~\ref{sec:more_exp}. The used datasets and the denoiser weights~\citep{hurault2022gradient} are given in the supplementary materials.
Theoretical results presented in Section~\ref{sec:convergence_analysis} are proved in the appendices.

\section{Related works}\label{sec:related_works_annexe}

First of all, the convergence analysis of RED (Algorithm~\ref{alg:RED}) and RED Prox (Algorithm~\ref{alg:RED_Prox}) algorithms have been made in the literature~\citep{fermanian2023pnpreg, hurault2022gradient}. However, to our knowledge, no convergence analysis have be developed for DiffPIR~\citep{Zhu_2023_CVPR}.

In the existing literature, some stochastic versions of Plug-and-Play have already been proposed. 

\citet{tang2020fast} propose to accelerate the computation of PnP-ADMM algorithms especially when the data-fidelity is heavy to compute by using a mini-batch approximation of the data-fidelity. In the context of gradient descent, this would lead to a computed sequence defined by $\vx_{k+1} = \vx_k - \delta_k \tilde{\nabla} \fF(\vx_k, \vy) - \frac{\alpha \delta_k}{\sigma^2} \left(\vx_k - D_{\sigma}(\vx_k)\right)$ with $\tilde{\nabla} \fF$ a random batch approximation of $\nabla \fF$.

\citet{sun2019block} propose a similar idea based on a batch of random indices, which are optimized at each step. Take $U_i \in \R^{n \times n_i}$ such that $\sum_{i=1}^b n_i = n$ and $\sum_{i=1}^b U_i U_i^T = \mI_n$ then $x_{k+1} = x_k - \delta U_{i_k}^T \nabla \fJ(x_k)$, with $i_k$ a random index in $[1,b]$ and $\fJ = \fF + \alpha \fR$ the functional to minimize. On can remark that the two previous works propose an acceleration of PnP based on random batches but no noise is added inside the process. SNORE does not aim at accelerating PnP algorithms but it rather proposes a stochastic improvement of PnP by injecting noise inside the classical PnP regularization.

\citet{laumont2022maximumaposteriori} propose to run a stochastic gradient descent (SGD) algorithm with the PnP regularization. Thus the computed sequence, PnP SGD, is defined by $\vx_{k+1} = \vx_k - \delta_k \nabla \fF(\vx_k, \vy) - \frac{\alpha \delta_k}{\sigma^2} \left(\vx_k - D_{\sigma}(\vx_k)\right) + \delta_k z_{k+1}$ with $z_{k+1} \sim \nN(0, \mI_d)$. The step-size decreases with time so the additive noise standard deviation also reduces with time. Unlike the above mentioned methods, PnP SGD is not an acceleration of RED~\citep{romano2017little} but an another type of algorithm to minimize the same objective function, $\fF(\cdot, \vy) + \alpha \fP_{\sigma}$. SNORE has two main differences with this algorithm. First the noise is only injected inside the denoiser, and never in the data-fidelity term. Second, the standard deviation of the injected noise is fixed during the time (for fixed $\sigma$). 
\citet{luther2023ddgm} proposed an algorithm similar to Algorithm~\ref{alg:Annealead_SNORE} run with a blind denoiser but do not provide strong theoretical motivation or analysis.

The authors of \citet{kadkhodaie2021solving} propose a method to solve linear inverse problems based on a modification of the Monte Carlo Markov Chain (MCMC) of Langevin Algorithm~\citep{song2020generative}. Their coarse-to-fine stochastic ascent method is defined by (in our notation) $\mathbf{x}_{k+1} = \mathbf{x}_k + \delta_k \left( \left(\operatorname{Id} - A^T A \right) \left( D(\mathbf{x}_k) - \mathbf{x}_k \right) + A^T \left( \mathbf{y} - A \mathbf{x}_k \right) \right) + \gamma_k \mathbf{z}_k$, with $\delta_k = \frac{\delta_0 k}{1+\delta_0 (k-1)}$ the step-size, $D$ a blind denoiser, $\mathbf{z}_k \sim \mathcal{N}(0, \operatorname{Id})$, $\gamma_k = \left( \left(1 - \beta \delta_k \right)^2 - \left(1 - \delta_k \right)^2 \right) \sigma_k$, $\beta \in [0,1]$ and $\sigma_k$ an estimation of the noise level of $\mathbf{x}_k$. In this method, a blind denoiser $D$ is used instead of a denoiser $D_{\sigma}$ which takes the noise level as an input and a non-exact estimation of the noise level of $\mathbf{x}_k$ is made at each iteration. Moreover the parameter $\beta$ is chosen to make the added noise level $\gamma_k$ decrease through the iterations, compared to $\sigma$ that is adapted to the denoiser $D_{\sigma}$ in SNORE. This noise $\gamma_k \mathbf{z}_k$ is added to the entire image and not only to the regularization such as in SNORE. Finally, the authors of \citet{kadkhodaie2021solving} do not provide a theoretical analysis of the proposed algorithm, and depending on the noise schedule, the algorithm may perform posterior sampling or stochastic optimization.

In section~\ref{sec:related_works}, we separate stochastic version of PnP and Diffusion Model (DM) for clarity, but this separation is not strict. 
We recall that PnP is using a denoiser inside an optimization algorithm and DM generate data by simulating a reverse diffusion process. 
However, many recent works do not fit exactly inside these categories.
~\citet{jalal2021robust,sun2023provable} develop a restoration method by running a Langevin dynamics (instead of a reverse diffusion process) with a score-matching network. Other posterior sampling algorithms can be simulated with Langevin Dynamics run with a PnP approximation of the score~\citep{Laumont_2022_pnpula, renaud2023plugandplay}. Another line of work, it to run Gibbs sampling~\citep{coeurdoux2023plugandplay, Bouman2023} with a diffusion model approximation of the score. 
Our new regularization SNORE is part of this field between PnP and DM.

\section{SNORE in simple cases}\label{sec:simple_cases_apnp}

In this section, we detail the behavior of the SNORE regularization in simple cases (Gaussian prior and Gaussian Mixture prior) in order to develop our intuition on this regularization. We observe that this regularization is equivalent to the PnP regularization with a Gaussian prior. Then, in the Gaussian Mixture case, we demonstrate that the gradient of the SNORE regularization converges to the gradient of the ideal regularization $-\nabla \log p$ on every compact when $\sigma \to 0$ and we exhibit the speed of convergence. Finally, we simulate SNORE regularization for 1D distribution and compare it to the PnP regularization.

\subsection{Gaussian Prior}
In order to understand the behavior of Algorithm~\ref{alg:Average_PnP} compared to other algorithms (such as Algorithm~\ref{alg:RED}) we make the computation in a very simple case where the prior is a Gaussian distribution.

We suppose that $p(\vx) = \mathcal{N}(\vx; \vmu, \mSig)$. Then $p_{\sigma}(\tx) = \mathcal{N}(\tx; \vmu, \mSig + \sigma^2 \mI)$.

The score of these distributions has the following expression
\begin{align*}
    - \nabla \log p(\vx) &= \mSig^{-1} (\vx - \vmu)  \\
    - \nabla \log p_{\sigma}(\tx) &= (\mSig + \sigma^2 \mI_d)^{-1} (\tx - \vmu).  
\end{align*}

Moreover, the SNORE regularization defined in Equation~\eqref{eq:new_regularization} can be computed in closed form
\begin{align*}
    &\fR_{\sigma}(\vx) = - \eE_{\tx \sim p_{\sigma}(\tx|\vx)}\left(\log p_{\sigma}(\tx) \right) = \frac{1}{2} \eE_{\tx \sim p_{\sigma}(\tx|\vx)} \left((\tx - \vmu)^T (\mSig + \sigma^2 \mI_d)^{-1} (\tx - \vmu) \right)  \\
    &= \frac{1}{2} \int_{\R^d} (\tx - \vmu)^T (\mSig + \sigma^2 \mI_d)^{-1} (\tx - \vmu) \mathcal{N}(\tx;\vx, \sigma^2 \mI_d) d\tx \\
    &= \frac{1}{2} \int_{\R^d} (\tx - \vx)^T (\mSig + \sigma^2 \mI_d)^{-1} (\tx - \vx) \mathcal{N}(\tx;\vx, \sigma^2 \mI_d) d\tx + \frac{1}{2} \int_{\R^d} (\vx - \vmu)^T (\mSig + \sigma^2 \mI_d)^{-1} (\vx - \vmu) \mathcal{N}(\tx;\vx, \sigma^2 \mI_d) d\tx \\
    &+ \int_{\R^d} (\tx - \vx)^T (\mSig + \sigma^2 \mI_d)^{-1} (\vx - \vmu) \mathcal{N}(\tx;\vx, \sigma^2 \mI_d) d\tx\\
    &= \sigma^2 Tr\left((\mSig + \sigma^2 \mI_d)^{-1} \right) + \frac{1}{2} (\vx - \vmu)^T (\mSig + \sigma^2 \mI_d)^{-1} (\vx - \vmu).
\end{align*}

The gradient of the SNORE regularization is thus
\begin{align*}
    \nabla \fR_{\sigma}(\vx) = (\mSig + \sigma^2 \mI_d)^{-1} (\vx - \vmu), 
\end{align*}
and we have that with a Gaussian prior, $\nabla \fR_{\sigma}(\vx) = \nabla \fP_{\sigma}(\vx) = - \nabla \log p_{\sigma}(\vx)$, which makes SNORE equivalent to traditional PnP.

\subsection{Gaussian Mixture prior}
In this section, we study the behavior of the SNORE regularization in the case of a Gaussian Mixture prior. First, we remember the convergence of $-\nabla \log p_{\sigma}$ to $-\nabla \log p$ and we compute the speed of convergence of $\nabla \fR_{\sigma}$ to $-\nabla \log p$ when $\sigma \to 0$. Then, simulations are run in 1D to give more intuition.

Let us suppose that the prior $p$ is  a Gaussian Mixture Model,
$$p(\vx) = \sum_{i = 1}^p{\pi_i \mathcal{N}(\vx;\vmu_i, \mSig_i)},$$
with $\pi_i \ge 0$ and $\sum_{i=1}^p \pi_i = 1$.

\subsubsection{Theoretical considerations}
The scores of the prior distribution and the noisy prior distributions have the closed forms
\begin{align}\label{eq:score_gaussian_mixture_closed_form}
    \nabla \log p(\vx) &= - \frac{\sum_{i = 1}^p{\pi_i \mSig_i^{-1} (\vx - \vmu_i) \mathcal{N}(\vx;\vmu_i, \mSig_i)}}{\sum_{i = 1}^p{\pi_i \mathcal{N}(\vx;\vmu_i, \mSig_i)}} \nonumber \\
    \nabla \log p_{\sigma}(\vx) &= - \frac{\sum_{i = 1}^p{\pi_i (\mSig_i + \sigma^2 \mI_d)^{-1} (\vx - \vmu_i) \mathcal{N}(\vx;\vmu_i, \mSig_i + \sigma^2
    \mI_d)}}{\sum_{i = 1}^p{\pi_i \mathcal{N}(\vx;\vmu_i, \mSig_i+ \sigma^2 \mI_d)}}.
\end{align}

First we study the behavior of this noisy prior distribution score when $\sigma \to 0$.

\begin{proposition}\label{prop:conv_speed_PnP}
    If the prior $p$ is a Gaussian Mixture Model, then there exist $a_p \in \R^+$, $\alpha_0 > 0$, such that $\forall \vx \in \R^d$, $\forall \sigma \in ]0, \sigma_0]$:
    \begin{align}
        \| \nabla \log p_{\sigma}(\vx) - \nabla \log p(\vx) \| \le  \sigma^2 a_p (\|\vx\|^3 + 1).
    \end{align}  
\end{proposition}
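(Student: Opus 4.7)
The plan is to change variables to $u = \sigma^2$, differentiate the noisy score with respect to $u$, and apply the fundamental theorem of calculus after establishing a uniform polynomial bound on the derivative. The key identity is the responsibility decomposition of the mixture score, which follows directly from~\eqref{eq:score_gaussian_mixture_closed_form}:
\begin{equation*}
\nabla \log p_{\sqrt u}(\vx) = \sum_{i=1}^{p} w_i^u(\vx)\, \vm_i^u(\vx),
\end{equation*}
where $w_i^u(\vx) := \pi_i \nN(\vx; \vmu_i, \mSig_i + u\mI_d)/p_{\sqrt u}(\vx)$ are the responsibilities (so $w_i^u \geq 0$ and $\sum_i w_i^u = 1$), and $\vm_i^u(\vx) := -(\mSig_i + u\mI_d)^{-1}(\vx - \vmu_i)$ is the score of the $i$-th Gaussian component. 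The advantage over a naive Taylor expansion of each Gaussian factor is that the responsibilities remain in $[0,1]$ for every $\vx \in \R^d$, preventing any blow-up for large $\|\vx\|$.

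Next, I would compute $\partial_u \nabla \log p_{\sqrt u}(\vx)$ by the quotient rule. Writing $\phi_i^u := \pi_i \nN(\vx; \vmu_i, \mSig_i + u\mI_d)$, one gets $\partial_u w_i^u = w_i^u(T_i^u - \bar T^u)$ with $\bar T^u := \sum_j w_j^u T_j^u$ and
\begin{equation*}
T_i^u(\vx) := \partial_u \log \phi_i^u = -\tfrac{1}{2}\operatorname{tr}\!\big((\mSig_i + u\mI_d)^{-1}\big) + \tfrac{1}{2}(\vx - \vmu_i)^\top (\mSig_i + u\mI_d)^{-2}(\vx - \vmu_i),
\end{equation*}
while $\partial_u \vm_i^u = (\mSig_i + u\mI_d)^{-2}(\vx - \vmu_i)$. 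Combining these yields
\begin{equation*}
\partial_u \nabla \log p_{\sqrt u}(\vx) = \sum_{i=1}^{p} w_i^u \big[ (T_i^u - \bar T^u)\, \vm_i^u + \partial_u \vm_i^u \big].
\end{equation*}

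Then I would bound this quantity uniformly for $u \in [0, u_0]$, with $u_0 := \sigma_0^2$ chosen arbitrarily. Since $(\mSig_i + u\mI_d)^{-k} \preceq \mSig_i^{-k}$ in the Loewner order for $k = 1, 2$, the matrix-valued coefficients in $T_i^u$, $\vm_i^u$ and $\partial_u \vm_i^u$ are bounded in operator norm uniformly in $u$. Hence $|T_i^u(\vx)| \leq C_1(1 + \|\vx\|^2)$ and $\|\vm_i^u(\vx)\|, \|\partial_u \vm_i^u(\vx)\| \leq C_2(1 + \|\vx\|)$ with constants independent of $u$. Using $w_i^u \in [0,1]$, $\sum_i w_i^u = 1$, and $|\bar T^u| \leq \max_i |T_i^u|$, the bracketed term is bounded by $C_3(1 + \|\vx\|^3)$, giving $\|\partial_u \nabla \log p_{\sqrt u}(\vx)\| \leq a_p(1 + \|\vx\|^3)$ uniformly on $u \in [0, u_0]$.

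The conclusion then follows from the fundamental theorem of calculus,
\begin{equation*}
\nabla \log p_\sigma(\vx) - \nabla \log p(\vx) = \int_{0}^{\sigma^2} \partial_u \nabla \log p_{\sqrt u}(\vx)\, du,
\end{equation*}
whose norm is at most $\sigma^2 a_p (1 + \|\vx\|^3)$ for every $\sigma \in (0, \sigma_0]$. The main obstacle is the uniform-in-$\vx$ bound on $\partial_u \nabla \log p_{\sqrt u}$: a Taylor expansion of each Gaussian ratio $\nN(\vx; \vmu_i, \mSig_i + u\mI_d)/\nN(\vx; \vmu_i, \mSig_i)$ would produce an error term of the form $\exp(\tfrac{1}{2} u\, \|\vx - \vmu_i\|^2_{\mSig_i^{-2}})$ which is not controlled for large $\|\vx\|$; keeping everything in the convex-combination form with bounded responsibilities is what makes the argument go through globally.
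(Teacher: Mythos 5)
Your proof is correct, and it takes a genuinely different --- and in one respect sharper --- route than the paper's. The paper expands each Gaussian density $\nN(\vx;\vmu_i,\mSig_i+\sigma^2\mI_d)$ in powers of $\sigma^2$ (expanding both the determinant prefactor and the exponential), identifies the first-order correction $c_p(\vx)$ to the score, and then bounds $\|c_p(\vx)\|$ by a multiple of $\|\vx\|^3+1$ using exactly the observation you also rely on, namely that the responsibilities $\pi_j\nN(\vx;\vmu_j,\mSig_j)/p(\vx)$ lie in $[0,1]$. The weak point of that route is the one you flag at the end of your write-up: the expansion of $\exp\bigl(\tfrac{\sigma^2}{2}(\vx-\vmu_i)^\top\mSig_i^{-2}(\vx-\vmu_i)+\mathcal{O}(\sigma^4)\bigr)$ is only valid pointwise, with a remainder whose size depends on $\vx$, so the conclusion that a single $\sigma_0$ works for all $\vx\in\R^d$ requires more care than the expansion alone provides. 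Your argument --- writing the score as the convex combination $\sum_i w_i^u \vm_i^u$, differentiating exactly in $u=\sigma^2$, bounding $\partial_u\nabla\log p_{\sqrt{u}}$ by $a_p(1+\|\vx\|^3)$ uniformly in $u\ge 0$ via the Loewner monotonicity $(\mSig_i+u\mI_d)^{-k}\preceq\mSig_i^{-k}$, and integrating --- sidesteps this entirely and delivers the uniform-in-$\vx$ statement with a constant valid for every $\sigma>0$, not only for $\sigma\le\sigma_0$. The two proofs share the essential mechanism (the bounded responsibilities prevent the Gaussian ratios from blowing up at large $\|\vx\|$), but yours packages it as an exact identity plus the fundamental theorem of calculus rather than an asymptotic expansion, which is cleaner and strictly stronger.
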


\begin{proof}

We make a series expansion of $\nabla \log p_{\sigma}(\vx)$ when $\sigma$ goes to $0$. Recall that
\begin{align*}
    \mathcal{N}(\vx;\vmu_i, \mSig_i+ \sigma^2 \mI_d) = \frac{1}{\sqrt{\det(2\pi (\mSig_i+ \sigma^2 \mI_d))}} \exp{\left(  - \frac{1}{2} (\vx - \vmu_i)^T (\mSig_i+ \sigma^2 \mI_d)^{-1}(\vx - \vmu_i) \right)}.
\end{align*}
It is known that $\det{(\mSig_i+ \sigma^2 \mI_d)} = \det{(\mSig_i)} + \sigma^2 \text{Tr}{(\Com(\mSig_i)^T)} + \mathcal{O}(\sigma^4)$, with $\text{Tr}$ the trace of the matrix and $\Com(\mSig_i)$ the comatrix of the matrix $\mSig_i$. Then we have
\begin{align*}
    &\mathcal{N}(\vx;\vmu_i, \mSig_i+ \sigma^2 \mI_d) = \frac{1}{\sqrt{\det(2\pi (\mSig_i+ \sigma^2 \mI_d))}} \exp{\left(  - \frac{1}{2} (\vx - \vmu_i)^T (\mSig_i+ \sigma^2 \mI_d)^{-1}(\vx - \vmu_i) \right)} \\
    &= \frac{1}{\sqrt{\det{(2 \pi\mSig_i)}}} \left( 1 - \sigma^2 \frac{ \text{Tr}{(\Com(\mSig_i))}}{2 \det{(\mSig_i)}}  + \mathcal{O}(\sigma^4) \right) \\
    &\exp{\left(  - \frac{1}{2} (\vx - \vmu_i)^T (\mSig_i)^{-1}(\vx - \vmu_i) + \frac{\sigma^2}{2} (\vx - \vmu_i)^T (\mSig_i)^{-2}(\vx - \vmu_i) + \mathcal{O}(\sigma^4) \right)} \\
    &= \mathcal{N}(\vx;\vmu_i, \mSig_i) \left( 1 - \sigma^2 \frac{ \text{Tr}{(\Com(\mSig_i))}}{2 \det{(\mSig_i)}}  + \mathcal{O}(\sigma^4) \right) \left( 1 + \frac{\sigma^2}{2} (\vx - \vmu_i)^T (\mSig_i)^{-2}(\vx - \vmu_i) + \mathcal{O}(\sigma^4) \right) \\
    &= \mathcal{N}(\vx;\vmu_i, \mSig_i) \left( 1 + \frac{\sigma^2}{2} \left( (\vx - \vmu_i)^T (\mSig_i)^{-2}(\vx - \vmu_i) - \frac{ \text{Tr}{(\Com(\mSig_i))}}{\det{(\mSig_i)}} \right) + \mathcal{O}(\sigma^4) \right).
\end{align*}
We define $u_i(\vx) = \frac{1}{2} \left( (\vx - \vmu_i)^T (\mSig_i)^{-2}(\vx - \vmu_i) - \frac{ \text{Tr}{(\Com(\mSig_i))}}{\det{(\mSig_i)}} \right)$. For the score approximation, we get
\begin{align*}
    &\nabla \log p_{\sigma}(\vx) = - \frac{\sum_{i = 1}^p{\pi_i (\mSig_i + \sigma^2 \mI_d)^{-1} (\vx - \vmu_i) \mathcal{N}(\vx;\vmu_i, \mSig_i + \sigma^2 \mI_d)}}{\sum_{i = 1}^p{\pi_i \mathcal{N}(\vx;\vmu_i, \mSig_i+ \sigma^2 \mI_d)}} \\
    &= - \frac{\sum_{i = 1}^p{\pi_i (\mSig_i^{-1} + \sigma^2 \mSig_i^{-2} + \mathcal{O}(\sigma^4)) (\vx - \vmu_i) \mathcal{N}(\vx;\vmu_i, \mSig_i) \left( 1 + \sigma^2 u_i(\vx) + \mathcal{O}(\sigma^4) \right)}}{\sum_{i = 1}^p{\pi_i \mathcal{N}(\vx;\vmu_i, \mSig_i) \left( 1 + \sigma^2 u_i(\vx) + \mathcal{O}(\sigma^4) \right)}} \\
    &= - \sigma^2 \left( \frac{\sum_{i = 1}^p{\pi_i \left( \mSig_i^{-1} (\vx - \vmu_i) u_i(\vx) + \mSig_i^{-2} (\vx - \vmu_i) \right) \mathcal{N}(\vx;\vmu_i, \mSig_i)}}{\sum_{i = 1}^p{\pi_i \mathcal{N}(\vx;\vmu_i, \mSig_i)}} \right) \\ 
    & \qquad \qquad \qquad - \frac{\nabla p (\vx)}{p(\vx) \left(1 + \frac{\sigma^2}{p(\vx)} \sum_{i = 1}^p{\pi_i \mathcal{N}(\vx;\vmu_i, \mSig_i) u_i(\vx)} \right)}
    + \mathcal{O}(\sigma^4)\\
    &= \nabla \log p(\vx) - \sigma^2 c_p(\vx) + \mathcal{O}(\sigma^4),
\end{align*}
with
\begin{align*}
    c_p(\vx) = \frac{\sum_{i = 1}^p{\pi_i \left( \mSig_i^{-1} (\vx - \vmu_i) u_i(\vx) + \mSig_i^{-2} (\vx - \vmu_i) \right) \mathcal{N}(\vx;\vmu_i, \mSig_i)}}{p(\vx)} - \frac{\nabla p(\vx)}{p^2(\vx)} \sum_{i=1}^p{\pi_i \mathcal{N}(\vx;\vmu_i, \mSig_i) u_i(\vx)}.
\end{align*}

This demonstrates the following inequality
\begin{align}\label{eq:score_dl}
    \nabla \log p_{\sigma}(\vx) \underset{\sigma \to 0}{=} \nabla \log p(\vx) - \sigma^2 c_p(\vx) + \mathcal{O}(\sigma^4),
\end{align}

from which we can  deduce a pointwise convergence of $\nabla \log p_{\sigma}$ to $\nabla \log p$ with speed $\sigma^2$. Indeed, there exist ${c_p : \R^d \mapsto \R}$ and $\sigma_0 > 0$, such that for $\sigma \in ]0, \sigma_0]$,
\begin{align}\label{eq:score_dl2}
    \| \nabla \log p_{\sigma}(\vx) - \nabla \log p(\vx) \| \le  2 \sigma^2 \|c_p(\vx)\|,
\end{align}

Next, one can remark that $\forall j \in [1,p]$,
\begin{align*}
    \frac{\pi_j \mathcal{N}(\vx;\vmu_j, \mSig_j)}{p(\vx)} = \frac{\pi_j \mathcal{N}(\vx;\vmu_j, \mSig_j)}{\sum_{i=1}^p{\pi_i \mathcal{N}(\vx;\vmu_i, \mSig_i)}} = \frac{1}{1 + \frac{\sum_{i \neq j}{\pi_i \mathcal{N}(\vx;\vmu_i, \mSig_i)}}{\pi_j \mathcal{N}(\vx;\vmu_j, \mSig_j)}} \le 1.
\end{align*}

As a consequence, there exists $\alpha_p \ge 0$ such that
\begin{align}\label{eq:c_p_control}
    \|c_p(\vx)\| \le \alpha_p (\|\vx\|^3 + 1)
\end{align}

By combining \eqref{eq:score_dl2} and \eqref{eq:c_p_control}, we obtain that there exist $a_p$ and  $\alpha_0 > 0$, such that $\forall \vx \in \R^d$, $\forall \sigma \in ]0, \sigma_0]$,
\begin{align}
    \| \nabla \log p_{\sigma}(\vx) - \nabla \log p(\vx) \| \le  \sigma^2 a_p (\|\vx\|^3 + 1).
\end{align}
\end{proof}

A similar work can be done to evaluate the approximation done by our regularization $\fR_{\sigma}$.
\begin{proposition}\label{prop:conv_speed_APnP}
    If the prior $p$ is a Gaussian Mixture Model, then there exist $\sigma_0 > 0$ and $e_p \in \R^+$ such that $\forall \vx \in \R^d$ and $\sigma \in ]0, \sigma_0]$,
    \begin{align}
        \| \nabla \fR_{\sigma}(\vx) + \nabla \log p(\vx) \| \le \sigma e_p (\|\vx\|^2 + 1).
    \end{align}
\end{proposition}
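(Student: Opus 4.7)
The starting point is the identity
\[
\nabla \fR_\sigma(\vx) \;=\; -\,\eE_{\veps \sim \nN(0,\mI_d)}\!\bigl[\nabla \log p_\sigma(\vx+\sigma\veps)\bigr],
\]
which follows either from the convolution form (\ref{eq:new_regularization_convolution}) or, equivalently, from (\ref{eq:gradient_average_pnp_reg}) combined with Tweedie's identity $D_\sigma^\star(\tx)=\tx+\sigma^2\nabla\log p_\sigma(\tx)$ and $\eE[\tx]=\vx$. Thus SNORE's score is a Gaussian smoothing of the PnP score, and the problem reduces to showing that this smoothing is close to $-\nabla\log p(\vx)$.

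The plan is to split the target quantity into two pieces:
\[
\nabla \fR_\sigma(\vx)+\nabla\log p(\vx)
= -\,\underbrace{\eE_\veps\!\bigl[\nabla\log p_\sigma(\vx+\sigma\veps)-\nabla\log p(\vx+\sigma\veps)\bigr]}_{(A)}
\; - \;\underbrace{\eE_\veps\!\bigl[\nabla\log p(\vx+\sigma\veps)-\nabla\log p(\vx)\bigr]}_{(B)}.
\]
Term $(A)$ is the pointwise error between the noisy and clean scores, averaged over the noise. For it I would apply Proposition~\ref{prop:conv_speed_PnP} directly to the integrand: $\|\nabla\log p_\sigma(\tx)-\nabla\log p(\tx)\|\le\sigma^2 a_p(\|\tx\|^3+1)$, set $\tx=\vx+\sigma\veps$, use $\|\tx\|^3\le C(\|\vx\|^3+\sigma^3\|\veps\|^3)$, and invoke finite Gaussian moments $\eE\|\veps\|^k<\infty$. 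This yields $\|(A)\|\le\sigma^2 C_1(\|\vx\|^{q}+1)$ for a fixed integer $q$, and for $\sigma\le\sigma_0$ this is bounded by $\sigma\cdot\sigma_0 C_1(\|\vx\|^{q}+1)$.

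Term $(B)$ is the intrinsic smoothing error of $\nabla\log p$. Here I would use the mean value inequality:
\[
\|\nabla\log p(\vx+\sigma\veps)-\nabla\log p(\vx)\|\;\le\;\sigma\|\veps\|\cdot\sup_{t\in[0,1]}\bigl\|H\log p(\vx+t\sigma\veps)\bigr\|,
\]
and then establish the polynomial Hessian bound $\|H\log p(\vy)\|\le C_2(\|\vy\|^2+1)$ for the Gaussian mixture prior. The latter I would get by writing $\nabla\log p(\vy)=-\sum_i w_i(\vy)\mSig_i^{-1}(\vy-\vmu_i)$ with responsibilities $w_i(\vy)=\pi_i\nN(\vy;\vmu_i,\mSig_i)/p(\vy)\in[0,1]$, and differentiating. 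The derivative of $w_i$ is itself $w_i$ times an affine combination of the $\mSig_j^{-1}(\vy-\vmu_j)$'s (of order $\|\vy\|$), which when multiplied by $\mSig_i^{-1}(\vy-\vmu_i)$ produces the quadratic growth, while $\sum_i w_i\mSig_i^{-1}$ is uniformly bounded. After taking the expectation over $\veps$ (using $\eE\|\veps\|$ and $\eE\|\veps\|^3$ finite and splitting $\|\vx+t\sigma\veps\|^2\le 2\|\vx\|^2+2\sigma_0^2\|\veps\|^2$) this gives $\|(B)\|\le \sigma\,C_3(\|\vx\|^2+1)$.

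Combining the two bounds and choosing $e_p$ large enough (absorbing $\sigma_0$ and any polynomial degree mismatch between the two terms) delivers the stated inequality. The main technical obstacle is the Hessian bound for $\log p$ of a GMM: the responsibilities $w_i$ and the clean-score formula are not obviously differentiable with controlled constants, so the explicit computation of $H\log p$ via the responsibility derivatives, and the check that the emerging terms grow at most quadratically in $\vy$, is the real work. Everything else is routine (moment bounds for Gaussians and the pointwise estimate already proved in Proposition~\ref{prop:conv_speed_PnP}).
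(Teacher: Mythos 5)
Your proof is correct and follows essentially the same route as the paper's: the same identity $\nabla\fR_\sigma(\vx)=-\eE_{\tx\sim p_\sigma(\tx|\vx)}[\nabla\log p_\sigma(\tx)]$, the same two-term decomposition into the averaged noisy-vs-clean score error and the Gaussian-smoothing error of $\nabla\log p$, and the same use of Proposition~\ref{prop:conv_speed_PnP} plus Gaussian moment bounds for the first term. The one place you genuinely diverge is the smoothing term $(B)$: the paper localizes, using a Lipschitz bound for $\nabla\log p$ on a ball $\mathcal{B}(\vx,r)$ with constant $M_\vx=2\|\nabla^2\log p(\vx)\|$ and killing the complement with a Gaussian tail factor $\exp(-r^2/2\sigma^2)$ combined with sub-linearity of the clean score, whereas you run a global mean value inequality along the segment $[\vx,\vx+\sigma\veps]$ backed by a uniform quadratic Hessian bound $\|H\log p(\vy)\|\le C_2(\|\vy\|^2+1)$. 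That global bound is indeed valid for a GMM (the responsibility gradients satisfy $\|\nabla w_i\|\le C\,w_i(\|\vy\|+1)$, so the rank-one terms grow quadratically and $\sum_i w_i\mSig_i^{-1}$ is bounded), and the paper needs the very same quadratic bound at the end anyway to convert $M_\vx$ into $e_p(\|\vx\|^2+1)$; your version buys a cleaner argument that dispenses with the ball-splitting and the tail estimate, at the price of justifying the Hessian bound uniformly in $\vy$ rather than at the single point $\vx$.

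One caveat, which applies equally to the paper's own proof: the phrase ``absorbing any polynomial degree mismatch between the two terms'' into $e_p$ does not actually work uniformly over $\R^d$. Term $(A)$ contributes $\sigma^2 a_p(\|\vx\|^3+1)$, and $\sigma^2\|\vx\|^3$ is not dominated by $\sigma e_p(\|\vx\|^2+1)$ once $\|\vx\|\gtrsim e_p/(\sigma a_p)$, so the stated quadratic-in-$\|\vx\|$ bound cannot be obtained this way for all $\vx\in\R^d$. The honest conclusion of both arguments is a bound of the form $\sigma e_p(\|\vx\|^2+1)+\sigma^2 a_p'(\|\vx\|^3+1)$, which gives the claimed rate on any fixed compact (which is all that is used downstream) but not globally with exponent $2$. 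Since you reproduce the paper's step here rather than introduce a new error, this is a shared imprecision rather than a gap specific to your proposal.
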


\begin{proof}

To emphasize the dependence of $\fR$ in $\sigma$, we will denote it as $\fR_{\sigma}$ in the following computations. The behavior of our regularization gradient $\nabla \fR_{\sigma}(\vx) = - \eE_{\tx \sim p_{\sigma}(\tx|\vx)}\left(\nabla \log p_{\sigma}(\tx) \right)$ is
\begin{align*}
    \| \nabla \fR_{\sigma}(\vx) + \nabla \log p(\vx) \|  &= \|\eE_{\tx \sim p_{\sigma}(\tx|\vx)}\left(\nabla \log p_{\sigma}(\tx) \right) - \nabla \log p(\vx) \| \\
    &\le \|\eE_{\tx \sim p_{\sigma}(\tx|\vx)}\left(\nabla \log p_{\sigma}(\tx) - \nabla \log p(\tx) \right) \| + \| \eE_{\tx \sim p_{\sigma}(\tx|\vx)}\left(\nabla \log p(\tx) \right) -  \nabla \log p(\vx) \| \\
    &\le \sigma^2 \eE_{\tx \sim p_{\sigma}(\tx|\vx)}\left( c_p(\tx) \right) + \| \eE_{\tx \sim p_{\sigma}(\tx|\vx)}\left(\nabla \log p(\tx) \right) -  \nabla \log p(\vx) \| \\
    &\le \sigma^2 a_p \eE_{\tx \sim p_{\sigma}(\tx|\vx)}\left( \|\tx\|^3 + 1 \right) + \| \eE_{\tx \sim p_{\sigma}(\tx|\vx)}\left(\nabla \log p(\tx) \right) -  \nabla \log p(\vx) \|. \\
\end{align*}

By using that for $a, b \in \R^+$, $(a+b)^3 \le 4 (a^3 + b^3)$, we get the following inequalities $\eE_{\tx \sim p_{\sigma}(\tx|\vx)}\left( \|\tx\|^3 \right) \le \eE_{\tx \sim p_{\sigma}(\tx|\vx)}\left( \left(\|\tx - \vx\| + \|\vx\|\right)^3 \right) \le 4  \eE_{\tx \sim p_{\sigma}(\tx|\vx)}\left( \|\tx - \vx\|^3 + \|\vx\|^3\right)  = 8 d \sigma^{3} \sqrt{\frac{2}{\pi}} + 4 \|\vx\|^3$.
This leads to the following inequality
\begin{align}\label{eq:decomposition_score_GMM}
    \| \nabla \fR_{\sigma}(\vx) + \nabla \log p(\vx) \|    &\le \sigma^2 a_p \left(8 d \sigma^{3} \sqrt{\frac{2}{\pi}} + 4 \|\vx\|^3 + 1 \right) + \| \eE_{\tx \sim p_{\sigma}(\tx|\vx)}\left(\nabla \log p(\tx) \right) -  \nabla \log p(\vx) \|.
\end{align}

As $\nabla \log p$ is $\mathcal{C}^2$ in $\vx$, by the Taylor theorem, there exists $r > 0$ such that $\forall \vy, \vz \in \mathcal{B}(\vx,r)$
$$\| \nabla \log p(\vy) - \nabla \log p(\vz)\| \le M_{\vx} \|\vy - \vz\|,$$
with $M_{\vx} = 2 \|\nabla^2 \log p(\vx)\|$ and $\mathcal{B}(\vx,r) = \{\vy \in \R^d | \|\vy - \vx \| \le r\}$.

So we have
\begin{align}
    &\| \eE_{\tx \sim p_{\sigma}(\tx|\vx)}\left( \nabla \log p(\tx) \right) - \nabla \log p(\vx) \| = \| \int_{\R^d}{\nabla \log p(\tx) \mathcal{N}(\tx;\vx,\sigma^2 \mI_d) d\tx}  - \nabla \log p(\vx) \|\nonumber \\
    &\le \| \int_{\mathcal{B}(\vx,r)}{(\nabla \log p(\tx) - \nabla \log p(\vx)) \mathcal{N}(\tx;\vx,\sigma^2 \mI_d) d\tx} \| + \| \int_{\R^d \setminus \mathcal{B}(\vx,r)}{(\nabla \log p(\tx) - \nabla \log p(\vx)) \mathcal{N}(\tx;\vx,\sigma^2 \mI_d) d\tx} \| \nonumber\\
    &\le M_{\vx} \int_{\mathcal{B}(\vx,r)}{\| \tx - \vx \| \mathcal{N}(\tx;\vx,\sigma^2 \mI_d) d\tx} + \| \int_{\R^d \setminus \mathcal{B}(\vx,r)}{(\nabla \log p(\tx) - \nabla \log p(\vx)) \mathcal{N}(\tx;\vx,\sigma^2 \mI_d) d\tx} \| \nonumber\\
    &\le M_{\vx} \sigma  \int_{\R^d}{\| \tx \| \mathcal{N}(\tx;0,\mI_d) d\tx}  + \| \int_{\R^d \setminus \mathcal{B}(\vx,r)}{(\nabla \log p(\tx) - \nabla \log p(\vx)) \mathcal{N}(\tx;\vx,\sigma^2 \mI_d) d\tx} \| \label{ineq1}.
\end{align}
One can notice that $\nabla \log p$ is sub-linear, so there exists $b_p \ge 0$ such that $\forall \vx \in \R^d$, $\|\nabla \log p(\vx)\| \le b_p (\|\vx\| + 1)$. Moreover, $\int_{\R^d}{\| \tx \| \mathcal{N}(\tx;0,\mI_d) d\tx} \le \sqrt{\int_{\R^d}{\| \tx \|^2 \mathcal{N}(\tx;0,\mI_d) d\tx}} = \sqrt{d}$. Combining this property with relation~\eqref{ineq1}, we get
\begin{align*}
    \| \eE_{\tx \sim p_{\sigma}(\tx|\vx)}\left( \nabla \log p(\tx) \right) - \nabla \log p(\vx) \| 
    &\le M_{\vx} \sqrt{d} \sigma  + \int_{\R^d \setminus \mathcal{B}(\vx,r)}{(b_p \|\tx\| + b_p \|\vx\| +2) \mathcal{N}(\tx;\vx,\sigma^2 \mI_d) d\tx} \\
    &\le M_{\vx} \sqrt{d} \sigma  + \int_{\R^d \setminus \mathcal{B}(\vx,r)}{(b_p \|\tx\| + b_p \|\vx\| +2) \mathcal{N}(\tx;\vx,\sigma^2 \mI_d) d\tx} .
\end{align*}
Moreover on the set $\R^d \setminus \mathcal{B}(\vx,r)$, the inequality $\|\vx - \tx\|^2 \ge \frac{r^2}{2} + \frac{\|\vx - \tx\|^2}{2}$ holds. This leads to $ \mathcal{N}(\tx;\vx,\sigma^2 \mI_d) \le 2^{\frac{d}{2}} \exp{(-\frac{r^2}{2 \sigma^2})}  \mathcal{N}(\tx;\vx,2\sigma^2 \mI_d)$. Injecting this relation into the previous computation, we obtain
\begin{align*}
    \| \eE_{\tx \sim p_{\sigma}(\tx|\vx)}\left( \nabla \log p(\tx) \right) - \nabla \log p(\vx) \| 
    &\le M_{\vx} \sqrt{d} \sigma  + 2^{\frac{d}{2}} \exp{(-\frac{r^2}{2 \sigma^2})} \int_{\R^d \setminus \mathcal{B}(\vx,r)}{(b_p \|\tx\| + b_p \|\vx\| +2) \mathcal{N}(\tx;\vx,2\sigma^2 \mI_d) d\tx} \\ 
    &\le M_{\vx} \sqrt{d} \sigma  + 2^{\frac{d}{2}}  \exp{(-\frac{r^2}{2 \sigma^2})} \int_{\R^d}{(b_p\|\tx\| + b_p\|\vx\| +2) \mathcal{N}(\tx;\vx,2\sigma^2 \mI_d) d\tx} \\
    &\le M_{\vx} \sqrt{d} \sigma  + 2^{\frac{d}{2}}  \exp{(-\frac{r^2}{2 \sigma^2})} (b_p\sqrt{\|\vx\|^2 + b_p\sqrt{d} \sigma^2 } + b_p\|\vx\| + 2) \\
    &\le M_{\vx} \sqrt{d} \sigma  + \mathcal{O}(\sigma^4)\\
    &\le e_p (\|\vx\|^2 + 1) \sqrt{d} \sigma  + \mathcal{O}(\sigma^4).
\end{align*}
The last inequality holds because there exists $e_p \in \R^+$ such that $M_{\vx} = 2 \|\nabla^2 \log p(\vx)\| \le e_p (\|\vx\|^2 + 1)$.

Combining the above inequalities with Equation~\eqref{eq:decomposition_score_GMM}, we have that
\begin{align*}
    \| \nabla \fR_{\sigma}(\vx) + \nabla \log p(\vx) \| 
    &\le \sigma^2 a_p \left(8 d \sigma^{3} \sqrt{\frac{2}{\pi}} + 4 \|\vx\|^3 + 1 \right) + e_p (\|\vx\|^2 + 1) \sqrt{d} \sigma + \mathcal{O}(\sigma^4) \\
    &\le \sigma \tilde{e}_p (\|\vx\|^2 + 1)  + \sigma^2  a_p (4\|\vx\|^3 + 1) + \mathcal{O}(\sigma^4),
\end{align*}
with $\tilde{e}_p = e_p \sqrt{d}$.

By defining $d_p = 2 \tilde{e}_p$, we have demonstrated that there exist $\sigma_0 > 0$, $d_p \ge 0$ such that $\forall \vx \in \R^d$ and $\forall \sigma \in ]0, \sigma_0]$,
\begin{align}
    \| \nabla \fR_{\sigma}(\vx) + \nabla \log p(\vx) \| \le \sigma d_p (\|\vx\|^2 + 1).
\end{align}
\end{proof}

Proposition~\ref{prop:conv_speed_APnP} shows that, in a case of a GMM prior, our regularization approximates the prior score with a pointwise speed of $\sigma$. This speed of approximation is slower than for the traditional PnP (Proposition~\ref{prop:conv_speed_PnP}).

\subsubsection{Simulations}

In Figure~\ref{fig:1D_gaussian_mixture}, we display a non-trivial Gaussian Mixture distribution in 1D with three Gaussians. Using Equation~\eqref{eq:score_gaussian_mixture_closed_form}, we display $-\log p_{\sigma}$. In order to show $\fR_{\sigma}$, we use Equation~\eqref{eq:new_regularization} and the Euler method to approximate the integration on $\tx$. 

\begin{figure}[!ht]
    \centering
    \includegraphics[width=\textwidth]{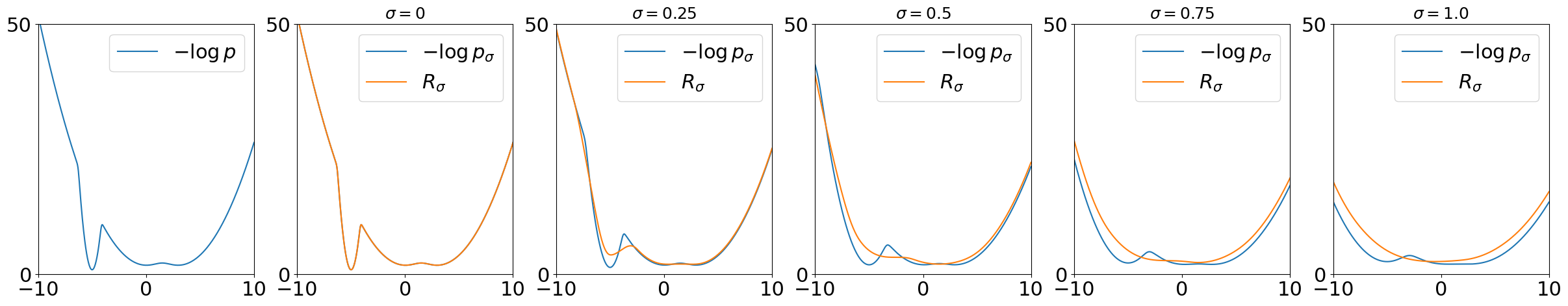}
    \caption{\textit{Leftmost:} Score of the prior $p$. \textit{Rightmost:} Values of $-\log p_{\sigma}$ and $\fR_{\sigma}$ for five values of $\sigma > 0$.}
    \label{fig:1D_gaussian_mixture}
\end{figure}

First, we observe that both $-\log p_{\sigma}$ and $\fR_{\sigma}$ converge to $-\log p$ when $\sigma \to 0$. We also see that $\fR_{\sigma}$ seems to converge to $-\log p$ more slowly that $-\log p_{\sigma}$ as suggested by Proposition~\ref{prop:conv_speed_PnP} and Proposition~\ref{prop:conv_speed_APnP}. Finally, $\fR_{\sigma}$ seems to be "more convex" so easier to minimize and it has a similar minimum that $-\log p_{\sigma}$. This simulation suggests that $\fR_{\sigma}$ is an easier potential to minimize. 

\section{Proofs of section~\ref{sec:convergence_analysis}}\label{sec:proof_convergence_result}
\subsection{Proof of critical Point Analysis}
First, we recall a part of the proof of~\citep[Proposition 1]{laumont2022maximumaposteriori} where it has been proved under some assumptions on the prior distribution that the posterior score approximation converges uniformly to the posterior score on every compact of $\R^d$.
\begin{proposition}\citep[Proposition 1]{laumont2022maximumaposteriori}\label{prop:score_approx_pnp}
    Under Assumption~\ref{ass:prior_regularity}, for $\cK$ a compact of $\R^d$, $\nabla \log p_{\sigma}(\cdot|\vy)$ converges uniformly to $\nabla \log p(\cdot|\vy)$ on $\cK$, 
    \begin{equation*}
        \lim_{\sigma \to 0} \sup_{\cK} \|\nabla \log p_{\sigma}(\cdot|\vy) - \nabla \log p(\cdot|\vy)\| = 0.
    \end{equation*}
\end{proposition}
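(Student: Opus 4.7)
The plan is to first reduce the claim to a statement about the noisy prior alone. With the noisy posterior defined (as in~\citet{laumont2022maximumaposteriori}) by $p_{\sigma}(\vx|\vy) \propto p(\vy|\vx) p_{\sigma}(\vx)$, the normalizing constants do not depend on $\vx$ and the likelihood term $\log p(\vy|\vx)$ is common to both posteriors, so taking the $\vx$-gradient of the difference gives
\[
\nabla \log p_{\sigma}(\vx|\vy) - \nabla \log p(\vx|\vy) = \nabla \log p_{\sigma}(\vx) - \nabla \log p(\vx).
\]
It therefore suffices to show that $\nabla \log p_{\sigma}$ converges uniformly to $\nabla \log p$ on $\cK$.

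\textbf{Convolution estimates.} Writing $p_{\sigma} = p \ast \nN_{\sigma}$ and using Assumption~\ref{ass:prior_regularity}\textbf{(a)} to differentiate under the integral, I get $\nabla p_{\sigma} = (\nabla p) \ast \nN_{\sigma}$. A standard mollifier argument — split $\int (p(\vx-\vy)-p(\vx))\nN_{\sigma}(\vy)\,d\vy$ into $\|\vy\|\le R$ and $\|\vy\|>R$, control the first piece by uniform continuity of $p$ on a compact enlargement of $\cK$ and the second by the Gaussian tail together with $\|p\|_{\infty}<\infty$ — yields
\[
\sup_{\vx \in \cK}|p_{\sigma}(\vx)-p(\vx)| \longrightarrow 0, \qquad \sup_{\vx \in \cK}\|\nabla p_{\sigma}(\vx)-\nabla p(\vx)\| \longrightarrow 0
\]
as $\sigma\to 0$. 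Only the boundedness and continuity of $p$ and $\nabla p$ in Assumption~\ref{ass:prior_regularity}\textbf{(a)} are used here; the sub-polynomial score bound in Assumption~\ref{ass:prior_regularity}\textbf{(b)} is not required.

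\textbf{Passing to the quotient.} The last step converts uniform convergence of the numerator and denominator into uniform convergence of $\nabla \log p_{\sigma} = \nabla p_{\sigma}/p_{\sigma}$. Since $p$ is continuous and strictly positive (Assumption~\ref{ass:prior_regularity}\textbf{(a)}), $c:=\min_{\cK} p > 0$, and by the uniform convergence above there exists $\sigma_0>0$ with $p_{\sigma}\ge c/2$ on $\cK$ for all $\sigma\in(0,\sigma_0]$. The algebraic identity
\[
\frac{\nabla p_{\sigma}}{p_{\sigma}} - \frac{\nabla p}{p} \;=\; \frac{\nabla p_{\sigma} - \nabla p}{p_{\sigma}} \;+\; \nabla p \cdot \frac{p - p_{\sigma}}{p\, p_{\sigma}},
\]
combined with the uniform lower bounds on $p$ and $p_{\sigma}$ and with $\|\nabla p\|_{\infty}<\infty$, makes both terms vanish uniformly on $\cK$.

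\textbf{Main obstacle.} The convolution estimates are routine; the delicate point is the uniform lower bound $p_{\sigma}\ge c/2$ on $\cK$, which is what allows one to divide. This is precisely why Assumption~\ref{ass:prior_regularity}\textbf{(a)} insists that $p$ takes values in $]0,+\infty[$: if $p$ were allowed to vanish on $\cK$, the quotient step would fail and a finer argument (e.g.\ Tweedie's formula together with a careful treatment of the support of $p$) would be required.
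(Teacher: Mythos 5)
Your proposal is correct and follows essentially the same route as the paper's proof: reduce to the prior score via Bayes' rule, establish uniform convergence of $p_{\sigma}\to p$ and $\nabla p_{\sigma}\to\nabla p$ on a compact by the standard mollifier splitting, and conclude via the quotient identity using the uniform lower bound $p_{\sigma}\ge m_p/2$ furnished by positivity and continuity of $p$. Your observation that only Assumption~\ref{ass:prior_regularity}\textbf{(a)} is needed here is also consistent with the paper's argument.
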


\begin{proof}
By the Bayes' theorem, $\nabla \log p_{\sigma}(\cdot|\vy) = \nabla \log p(\vy|\cdot) + \nabla \log p_{\sigma}$. As a consequence, Proposition~\ref{prop:score_approx_pnp} is equivalent to show that the score converges uniformly on $\cK$.

For $f \in \mathrm{C}(\R^d,\R^p)$, with $p \in \N$ and $\|f\|_{\infty}<+\infty$, we define for $\vx \in \R^d$
\begin{align*}
    f_{\sigma}(\vx) = \left(f \ast \mathcal{N}_{\sigma}\right)(\vx) 
    = \int_{\R^d}{f(\tx- \tx) \mathcal{N}(\vx ; 0, \sigma^2 \mI_d) d\tx}.
\end{align*}
For $\epsilon > 0$, there exists $R \in \R^+$ such that 
\begin{align*}
    \int_{\R^d \setminus \mathcal{B}(0,R)}{\|f(\vx - \sigma \tx) - f(\vx)\| \mathcal{N}(\tx; 0, \mI_d)d\tx} \le 2 \|f\|_{\infty} \int_{\R^d \setminus \mathcal{B}(0,R)}{\mathcal{N}(\tx; 0, \mI_d)d\tx} \le \frac{\epsilon}{2}.
\end{align*}
Then $\tilde{\cK} = \cK + \mathcal{B}(0,R)$ (Minkowski sum) is compact, so $f$ is uniformly continuous on $\tilde{\cK}$. There exists $\xi > 0$ such that
\begin{align*}
  \forall \sigma \in [0,\xi], \forall \tx \in \mathcal{B}(0,R), \forall \vx \in \cK, \quad  \|f(\vx - \sigma \tx) - f(\vx)\| \le \frac{\epsilon}{2}.
\end{align*}

We can deduce for $\vx \in \cK, \sigma \in [0, \xi]$ that
\begin{align*}
    \|f_{\sigma}(\vx) - f(\vx)\| &\le \int_{\R^d}{\|f(\vx - \tx) - f(\vx)\| \mathcal{N}(\tx; 0, \sigma^2 \mI_d)d\tx} \\
    &\le \int_{\R^d}{\|f(\vx - \sigma \tx) - f(\vx)\| \mathcal{N}(\tx; 0, \mI_d)d\tx} \\
    &\le \int_{\mathcal{B}(0,R)}{\|f(\vx - \sigma \tx) - f(\vx)\| \mathcal{N}(\tx; 0, \mI_d)d\tx} + \int_{\R^d \setminus \mathcal{B}(0,R)}{\|f(\vx - \sigma \tx) - f(\vx)\| \mathcal{N}(\tx; 0, \mI_d)d\tx} \\
    &\le \epsilon.
\end{align*}

So we have the uniform convergence of $f_{\sigma}$ to $f$ on $\cK$: $\forall \epsilon >0$, there exists $\xi>0$ such that $\forall \sigma \in [0, \xi]$
\begin{align*}
    \sup_{\vx \in \cK}\|f_{\sigma}(\vx) - f(\vx)\| &\le \epsilon.
\end{align*}

Applying this result with $f = p$ and $f = \nabla p$ (because $\nabla p_{\sigma} = \nabla (p \ast \nN_{\sigma}) = (\nabla p) \ast \nN_{\sigma}$, we finally get
\begin{align*}
    \sup_{\cK} \|\nabla \log p_{\sigma} - \nabla \log p\| 
    &= \sup_{\cK} \left\|\frac{\nabla p_{\sigma}}{p_{\sigma}} - \frac{\nabla p}{p} \right\|
    = \sup_{\cK} \left\|\frac{(\nabla p_{\sigma}- \nabla p) p + \nabla p (p- p_{\sigma})}{p_{\sigma}p}\right\|.
\end{align*}
We define $m_p = \min\{\inf_{\vx \in \cK}{p(\vx)}\} > 0$, because $\sup_{\cK} \|p- p_{\sigma}\| \to 0$, there exists $\sigma_1 > 0$ such that for $0< \sigma \le \sigma_1$, $\forall \vx \in \cK$, $p_{\sigma}(x) > \frac{m_p}{2}$.
Thus for $0< \sigma \le \sigma_1$,

\begin{align*}
    \sup_{\cK} \|\nabla \log p_{\sigma} - \nabla \log p\| 
    &\le \frac{2(\sup_{\cK} \|\nabla p_{\sigma}- \nabla p\| ) \|p\|_{\infty} + \|\nabla p\|_{\infty} (\sup_{\cK} \|p- p_{\sigma}\|)}{m_p^2} \to 0
\end{align*}

Thus we have prove Propositon~\ref{prop:score_approx_pnp}.

\end{proof}

\subsubsection{Proof of Proposition~\ref{prop:score_approx_apnp}}

We define $\cL = \cK + \mathcal{B}(0,1)$. By Proposition~\ref{prop:score_approx_pnp}, $\nabla \log p_{\sigma}$ converges uniformly to $\nabla \log p$ on $\cL$.

For $\epsilon > 0, \vx \in \cK$,
\begin{align*}
    &\|-\fR_{\sigma}(\vx) - \nabla \log p(\vx)\| = \|\int_{\R^d}{\nabla \log p_{\sigma}(\vx + \zeta) \nN(\zeta; 0, \sigma^2 \mI_d)} d\zeta - \nabla \log p(\vx)\| \\
    &\le\int_{\R^d}{\|\nabla \log p_{\sigma}(\vx + \zeta) - \nabla \log p(\vx)\| \nN(\zeta; 0, \sigma^2 \mI_d)} d\zeta \\
    &\le\int_{\R^d}{\|\nabla \log p_{\sigma}(\vx + \zeta) - \nabla \log p(\vx + \zeta)\| \nN(\zeta; 0, \sigma^2 \mI_d)} d\zeta + \int_{\R^d}{\|\nabla \log p(\vx + \zeta) - \nabla \log p(\vx)\| \nN(\zeta; 0, \sigma^2 \mI_d)} d\zeta \\
    &\le\int_{\mathcal{B}(0,1)}{\|\nabla \log p_{\sigma} - \nabla \log p\|_{\infty, \cL} \nN(\zeta; 0, \sigma^2 \mI_d)} d\zeta + \int_{\R^d \setminus \mathcal{B}(0,1)}{\left(\|\nabla \log p_{\sigma}(\vx + \zeta)\| + \|\nabla \log p(\vx + \zeta)\|\right) \nN(\zeta; 0, \sigma^2 \mI_d)} d\zeta \\
    &+ \int_{\R^d}{\|\nabla \log p(\vx + \zeta) - \nabla \log p(\vx)\| \nN(\zeta; 0, \sigma^2 \mI_d)} d\zeta.
\end{align*}
Because $\cL$ is compact and $\nabla \log p$ is continuous on $\bar{\cK}$, $\nabla \log p$ is uniformly continuous on $\cL$. So there exists $1 \ge \mu > 0$ such that $\forall \vx, \vy \in \cL$, if $\|\vx - \vy \|\le \mu$,
\begin{align}\label{eq:uniform_continuous}
    \|\nabla \log p(\vx) - \nabla \log p(\vy)\| \le \epsilon.
\end{align}

Then by using Equation~\eqref{eq:uniform_continuous}, Assumptions~\ref{ass:prior_regularity}\textbf{(b)} and Assumptions~\ref{ass:prior_score_approx}, we have
\begin{align*}
    &\|\nabla \fR_{\sigma}(\vx) + \nabla \log p(\vx)\| \\
    &\le (\sup_{\cL} \|\nabla \log p_{\sigma} - \nabla \log p\| ) \int_{\mathcal{B}(0,1)}{\nN(\zeta; 0, \sigma^2 \mI_d)} d\zeta + \int_{\R^d \setminus \mathcal{B}(0,1)}{\left(B \sigma^{\beta}(1+\|\vx + \zeta\|^r) + A (1+\|\vx + \zeta\|^q)\right) \nN(\zeta; 0, \sigma^2 \mI_d)} d\zeta \\
    &+ \epsilon \int_{\mathcal{B}(0,\mu)}{\nN(\zeta; 0, \sigma^2 \mI_d)} d\zeta + \int_{\R^d \setminus \mathcal{B}(0,\mu)}{(A (1+\|\vx + \zeta\|^q) + A (1+\|\vx\|^q)) \nN(\zeta; 0, \sigma^2 \mI_d)} d\zeta \\
    &\le \sup_{\cL} (\|\nabla \log p_{\sigma} - \nabla \log p\|) + \epsilon + \int_{\R^d \setminus \mathcal{B}(0,\mu)}{\left(B \sigma^{\beta}(1+\|\vx +\zeta\|^r) + 2 A (1+\|\vx + \zeta\|^q) + A (1+\|\vx\|^q)\right) \nN(\zeta; 0, \sigma^2 \mI_d)} d\zeta,
\end{align*}
where in the last inequality we have used that $\mu \le 1$. Then $\|\vx + \zeta \| \le \|\vx\| + \|\zeta \| \le R + \|\zeta \|$, where $R = \sup_{\vx \in \cK}{\|\vx\|} < + \infty$ because $\cK$ is compact. Moreover on the set $\R^d \setminus \mathcal{B}(0,\mu)$, the inequality $\|\zeta\|^2 \ge \frac{\mu^2}{2} + \frac{\|\zeta\|^2}{2}$ holds. This leads to $ \mathcal{N}(\zeta;0,\sigma^2 \mI_d) \le 2^{\frac{d}{2}} \exp{(-\frac{\mu^2}{2 \sigma^2})} \mathcal{N}(\zeta;0,2\sigma^2 \mI_d)$; and then to the following inequality
\begin{align*}
    &\|\nabla \fR_{\sigma}(\vx) + \nabla \log p(\vx)\| \le \|\nabla \log p_{\sigma} - \nabla \log p\|_{\infty, \cL} + \epsilon \\
    &+ \int_{\R^d \setminus \mathcal{B}(0,\mu)}{\left(B \sigma^{\beta}(1+(R + \|\zeta\|)^r) + 2 A (1+(R + \|\zeta\|)^q) + A (1+R^q)\right) 2^{\frac{d}{2}} \exp{(-\frac{\mu^2}{2 \sigma^2})} \mathcal{N}(\zeta;0,2\sigma^2 \mI_d) d\zeta} \\
    &\le \sup_{\cL}(\|\nabla \log p_{\sigma} - \nabla \log p\|) + \epsilon + (\sigma^{\beta} C_{r, B, R}+ C_{q, A, R}) \exp{(-\frac{\mu^2}{2 \sigma^2})} ,
\end{align*}
with $C_{r, B, R}$ a constant depending on  $r$, $B$ and $R$; and $C_{q, A, R}$ a constant depending on $q$, $A$ and $R$. 
By the uniform convergence of $\nabla \log p_{\sigma}$ to $\nabla \log p$ on $\cL$, there exists $\sigma_0 >0$ such that $\forall \sigma \le \sigma_0$, $\sup_{\cL} \|\nabla \log p_{\sigma} - \nabla \log p\| \le \epsilon$. Then with the polynomial-exponential behavior, there exists, $\sigma_1 > 0$, such that $\forall \sigma \le \sigma_1$, $(\sigma^{\beta} C_{r, B, R} + C_{q, A, R} )\exp{(-\frac{\mu^2}{2 \sigma^2})} \le \epsilon$. Finally, for $\sigma \le \min(\sigma_0, \sigma_1)$,
\begin{align*}
    \|\nabla \fR_{\sigma}(\vx) + \nabla \log p(\vx)\| \le 3 \epsilon.
\end{align*}

\subsubsection{Proof of Proposition~\ref{prop:critical_point_accumulate_values}}

We will follow the same structure as the proof of \citep[Proposition 1]{laumont2022maximumaposteriori} and use our Proposition~\ref{prop:score_approx_apnp}. For $\vx \in \mathbf{S}_{\cK}$, by definition there exist $(\sigma_n)_{n\in \N} \in \mathbf{E}$, $\vx_n \in \mathbf{S}_{\sigma_n, \cK}$ and $\phi : \N \to \N$ strictly increasing such that $\vx_{\phi(n)} \to \vx$. By Proposition~\ref{prop:score_approx_apnp}, $\nabla \fR_{\sigma}$ converges uniformly to $-\nabla \log p$ on $\cK$. So $\nabla \fF(\cdot, \vy) + \alpha \nabla \fR_{\sigma}$ converges uniformly to $\nabla \fF(\cdot, \vy) - \alpha \nabla \log p = \nabla \fJ$ on $\cK$. Because $\forall n \in \N, \vx_{\phi(n)} \in \cK$, we have that $\nabla \fF(\vx_{\phi(n)}, \vy) + \alpha \nabla \fR_{\sigma_{\phi(n)}}(\vx_{\phi(n)}) \to \nabla \fJ(\vx)$. On the other hand, as $\forall n \in \N$, because $\vx_{\phi(n)} \in \sS_{\sigma_{\phi(n)}, \cK}$, $\nabla \fF(\vx_{\phi(n)}, \vy) + \alpha \nabla \fR_{\sigma_{\phi(n)}}(\vx_{\phi(n)}) = 0$, we have $\nabla \fJ(\vx) = 0$, which shows that $\vx \in \mathbf{S}^{\star}_{\cK}$.

\subsubsection{Proof of Proposition~\ref{prop:convergence_unbiased}}\label{sec:proof_unbiased}
We thank Gersende Fort that find an error in the previous proofs of Proposition~\ref{prop:convergence_unbiased}-\ref{prop:biais_convergence} and help us to correct it.

We will apply the result \citep[Theorem 2.1, (ii)]{tadic2017asymptotic}. The \citep[Assumption 2.1]{tadic2017asymptotic} is verified by Assumption~\ref{ass:step_size_decreas}.
First we define $\xi_k = \nabla \fF(\vx_k, \vy) + \alpha \nabla \log p_{\sigma}(\vx_k + \zeta_k) - \nabla \fJ_{\sigma}(x_k) = \alpha(\nabla \log p_{\sigma}(\vx_k + \zeta_k)-\nabla \log p(\vx_k))$, with $\alpha >0, \zeta_k \sim \nN(0,\sigma^2 \mI_d)$. By definition we have $\eE(\xi_k|\mathcal{F}_k) = 0$ and 
\begin{align*}
    \vx_{k+1} = \vx_k - \delta_k (\nabla \fJ_{\sigma}(x_k) + \xi_k).
\end{align*}

Also, $p_{\sigma}$ is $\mathcal{C}^{\infty}$ by convolution with a Gaussian and then $\log p_{\sigma}$ is also $\mathcal{C}^{\infty}$. 
Again, by convolution, $\fR_{\sigma}$ is clearly $\mathcal{C}^{\infty}$. By Assumption~\ref{ass:data_fidelity_reg}, $\fJ = \fF(\cdot, \vy) + \fR_{\sigma}$ is $\mathcal{C}^{\infty}$. So \citep[Assumption 2.3.b]{tadic2017asymptotic} is verified.

Now, we will verify that~\citep[Assumption 2.2]{tadic2017asymptotic} is verified in our case.
We define the martingale
\begin{equation*}
    M_n = \sum_{i = 1}^n \delta_i \xi_i.
\end{equation*}

We will show that $M_n$ converges almost surely on the set $\{\sup_{k\in \N} \|\vx_k\| < +\infty\}$ which implies~\citep[Assumption 2.2]{tadic2017asymptotic}. We have

\begin{align}
    \Prb{M_n \text{ converges}, \sup_{k \in \N} \|\vx_k\| < +\infty} &= \lim_{q \to +\infty} \Prb{M_n \text{ converges}, \sup_{k \in \N} \|\vx_k\| \le q} \nonumber\\
    &= \lim_{q \to +\infty} \Prb{\sum_{i = 1}^n \delta_i \xi_i \text{ converges}, \sup_{k \in \N} \|\vx_k\| \le q} \nonumber\\
    &= \lim_{q \to +\infty} \Prb{\sum_{i = 1}^n \delta_i \xi_i \mathbbm{1}_{\|\vx_{i}\|\le q} \text{ converges}, \sup_{k \in \N} \|\vx_k\| \le q}.\label{eq:martingale}
\end{align}

Therefore we define the martingale
\begin{equation*}
    \bar M_n = \sum_{i = 1}^n \delta_i \xi_i \mathbbm{1}_{\|\vx_{i}\|\le q}.
\end{equation*}

Next we observe that $\bar M_n$ is zero-mean. Indeed we have
\begin{align*}
    &\eE\left( \bar M_n\right) = \eE\left( \eE\left( \bar M_n | \mathcal{F}_{n}\right)\right) = \eE\left( \bar M_{n-1} + \eE\left( \delta_n \xi_{n} \mathbbm{1}_{\|\vx_{n}\| \le q} | \mathcal{F}_{n}\right)\right)= \eE\left( \bar M_{n-1} \right),
\end{align*}
with $\mathcal{F}_{n}$ the filtration adapted to the sequence $\vx_n$.
Since $\eE\left( \bar M_{1} \right) = 0$, we get that $\forall n \in \N$, $\eE\left( \bar M_{n} \right) = 0$.

Moreover, $\bar M_n$ is square-integrable:
\begin{align}
    &\eE\left(\|\bar M_n\|^2 \right) = \eE\left( \eE\left( \|\bar M_n\|^2 | \mathcal{F}_{n}\right)\right) \nonumber\\
    &=\eE\left(\|\bar M_{n-1}\|^2 + 2 \eE\left( \langle \delta_n \xi_n \mathbbm{1}_{\|\vx_{n}\| \le q}, \bar M_{n-1} \rangle | \mathcal{F}_{n}\right) \right) +  \eE\left(\eE\left( \| \delta_n \xi_n \mathbbm{1}_{\|\vx_{n}\| \le q} \|^2 | \mathcal{F}_{n}\right) \right) \nonumber\\
    &=\eE\left(\|\bar M_{n-1}\|^2 + \eE\left( \| \delta_n \xi_n \mathbbm{1}_{\|\vx_{n}\| \le q} \|^2 | \mathcal{F}_{n}\right) \right).\label{eq:computation_square_integrable_m_n_bar}
\end{align}

Then, by Assumption~\ref{ass:prior_score_approx}, we have
\begin{align*}
    \eE(\|\xi_n \mathbbm{1}_{\|\vx_{n}\| \le q}\|^2 | \mathcal{F}_{n}) &= \alpha^2 \eE(\| \nabla \log p_{\sigma}(\vx_n + \zeta_n) - \eE_{\zeta \sim \nN(0,\sigma^2 \mI_d)}\left( \nabla \log p_{\sigma}(\vx_n + \zeta) \right)\|^2 \mathbbm{1}_{\|\vx_{n}\| \le q} | \mathcal{F}_{n}) \\
    &= \alpha^2 \left( \eE(\| \nabla \log p_{\sigma}(\vx_n + \zeta) \|^2\mathbbm{1}_{\|\vx_{n}\| \le q} | \mathcal{F}_{n}) - \|\eE\left( \nabla \log p_{\sigma}(\vx_n + \zeta) \right)\|^2\mathbbm{1}_{\|\vx_{n}\| \le q} | \mathcal{F}_{n}\right) \\
    &\le \alpha^2 \eE(\| \nabla \log p_{\sigma}(\vx_n + \zeta) \|^2\mathbbm{1}_{\|\vx_{n}\| \le q} | \mathcal{F}_{n}) \\
    &\le \alpha^2 \eE(B^2 \sigma^{2 \beta} \left( 1 + \|\vx_n + \zeta \|^r \right)^2 \mathbbm{1}_{\|\vx_{n}\| \le q} | \mathcal{F}_{n}) \\
    &\le \alpha^2 B^2 \sigma^{2 \beta} \eE(\left( 1 + (q + \|\zeta \|)^r \right)^2 | \mathcal{F}_{n})\\
    &:= C_{r, \sigma, d, q},
\end{align*}
where $C_{r, \sigma, d, q}$ is a constant independent of $k$ and $x_k$.

By injecting the previous inequality into equation~\eqref{eq:computation_square_integrable_m_n_bar}, we get
\begin{align*}
\eE\left[\|\bar M_n\|^2 \right] \le \eE\left[\|\bar M_{n-1}\|^2 \right] + C_{r, \sigma, d, q} \delta_n^2.
\end{align*}
By induction, we obtain
\begin{align*}
    &\eE\left[\|\bar M_n\|^2 \right]  \le C_{r, \sigma, d, q} \sum_{i=1}^n \delta_i^2.
\end{align*}

Moreover, we have that
\begin{align*}
    \sum_{i=1}^{\infty} \eE\left[ \|\delta_n \xi_n \mathbbm{1}_{\|x_{n}\| \le q}\|^2 |\mathcal{F}_{n} \right] \le C_{r, \sigma, d, q} \sum_{i=1}^{\infty} \delta_n^2 < +\infty.
\end{align*}
Therefore, by~\citep[Theorem 2.15]{hall2014martingale}, almost surely the martingale $\bar M_n$ converges. 
Then, for any $q\in \N$,
\begin{align*}
    \Prb{\bar M_n \text{ converges}, \sup_{k \in \N}\|\vx_k\| \le q} = \Prb{\sup_{k \in \N}\|\vx_k\| \le q},
\end{align*}
and we get from relation~\eqref{eq:martingale}
\begin{align*}
    \Prb{M_n \text{ converges}, \sup_{k \in \N}\|\vx_k\| < + \infty} &= \lim_{q \to +\infty} \Prb{\sup_{k \in \N}\|\vx_k\| \le q} \\
    &= \Prb{\sup_{k \in \N}\|\vx_k\| < +\infty}.
\end{align*}
The previous equality implies that
\begin{align*}
    \Prb{M_n \text{ converges} \middle| \sup_{k \in \N}\|\vx_k\| < + \infty} = 1.
\end{align*}
So, almost surely on $\{\sup_{k \in \N}\|\vx_k\| < + \infty\}$, the martingale $M_n$ converges and~\citep[Assumption 2.2]{tadic2017asymptotic} is verified.

We conclude the proof by applying \citep[Theorem 2.1. (ii)]{tadic2017asymptotic} with $\eta = 0$.

\subsubsection{Proof of Proposition~\ref{prop:convergence_manifold_hypothesis}}\label{sec:proof_convergence_under_manifold}

We only need to notice that Assumption~\ref{ass:manifold} implies Assumption~\ref{ass:prior_score_approx} and apply Proposition~\ref{prop:convergence_unbiased}.
One can directly apply~\citep[Lemma C.1]{debortoli2023convergence} but we give a proof here for the sake of clarity. Observe that 
\begin{equation*}
    p_{\sigma}(\vx) = (p \star \nN_{\sigma})(\vx) = \int_{\R^d}{p(\vy) \nN_{\sigma}(\vx - \vy) d\vy} = \int_{\mathcal{B}(0, R)}{p(\vy) \nN_{\sigma}(\vx - \vy) d\vy} .
\end{equation*}
By taking $R > 0$ such that $\mathcal{M} \subset \mathcal{B}(0, R)$, we get
\begin{align*}
    \|\nabla \log p_{\sigma}(\vx)\| &= \left\|\frac{\nabla p_{\sigma}(\vx)}{p_{\sigma}(\vx)}\right\| 
    = \frac{1}{p_{\sigma}(\vx)} \left\|\int_{\mathcal{B}(0, R)}{p(\vy) \frac{\vx - \vy}{\sigma^2} \nN_{\sigma}(\vx - \vy) d\vy} \right\|\\
    &\le \frac{1}{p_{\sigma}(\vx)} \int_{\mathcal{B}(0, R)}{p(\vy) \frac{\|\vx - \vy\|}{\sigma^2} \nN_{\sigma}(\vx - \vy) d\vy} \\
    &\le \frac{\|\vx\| + R}{\sigma^2},
\end{align*}
so Assumption~\ref{ass:prior_score_approx} holds with $B = \max(1, R)$, $\beta = -2$ and $r = 1$.

\subsubsection{Proof of Proposition~\ref{prop:biais_convergence}}
We can define $\eta_k$ by $\eta_k = \eE(\xi_k)$ and $\gamma_k = \xi_k - \eE(\xi_k)$. So we have $\xi_k = \gamma_k + \eta_k$ and $\eE(\gamma_k) = 0$.
We will apply~\citep[Theorem 2.1. (ii)]{tadic2017asymptotic}. Assumption 2.1. and 2.3.b. of this paper are already verified because of Assumption~\ref{ass:step_size_decreas}, Assumption~\ref{ass:data_fidelity_reg}, and Assumption~\ref{ass:denoiser_approx}.

Now, we are going to prove that Assumption 2.2. is verified. We start by proving that $\limsup_{k \to +\infty} \|\eta_k\| <+\infty$ almost surely on $\{\sup_{k \in \N} \|\vx_k\| <+\infty\}$. In fact, for $R > 0$
\begin{align*}
    \|\eta_k\| &= \|\eE(\xi_k)\| = \|\eE( \frac{\alpha}{\sigma^2} \left( D_{\sigma}(\vx_k + \zeta_k) - \vx_k - \zeta_k \right) - \frac{\alpha}{\sigma^2} \eE_{\zeta}\left( D^{\star}_{\sigma}(\vx_k + \zeta) - \vx_k - \zeta \right)\| \\
    &=\|\frac{\alpha}{\sigma^2}\eE\left( D_{\sigma}(\vx_k + \zeta) - D^{\star}_{\sigma}(\vx_k + \zeta)\right)\| \\
    &=\frac{\alpha}{\sigma^2} \|\int_{\zeta \in \R^d} (D_{\sigma}(\vx_k + \zeta) - D^{\star}_{\sigma}(\vx_k + \zeta) )\nN(\zeta;0,\sigma^2\mI_d) d\zeta\| \\
    &\le \frac{\alpha}{\sigma^2} \Bigg(\int_{\|\vx_k +\zeta\| \le R} \|D_{\sigma}(\vx_k + \zeta) - D^{\star}_{\sigma}(\vx_k + \zeta)\| \nN(\zeta;0,\sigma^2\mI_d) d\zeta \\
    &~~+ \int_{\|\vx_k +\zeta\| > R} \|D_{\sigma}(\vx_k + \zeta) - D^{\star}_{\sigma}(\vx_k + \zeta)\| \nN(\zeta;0,\sigma^2\mI_d) d\zeta\Bigg)\\
    &\le \frac{\alpha}{\sigma^2} \left(M(R) + \int_{\|\vx_k +\zeta\| > R} \left(\|D_{\sigma}(\vx_k + \zeta)\| + \|D^{\star}_{\sigma}(\vx_k + \zeta)\| \right) \nN(\zeta;0,\sigma^2\mI_d) d\zeta\right)\\
    &\le \frac{\alpha}{\sigma^2} \left(M(R) + \int_{\|\vx_k +\zeta\| > R} \left(2\|\vx_k + \zeta\| +2 C \sigma \right) \nN(\zeta;0,\sigma^2\mI_d) d\zeta\right)\\
    &\le \frac{\alpha}{\sigma^2} \left(M(R) + \int_{\|\vx_k +\zeta\| > R} \left(2\|\vx_k\| + 2 \|\zeta\| +2 C \sigma \right) \nN(\zeta;0,\sigma^2\mI_d) d\zeta\right).\\
\end{align*}

We define $\cK_{\infty}$ the random bounded set of the iterates that exists on the set of realization $\{\sup_{k \in \N} \|\vx_k\| <+\infty\}$.
For $R = R_{\cK_{\infty}} + 1$, with $R_{\cK_{\infty}}$ the radius of $\cK_{\infty}$, if $\|\vx_k +\zeta\| > R$, then $\|\zeta\| > 1$ because $\|\vx_k\| \le R_{\cK_{\infty}} \le R$. This implies that 
\begin{align*}
    \|\eta_k\| &\le \frac{\alpha}{\sigma^2} \left(M(R) + \int_{\|\zeta\| > 1} \left(2R + 2 \|\zeta\| +2 C \sigma \right) \nN(\zeta;0,\sigma^2\mI_d) d\zeta\right) < +\infty.\\
\end{align*}
So, we get that the $\limsup_{k \to \infty} \|\eta_k\|$ is almost surely finite on the set of realizations $\{\sup_{k \in \N} \|\vx_k\| <+\infty\}$, which prove the first part of~\citep[Assumption 2.2]{tadic2017asymptotic}.

Then, we bounded $\eta_k$ on the set of realization $\Lambda_{\cK}$, with $R = 1+ R_{\cK}$ and $R_{\cK}$ the radius of $\cK$. If $\|\zeta\| > 1$, $\|\zeta\|^2 \le \frac{1}{2} + \frac{\|\zeta\|^2}{2}$. So $\nN(\zeta;0,\sigma^2\mI_d) \le 2^{-\frac{d}{2}} \nN(\zeta;0,2\sigma^2\mI_d) \exp\left(-\frac{1}{4\sigma^2} \right)$. So
\begin{align*}
    \|\eta_k\| &\le \frac{\alpha}{\sigma^2} \left(M(R) + 2^{-\frac{d}{2}} \exp\left(-\frac{1}{4\sigma^2}\right) \int_{\|\zeta\| > 1} \left(2\|\vx_k\| + 2 \|\zeta\| +2 C \sigma \right)  \nN(\zeta;0,2\sigma^2\mI_d)  d\zeta\right)\\
    &\le \frac{\alpha}{\sigma^2} \left(M(R) + 2^{-\frac{d}{2}} \exp\left(-\frac{1}{4\sigma^2}\right) \left(2R_{\cK} + 2 \sqrt{d} \sigma +2 C \sigma \right) \right).\\
\end{align*}
This shows that 
\begin{equation*}
    \eta = \limsup_{k \in \N}{\|\eta_k\|} \underset{\sigma \to 0}{\le} \frac{\alpha}{\sigma^2} M(R) + o(\sigma).\\
\end{equation*}

Now, we verify the part of~\citep[Assumption 2.2]{tadic2017asymptotic} on $\gamma_k$, with a similar proof than in Proposition~\ref{prop:convergence_unbiased}.
We define the martingale
\begin{equation*}
    M_n = \sum_{i = 1}^n \delta_i \gamma_i.
\end{equation*}

We will show that $M_n$ converges almost surely on the set $\{\sup_{k\in \N} \|\vx_k\| < +\infty\}$ which implies~\citep[Assumption 2.2]{tadic2017asymptotic}.

\begin{align*}
    &\Prb{M_n \text{ converges}, \sup_{k \in \N} \|\vx_k\| < +\infty} = \lim_{q \to +\infty} \Prb{M_n \text{ converges}, \sup_{k \in \N} \|\vx_k\| \le q} \\
    &= \lim_{q \to +\infty} \Prb{\sum_{i = 1}^n \delta_i \gamma_i \text{ converges}, \sup_{k \in \N} \|\vx_k\| \le q} \\
    &= \lim_{q \to +\infty} \Prb{\sum_{i = 1}^n \delta_i \gamma_i \mathbbm{1}_{\|\vx_{i}\|\le q} \text{ converges}, \sup_{k \in \N} \|\vx_k\| \le q}
\end{align*}

Therefore we define the martingale
\begin{equation*}
    \bar M_n = \sum_{i = 1}^n \delta_i \gamma_i \mathbbm{1}_{\|\vx_{i}\|\le q}.
\end{equation*}

$\bar M_n$ is zero-mean, in fact
\begin{align*}
    &\eE\left( \bar M_n\right) = \eE\left( \eE\left( \bar M_n | \mathcal{F}_{n}\right)\right) = \eE\left( \bar M_{n-1} + \eE\left( \delta_n \gamma_{n} \mathbbm{1}_{\|\vx_{n}\| \le q} | \mathcal{F}_{n}\right)\right) \\
    &= \eE\left( \bar M_{n-1} \right).
\end{align*}
Since $\eE\left( \bar M_{1} \right) = 0$, we get that $\forall n \in \N$, $\eE\left( \bar M_{n} \right) = 0$.

Moreover, $\bar M_n$ is square-integrable. In fact
\begin{align}
    &\eE\left(\|\bar M_n\|^2 \right) = \eE\left( \eE\left( \|\bar M_n\|^2 | \mathcal{F}_{n}\right)\right) \nonumber\\
    &=\eE\left(\|\bar M_{n-1}\|^2 + 2 \eE\left( \langle \delta_n \gamma_n \mathbbm{1}_{\|\vx_{n}\| \le q}, \bar M_{n-1} \rangle | \mathcal{F}_{n}\right) \right) +  \eE\left(\eE\left( \| \delta_n \gamma_n \mathbbm{1}_{\|\vx_{n}\| \le q} \|^2 | \mathcal{F}_{n}\right) \right) \nonumber\\
    &=\eE\left(\|\bar M_{n-1}\|^2 + \eE\left( \| \delta_n \gamma_n \mathbbm{1}_{\|\vx_{n}\| \le q} \|^2 | \mathcal{F}_{n}\right) \right),\label{eq:computation_square_integrable_m_n_bar_2}
\end{align}

However, using the definition of $\xi_k$ and Assumption~\ref{ass:denoiser_sub_linear}, we have
\begin{align*}
    \eE(\|\gamma_k\|^2 \mathbbm{1}_{\|\vx_{k}\| \le q}|\mathcal{F}_{k}) &= \eE(\|\xi_k - \eE(\xi_k)\|^2|\vx_k) \\
    &= \frac{\alpha^2}{\sigma^4} \eE\left( \| D_{\sigma}(\vx_k + \zeta_k) - \vx_k - \zeta_k  - \eE_{\zeta}\left( D_{\sigma}(\vx_k + \zeta) - \vx_k - \zeta \right) \|^2\mathbbm{1}_{\|\vx_{k}\| \le q}|\mathcal{F}_{k} \right) \\
    &= \frac{\alpha^2}{\sigma^4} \eE\left( \| D_{\sigma}(\vx_k + \zeta)-\zeta- \eE_{\zeta}\left( D_{\sigma}(\vx_k + \zeta) - \zeta \right)   \|^2\mathbbm{1}_{\|\vx_{k}\| \le q} |\mathcal{F}_{k}\right) \\
    &= \frac{\alpha^2}{\sigma^4} \left( \eE\left( \| D_{\sigma}(\vx_k + \zeta) - \zeta   \|^2\mathbbm{1}_{\|\vx_{k}\| \le q}|\mathcal{F}_{k}\right) - \|\eE_{\zeta}\left( D_{\sigma}(\vx_k + \zeta)-\zeta |\mathcal{F}_{k}\right)\|^2 \mathbbm{1}_{\|\vx_{k}\| \le q}  \right)\\
    &\le \frac{\alpha^2}{\sigma^4} \eE\left( \| D_{\sigma}(\vx_k + \zeta) - \zeta   \|^2\mathbbm{1}_{\|\vx_{k}\| \le q}|\mathcal{F}_{k}\right)\\
    &\le \frac{2\alpha^2}{\sigma^4} \left(\eE\left( \| D_{\sigma}(\vx_k + \zeta) \|^2\mathbbm{1}_{\|\vx_{k}\| \le q} |\mathcal{F}_{k}\right)  +\eE\left( \| \zeta   \|^2|\mathcal{F}_{k}\right) \right)\\
    &\le \frac{2\alpha^2}{\sigma^4} \left(\eE\left( \left(\|\vx_k + \zeta\| + C \sigma \right)^2 \mathbbm{1}_{\|\vx_{k}\| \le q}|\mathcal{F}_{k}\right)  +d \sigma^2 \right)\\
    &\le \frac{2\alpha^2}{\sigma^4} \left( 2 \eE\left( \left(q + \|\zeta\|\right)^2|\mathcal{F}_{k}\right) + 2 C^2 \sigma^2  +d \sigma^2 \right)\\
    &\le \frac{2\alpha^2}{\sigma^4} \left( 4 q^2 + 4 d \sigma^2 + 2 C^2 \sigma^2  +d \sigma^2 \right) < + \infty\\
    &:= C_{2}.
\end{align*}

By injecting the previous inequality in equation~\eqref{eq:computation_square_integrable_m_n_bar_2}, we get
\begin{align*}
 \eE\left(\|\bar M_n\|^2 \right) \le \eE\left(\|\bar M_{n-1}\|^2 \right) + C_2 \delta_n^2
\end{align*}
By induction, we get
\begin{align*}
    &\eE\left(\|\bar M_n\|^2 \right)  \le C_2 \sum_{i=1}^n \delta_i^2.
\end{align*}

Moreover, we have that,
\begin{align*}
    \sum_{n=1}^{\infty} \eE\left( \|\delta_n \gamma_n \mathbbm{1}_{\|x_{n}\| \le q}\|^2 |\mathcal{F}_{n} \right) \le C_2 \sum_{n=1}^{\infty} \delta_n^2 < +\infty.
\end{align*}
Therefore, by~\citep[Theorem 2.15]{hall2014martingale}, almost surely the martingale $\bar M_n$ converges.

Then, for any $q\in \N$,
\begin{align*}
    \Prb{\bar M_n \text{ converges}, \sup_{k \in \N}\|\vx_k\| \le q} = \Prb{\sup_{k \in \N}\|\vx_k\| \le q},
\end{align*}
And we get
\begin{align*}
    \Prb{M_n \text{ converges}, \sup_{k \in \N}\|\vx_k\| < + \infty} &= \lim_{q \to +\infty} \Prb{\sup_{k \in \N}\|\vx_k\| \le q} \\
    &= \Prb{\sup_{k \in \N}\|\vx_k\| < +\infty}.
\end{align*}
The previous equality implies that
\begin{align*}
    \Prb{M_n \text{ converges} \middle| \sup_{k \in \N}\|\vx_k\| < + \infty} = 1.
\end{align*}
So, almost surely on $\{\sup_{k \in \N}\|\vx_k\| < + \infty\}$, the martingale $M_n$ converges and Assumption 2.2. of \citep{tadic2017asymptotic} is verified.
Finally, we obtain Proposition~\ref{prop:biais_convergence} by applying Theorem~\citep[Theorem 2.1. (ii)]{tadic2017asymptotic}.

\begin{figure}[!ht]
    \centering
    \includegraphics[width=\textwidth]{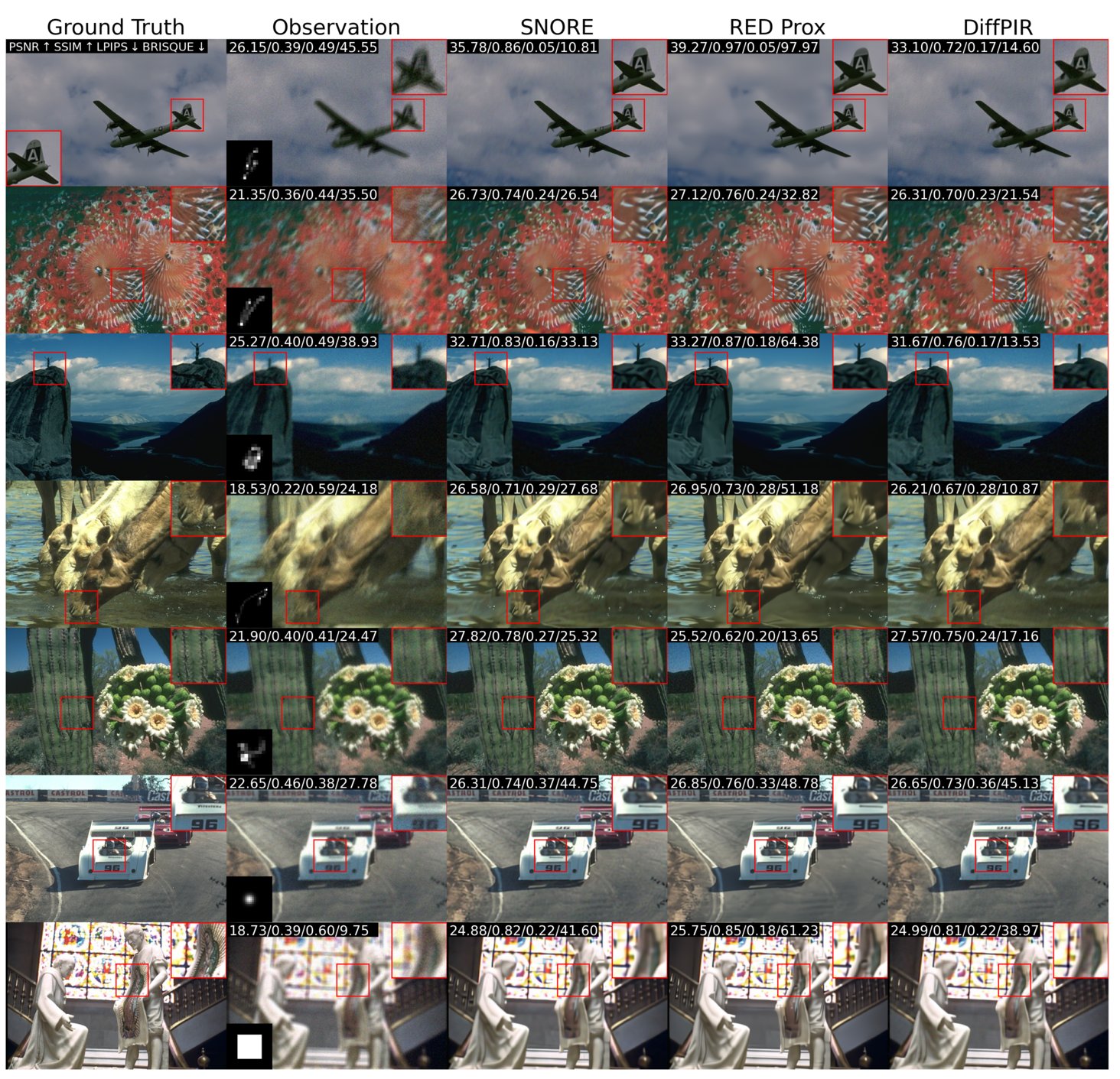}
    \caption{Results of SNORE, RED and DiffPIR algorithms on deblurring tasks with various blur kernels and various images from the dataset CBSD10. The input noise is $\sigma_{\vy} = 10/255$. Kernels have various sizes and have been resized for clarity of the figure. The last kernel, uniform of size $9 \times 9$ is plotted on a black background for visibility.}
    \label{fig:sets_of_result_deblurring}
\end{figure}

\section{Discussion on Assumptions}\label{sec:discussion_assumptions}
In this section, we explicitly list and comment all the assumptions used for results presented in Section~\ref{sec:convergence_analysis}. We detail in which case each assumption is verified, especially in the PnP context. A special discussion on the boundedness assumption is postponed in Appendix~\ref{sec:boundness_sequence}.

\begin{itemize}
    \item Assumption~\ref{ass:prior_regularity}~\textbf{(a)} implies that the score is non-localized with mass $p > 0$ everywhere, and the density is bounded and Lipschitz. Moreover the density is $C^1$, which is necessary to define the score $\nabla \log p$.  
    Note that image distributions are typically supported on a compact set, so their score $\nabla \log p$ is not defined everywhere. Thus, this hypothesis is not realistic in imaging context but it is necessary 
    to realize an analysis of the score $\nabla \log p$. However, if $\log p$ is taken among classical image models (e.g. related to Total Variation or Tychonov regularization), it is likely that Assumption~\ref{ass:prior_regularity}~\textbf{(a)} is verified.
    In particular, we conjecture that this hypothesis is true for the PnP regularization with explicit potential proposed in~\citep{hurault2022gradient}.
    \item Assumption~\ref{ass:prior_regularity}~\textbf{(b)} is verified for Gaussian, Gaussian Mixture or Cauchy distributions. This is a technical assumption to allow a critical point analysis. This hypothesis is non-restrictive because $q$ can be arbitrarily large.
    \item Assumption~\ref{ass:prior_score_approx} is implied by Assumption~\ref{ass:manifold} with $r = 1$ and $\beta = -2$~\citep{debortoli2023convergence}. We provided a simplified proof of this result in Section~\ref{sec:proof_convergence_under_manifold}. \citet{debortoli2023diffusion} have also proved Assumption~\ref{ass:prior_score_approx}  with $r = 1$ and $\beta = 0$ under the Assumption~\ref{ass:prior_regularity}\textbf{(b)} with $q = 1$ and an extra assumption difficult to verify: there exist $m_0 > 0$ and $d_0 \ge 0$ such that $\forall \vx \in \R^d$, $\langle \nabla \log p(\vx), \vx  \rangle \le - m_0 \|\vx\|^2 + d_0 \|\vx\|$. However, to our knowledge, there is no general study of the score approximation $\nabla \log p_{\sigma}$ for a general smooth distribution or a non-bounded and non-smooth distribution. Assumption~\ref{ass:prior_score_approx} seems non-restrictive because $r$ can be arbitrarily large and $\beta$ highly negative.
    \item Assumption~\ref{ass:step_size_decreas} is standard in stochastic gradient descent~\citep{Metivier1984, Benaim1999, tadic2017asymptotic}. A stochastic process imposes a decreasing step-size to ensure convergence. If the step-size is constant, the noise (with a non-decreasing variance) that is added at each step makes the process explore even after a large number of steps, so there is no convergence to expect. In practice, we take a constant step-size to converge faster. In fact,  we run the algorithm for only a small number of iterations (few hundreds).
    \item Assumption~\ref{ass:data_fidelity_reg} makes the data-fidelity $\fF$ smooth. It is verified for every linear inverse problem with a Gaussian noise. Assumption~\ref{ass:data_fidelity_reg} can also be verified in the case of a non-Gaussian noise, for instance with a Fischer-Tippett noise~\citep{Deledalle_2017}. However, some specific data-fidelity terms do not verified this assumption, such as the ones related to salt and pepper noise~\citep{chan2005salt, Nikolova2004} or Laplacian noise~\citep{Huang2017}, or other data-fidelity used for image segmentation~\citep{Chan2006}.
    \item Assumption~\ref{ass:manifold} has been used in the same form as presented by~\citet{debortoli2023convergence}. This assumption is validated by real image distributions, typically encoded within a finite range, such as $[0,1]$ or $[0,255]$. Traditionally, the manifold hypothesis asserts that the distribution of images is confined to a low-dimensional manifold.~\citet{fefferman2013testing} conducted tests to evaluate this hypothesis, and more recently,~\citet{brown2023verifying} have focused on analyzing image distributions.
    However, within the scope of our study, we refer to the \textit{manifold hypothesis} as defined in~\citep{debortoli2023convergence}, suggesting that images are supported within a compact set. This diverges from the conventional manifold hypothesis, constituting a relaxed version of the original assumption.
    \item Assumption~\ref{ass:denoiser_approx} quantifies the uniform distance between the exact and the inexact denoiser. If the activation function of the denoiser is $\mathcal{C}^{\infty}$, then the denoiser is $\mathcal{C}^{\infty}$. 
    \citet{laumont2022maximumaposteriori} make a similar assumption. In the literature, other types of assumptions have been made to control the error of the inexact MMSE  (see for instance~\citep[Assumption 5]{shoushtari2023prior}). We choose to take this form of assumption because \citet[Proposition 4]{Laumont_2022_pnpula} have proved that Assumption~\ref{ass:denoiser_approx} can be ensured if the denoiser is learned with the Noise2Noise loss~\citep{Lehtinen2018Noise2Noise}.
    \item Assumption~\ref{ass:denoiser_sub_linear} is natural for a well trained denoiser. It means that the denoiser $D_{\sigma}$ can only modify the image by a quantity bounded by the level of noise $\sigma$. As a practical example, a bounded denoiser~\citep[Definition~1]{chan2016plugandplay} verifies Assumption~\ref{ass:denoiser_sub_linear}. However, Assumption~\ref{ass:denoiser_sub_linear} is difficult to verify in practice (especially outside the training domain). To our knowledge there is no theoretical argument stronger than intuition to support this assumption. However, Assumption~\ref{ass:denoiser_sub_linear} is necessary to analyze the stability of SNORE with an inexact denoiser, presented in Proposition~\ref{prop:biais_convergence}.
\end{itemize}

\section{Additional Experiments}\label{sec:more_exp}

In this section, we provide more details about our experiments. First we present general technical details. Then, we present more experiment for image deblurring including parameters influence and uncertainty of SNORE. Moreover, we give all parameter setting for image deblurring and inpainting. Finally, a preliminary experiment on image super-resolution is shown.

\paragraph{Metrics} We use four metrics to evaluate our results. Structural SIMilarity (SSIM) and Peak Signal to Noise Ratio (PSNR) are two common distortions metrics. BRISQUE is a no-reference metric based on natural scene statistics that was proved to correlate well with human perception~\citep{Mittal2012}. This metric gives a score between $0$ (best) and $100$ (worst). We use the Python library ``brisque", with which we sometimes observe some incoherence with some outlier outputs (smaller than $0$ or larger than $100$). These outputs where rare so we kept this standard implementation for reproducibility purpose.
LPIPS~\citep{zhang2017learning} is another perceptual metric that compares the original image and the reconstructed one by measuring their differences in terms of deep features. As suggested in~\cite{ren2023multiscale}, looking at such perceptual metrics is relevant as they are correlated with human perception. We use the Python library "lpips" to compute this metric.

\paragraph{Denoiser}
We use the denoiser proposed by~\citep{hurault2022gradient} based on the DRUNet~\citep{zhang2021plugandplay} trained on a dataset of natural images composed of Berkeley segmentation dataset (CBSD)~\citep{Martin2001}, Waterloo Exploration dataset~\citep{ma2017waterloo}, DIV2K dataset~\citep{Agustsson2017} and Flick2K~\citep{lim2017enhanced}. We take the training weights of~\citep{hurault2022gradient}. In order to better analyze the advantages and drawbacks of other PnP and RED approaches, this denoiser is used for all our experiments and comparisons.

\paragraph{RED Prox} 
We name \textit{RED Prox} (Algorithm~\ref{alg:RED_Prox}) the splitting algorithm with a gradient descent step on the regularization and a proximal step on the data-fidelity. Just like RED (Algorithm~\ref{alg:RED}), RED Prox minimizes Problem~\eqref{eq:minimization_problem} with the regularization $\fP_{\sigma}$, defined in Equation~\eqref{eq:prior_approx}. That kind of splitting algorithm is commonly used in the PnP field~\cite{Ryu2019}. As proposed in~\citep{hurault2022gradient}, thanks to the special form of gradient-step denoiser, we implement a backtracking procedure for RED and RED Prox.

\begin{algorithm}
\caption{RED Prox}\label{alg:RED_Prox}
\begin{algorithmic}[1]
\STATE \textbf{input:} $\vx_0 \in \R^d$, $\sigma > 0$, $\alpha > 0$, $\delta > 0$, $N \in \N$
\FOR{$k = 0, 1, \dots, N-1$}
    \STATE $\vz_k \gets \vx_k - \frac{\alpha \delta}{\sigma^2} \left(\vx_k - D_{\sigma}(\vx_k) \right)$ 
    \STATE $\vx_{k+1} \gets \text{Prox}_{\delta \fF}(\vz_k)$ 
\ENDFOR
\end{algorithmic}
\end{algorithm}

\begin{figure}[!ht]
    \centering
    \begin{subfigure}{0.48\textwidth}
        \includegraphics[width=\textwidth]{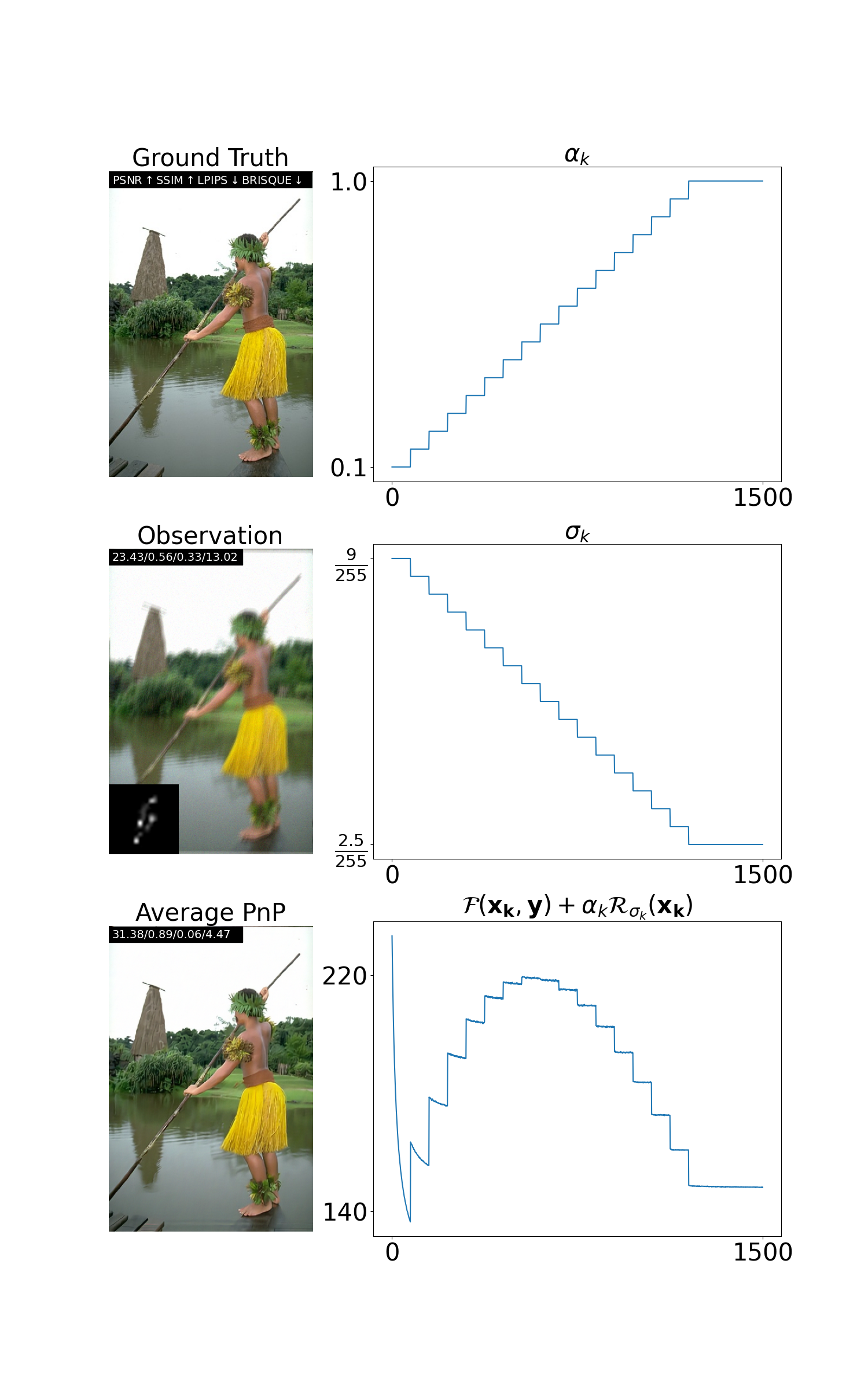}
    \end{subfigure}
    \hfill
    \begin{subfigure}{0.48\textwidth}
        \includegraphics[width=\textwidth]{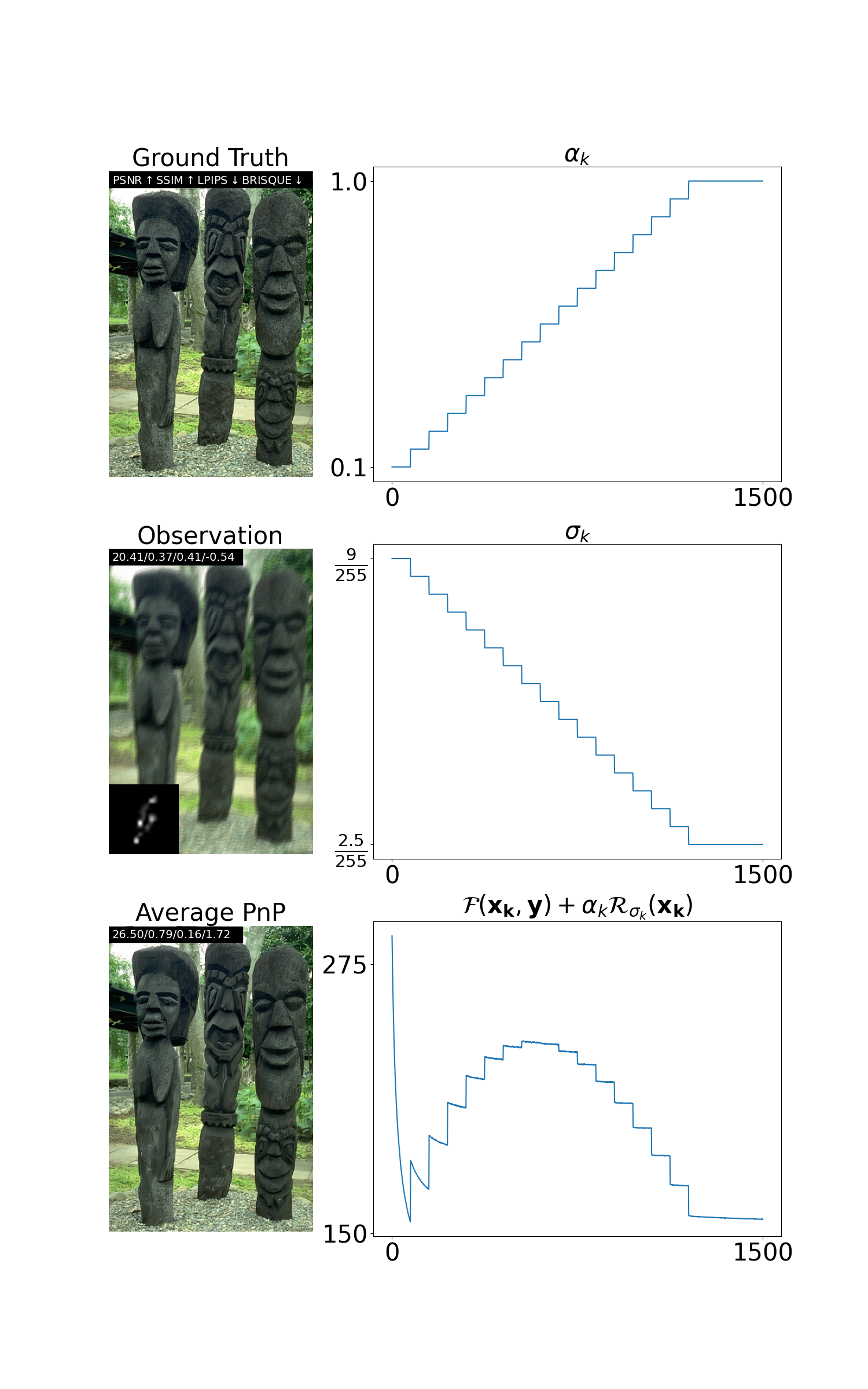}
    \end{subfigure}
    \caption{Image deblurring with an input noise level $\sigma_{\vy} = 5/255$, annealing parameters $\alpha_k$ and $\sigma_k$ and the potential $\fF + \alpha_k \fR_{\sigma_k}$ which is optimized by Annealed SNORE algorithm (Algorithm~\ref{alg:Annealead_SNORE}). There are $16$ annealing levels in the first $1200$ iterations and $300$ iterations with the final parameters. Note that the potential is minimized at each level for fixed $\alpha_k$ and $\sigma_k$. The optimization can be understood in two steps. In the first one, $\sigma_k$ is large and $\alpha_k$ is small to deblur the image. Then a refinement is realized with $\alpha_k$ large and $\sigma_k$ small, to generate a realistic image.}
    \label{fig:restauration_annealing_parameters}
\end{figure}

\paragraph{SNORE Prox} 
We name SNORE Prox the algorithm detailed in Algorithm~\ref{alg:Annelead_SNORE_Prox}. A Proximal descent step is computed on the data-fidelity instead of a gradient-descent step. Note that our convergence analysis of Section~\ref{sec:convergence_analysis} does not apply to this algorithm. We test this algorithm for a comparison with the original SNORE (Algorithm~\ref{alg:Annealead_SNORE}). Experimentally, we observe that both algorithms reach similar performance (see Table~\ref{table:deblurring} or Table~\ref{table:inpainting}).

\begin{algorithm}
\caption{Annealed SNORE Prox}\label{alg:Annelead_SNORE_Prox}
\begin{algorithmic}[1]
\STATE \textbf{input:} $\vx_0 \in \R^d$, $m \in \N$, $\delta > 0$, $\sigma_0 > \sigma_1 > \dots > \sigma_{m-1} \approx 0$, $\alpha_0, \alpha_1, \dots, \alpha_{m-1} > 0$, $N_0,  N_1, \dots, N_{m-1} \in \N$
\FOR{$i = 0, 1, \dots, m-1$}
    \FOR{$k = 0, 1, \dots, N_i-1$}
        \STATE $\veps \gets \mathcal{N}(0, \mI_d)$
        \STATE $\tx_k \gets \vx_k + \sigma_i \veps$
        \STATE $\vz_k \gets \vx_k  - \frac{\alpha_i \delta}{\sigma_i^2} \left(\vx_k - D_{\sigma_i}(\tx_k) \right)$
        \STATE $\vx_{k+1} \gets \text{Prox}_{\delta \fF}(\vz_k)$
    \ENDFOR
\ENDFOR
\end{algorithmic}
\end{algorithm}

\paragraph{DiffPIR}
We compare our algorithm to state-of-the-art methods, including a Diffusion Model. Among existing methods in the field of Diffusion Models, we choose to test DiffPIR~\citep{Zhu_2023_CVPR}. 
DiffPIR is a recent algorithm which makes a connection between PnP and Diffusion Models. 
We use the implementation of the Python library DeepInverse modified to add a time $t_{\text{start}} < T$ such as proposed by the authors~\citep[Section 4.4]{Zhu_2023_CVPR}.
We need to add this parameter $t_{\text{start}}$ as  the denoiser is not trained to generate relevant outputs for highly noisy images (compared to neural networks used for diffusion models), since the model has only been trained for noise with standard deviations in the set $[0, 50]/255$.

\begin{algorithm}
\caption{DiffPIR~\citep{Zhu_2023_CVPR}}\label{alg:DiffPIR}
\begin{algorithmic}[1]
\STATE \textbf{input:} $D_{\sigma}$, $T > 0$, $\vy \in \R^q$, $0 < t_{\text{start} < T}$, $\zeta > 0$,  $(\beta_t)_{0<t<T}$, $\lambda > 0$
\STATE \text{Initialize }$\veps_{t_{\text{start}}} \sim \nN(0, \mI_d)$, \text{pre-calculate }$(\bar{\alpha}_t)_{0<t<T}$, $(\sigma_t)_{0<t<T}$ \text{ and } $(\rho_t)_{0<t<T}$
\STATE $x_{t_{\text{start}}} = \sqrt{\bar{\alpha}_{\text{start}}} \vy + \sqrt{1 -\bar{\alpha}_{\text{start}} } \veps_{t_{\text{start}}}$
\FOR{$t = t_{\text{start}}, t_{\text{start}}-1, \dots, 1$}
    \STATE $x_0^t \gets D_{\sigma_t}(x_t)$
    \STATE $\hat{x_0}^t \gets \text{Prox}_{2 \fF(\cdot, \vy) / \rho_t}(x_0^t)$
    \STATE $\hat{\veps} \gets \left( x_t - \sqrt{\bar{\alpha}_t}  \hat{x_0}^t \right) / \sqrt{1 - \bar{\alpha}_t}$
    \STATE $\veps_t \gets \nN(0, \mI_d)$
    \STATE $x_{t_1} \gets \sqrt{\bar{\alpha}_t}  \hat{x_0}^t + \sqrt{1 - \bar{\alpha}_t} \left( \sqrt{1 - \zeta} \hat{\veps} + \sqrt{\zeta} \veps_t \right)$
\ENDFOR
\end{algorithmic}
\end{algorithm}
The parameter schemes $(\bar{\alpha}_t)_{0<t<T}$, $(\sigma_t)_{0<t<T}$  and $(\rho_t)_{0<t<T}$ are described in~\citep{Zhu_2023_CVPR}. For DiffPIR, we choose the  following parameters for all inverse problems: $\zeta = 0.8$, $T = 1000$, $t_{\text{start}} = 200$ and $\lambda = 0.13$.

\paragraph{PnP SGD}
We compare our algorithm to an other stochastic PnP method, PnP SGD~\citep{laumont2022maximumaposteriori}, which approximates the maximum of the Posterior Law of Problem~\ref{eq:minimization_problem}.
In the implementation of the method, we need to add an other parameter in PnP SGD, $\beta > 0$, to control the power of the additive noise at each iteration and optimize the performance of the algorithm. Adding the $\beta>0$ parameter does not change the analysis of the method. As noticed by~\citet{laumont2022maximumaposteriori}, keeping a fixed step-size allows us to obtain the best performance, so we decided to keep the step-size $\delta > 0$ fixed. We give the pseudo-code of PnP SGD in Algorithm~\ref{alg:PnP_SGD} and the used parameters for deblurring in Table~\ref{table:parameters_deblurring}. We do not succeed to make the method competitive for image inpainting so we do not include in this paper the output of PnP SGD for inpainting.

\begin{algorithm}
\caption{PnP SGD}\label{alg:PnP_SGD}
\begin{algorithmic}[1]
\STATE \textbf{input:} $\vx_0 \in \R^d$, $\sigma > 0$, $\alpha > 0$, $\beta > 0$, $\delta > 0$, $N \in \N$
\FOR{$k = 0, 1, \dots, N-1$}
    \STATE $\vz_k \sim \nN(0, \Id)$
    \STATE $\vx_{k+1} \gets \vx_k -  \delta \nabla \fF(\vx_k) - \frac{\alpha \delta}{\sigma^2} \left(\vx_k - D_{\sigma}(\vx_k) \right) + \beta \delta \vz_k$ 
\ENDFOR
\end{algorithmic}
\end{algorithm}

\paragraph{Computational time}\label{paragraph:computational_time}
On Table~\ref{table:computational_time}, we compare the computational time of various methods. 
For deblurring, SNORE and SNORE Prox are slow compare to other methods. This is due to the number of iterations require at each parameters level to converge, leading to a large number of iteration ($1500$). 
We observe on Figure~\ref{fig:annealing_influence} that with less annealing level, metrics performance are similar. 
However, we observe a qualitative impact of this parameter.
For inpainting, out method outperform RED and RED Prox with a fixed number of iteration. Only DiffPIR remains faster. For this inverse problem, we have observed that a smaller number of iteration ($500$) is sufficient.

Computing all the necessary experiments to generate Table~\ref{table:deblurring} requires 9 hours and 40 minutes on a GPU NVIDIA Quadro RTX 8000.
Similarly, generating Table~\ref{table:inpainting} requires 8 hours of computation on a GPU NVIDIA Quadro RTX 8000. The whole computational resources used for this paper are 17 hours and 40 minutes of computation on a GPU NVIDIA Quadro RTX 8000.

\begin{table}
\centering
\resizebox{0.65\linewidth}{!}{
\begin{tabular}{ |c || c| c|c|c|c| }
\hline
Inverse Problem & RED & RED Prox & SNORE & SNORE Prox & DiffPIR \\
\hline
Deblurring & 5 & 2 & 67 & 41 & 1\\
\hline
Inpainting & 39 & 40 & 40 & 40 & 1\\
\hline
\end{tabular}
}
\caption{Computational time in second (averaged on $4$ images) for various methods and inverse problems on a GPU NVIDIA Quadro RTX 8000.}
\label{table:computational_time}
\end{table}

\subsection{Deblurring}
In this part, we give more details on our experiments for image deblurring. We also discuss the influence of the parameters $m$, $\alpha_{m-1}$ and $\sigma_{m-1}$ on Annealing SNORE outputs (Algorithm~\ref{alg:Annealead_SNORE}).

\paragraph{Parameters setting} In Table~\ref{table:parameters_deblurring}, we give the values of the different parameters used in our experiments.

\begin{table}
\centering
\resizebox{0.65\linewidth}{!}{%
\begin{tabular}{ |c || c| c|c|c|c| }
\hline
Parameters & RED & RED Prox & SNORE & SNORE Prox & PnP SGD \\
\hline
$\alpha$ (motion blur, $\sigma_{\vy} < 20/255$) & 0.1 & 0.2 & & & 0.5 \\
\hline
$\alpha$ (motion blur, $\sigma_{\vy} \ge 20/255$) & 0.1 & 0.3 & & & 0.5 \\
\hline
$\alpha$ (fixed blur, $\sigma_{\vy} < 20/255$) & 0.075 & 0.2 & & & 0.5 \\
\hline
$\alpha$ (fixed blur, $\sigma_{\vy} \ge 20/255$) & 0.075 & 0.3 & & & 0.5 \\
\hline
$\alpha_{0}$ &  &  & 0.1 & 0.1 & \\
\hline
$\alpha_{m-1}$ &  &  & 1 & 1 & \\
\hline
$\sigma / \sigma_{\vy}$ ($\sigma_{\vy} < 20/255$) & 1.8 & 1.4 & & & 1.\\
\hline
$\sigma / \sigma_{\vy}$ ($\sigma_{\vy} \ge 20/255$) & 1.8 & 1.8 & & & 1.\\
\hline
$\sigma_{0} / \sigma_{\vy}$ & & & 1.8 & 1.8 &  \\
\hline
$\sigma_{m-1} / \sigma_{\vy}$ & & & 0.5 & 0.5 & \\
\hline
$\text{max}_{\text{itr}} $ & 100 & 100 & 1500 & 1500 & 1000\\
\hline
$\delta$ & $1/\alpha$ & $1/\alpha$ & 0.1 & 0.1 & 0.1 \\
\hline
$\beta$ & &  &  & & 0.01 \\
\hline
\end{tabular}
}
\caption{Parameters setting for image deblurring for the different implemented methods.}
\label{table:parameters_deblurring}
\end{table}

\paragraph{More deblurring results} For a better qualitative comparison between methods, we present in Figure~\ref{fig:sets_of_result_deblurring} several image deblurring results obtained with various blur kernels. In Figure~\ref{fig:restauration_annealing_parameters}, we present the  parameters evolution during the optimization process for the annealing SNORE algorithm (Algorithm~\ref{alg:Annealead_SNORE}). The annealing levels between $\alpha_0$, $\alpha_{m-1}$ and $\sigma_0$, $\sigma_{m-1}$ are chosen by a linear interpolation.

Next we focus on the choice of the final parameters $(\alpha_{m-1}, \sigma_{m-1})$, because these parameters define the optimization problem from which a critical point is finally computed. The preliminary resolution of optimization problems with parameters $(\alpha_i, \sigma_i)_{i<m-1}$ can be understood as a procedure to compute a relevant initialization for the last annealing level. 
We choose to optimize at different levels $(\alpha_i, \sigma_i)$ in order to compute a good approximation of a critical point at fixed $(\alpha_i, \sigma_i)$ and to fit with the setting of Proposition~\ref{prop:critical_point_accumulate_values}.

\paragraph{Influence of $\alpha_{m-1}$}
On Figure~\ref{fig:restauration_alpha_sensitivity}, we study the influence of the final weighting parameter $\alpha_{m-1}$ on the restoration provided by the SNORE algorithm. One can see that if $\alpha_{m-1}$ is too small, the problem is less regularized so there is a residual noise. On the other hand, if $\alpha_{m-1}$ is too large, the restored image is very flat.

\begin{figure}[!ht]
    \centering
    \includegraphics[width=\textwidth]{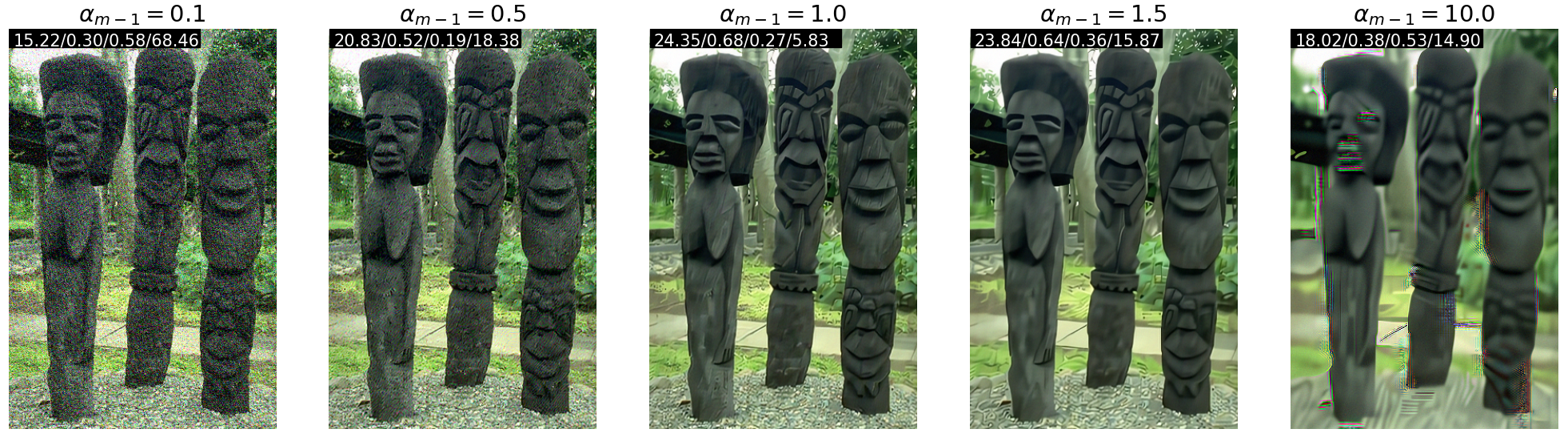}
    \caption{Influence of the parameter $\alpha_{m-1}$ in SNORE algorithm on deblurring with an input noise of level $\sigma_{\vy} = 10 /255$.}
    \label{fig:restauration_alpha_sensitivity}
\end{figure}

\paragraph{Influence of $\sigma_{m-1}$}
On Figure~\ref{fig:restauration_sigma_sensitivity}, we illustrate the influence of the final denoiser parameter $\sigma_{m-1}$ on the restoration obtained with the SNORE algorithm. One can see that if $\sigma_{m-1}$ is too small, the problem is less regularized  and a residual noise is present. On the other side, if $\sigma_{m-1}$ is too large, the restored image is too flat. 

The influence of parameters $\sigma_{m-1}$ and $\alpha_{m-1}$ is therefore similar. However, we observed experimentally that having these two free parameters allows to obtain better restoration results. 

\begin{figure}[!ht]
    \centering
    \includegraphics[width=\textwidth]{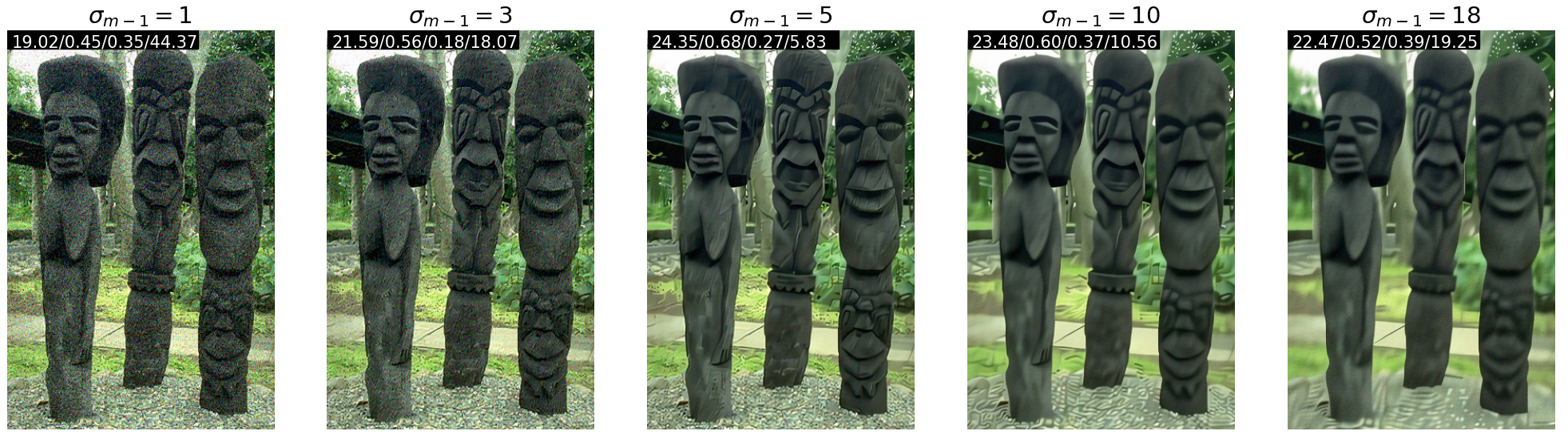}
    \caption{Influence of the parameter $\sigma_{m-1}$ in SNORE algorithm on deblurring with an input noise of level $\sigma_{\vy} = 10 / 255$.}
    \label{fig:restauration_sigma_sensitivity}
\end{figure}

\paragraph{Influence of $m$}
On Figure~\ref{fig:annealing_influence}, we observe the influence of the number of annealing levels $m$ on the quality of the reconstruction. Metrics are not sensitive to this parameter but we observe on images that some artifacts are reduced with additional annealing levels. Our experiments with the SNORE algorithm suggest that with more annealing levels, the algorithm performs better to inverse the degradation and less artifacts are visible. Images of Figure~\ref{fig:annealing_influence} support this claim. Note that local artifacts do not seem to have a significant influence on the metric values.

\paragraph{On the step-size $\delta$}
\citet{laumont2022maximumaposteriori} proposes to use a two-phase gradient-descent, a first one with $\delta_0 > 0$ fixed for finite number of iteration then a second phase with decreasing step-size $\delta_k = \delta_0 / k^{\gamma}$ (in their experiments $\gamma = 0.8$). They ensure that Assumption~\ref{ass:step_size_decreas} is verified. However, they observe that the second phase has no impact on the output of the algorithm. We try the same framework (with various $\gamma \in ]\frac{1}{2}, 1]$) and also observe that the second phase is useless. For efficiency, we choose to only compute the first phase with fixed $\delta > 0$.

\begin{figure}[!ht]
    \centering
    \begin{subfigure}{0.8\textwidth}
        \includegraphics[width=\textwidth]{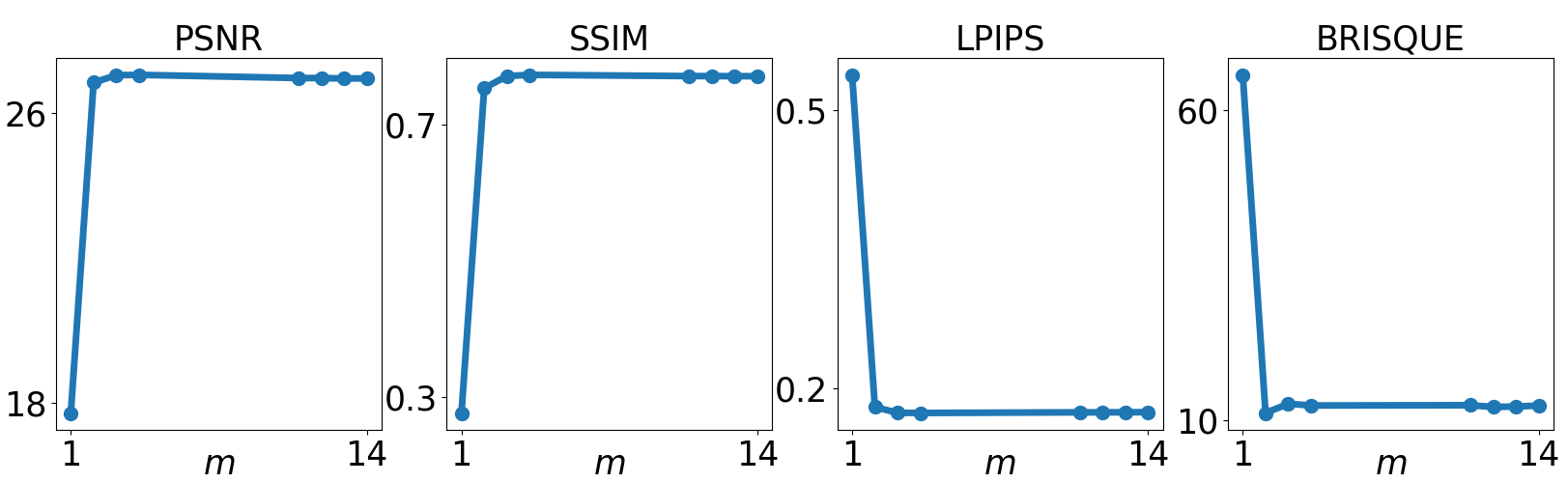}
        \caption{}
        \label{fig:annealing_influence_metrics}
    \end{subfigure}
    \begin{subfigure}{\textwidth}
        \includegraphics[width=\textwidth]{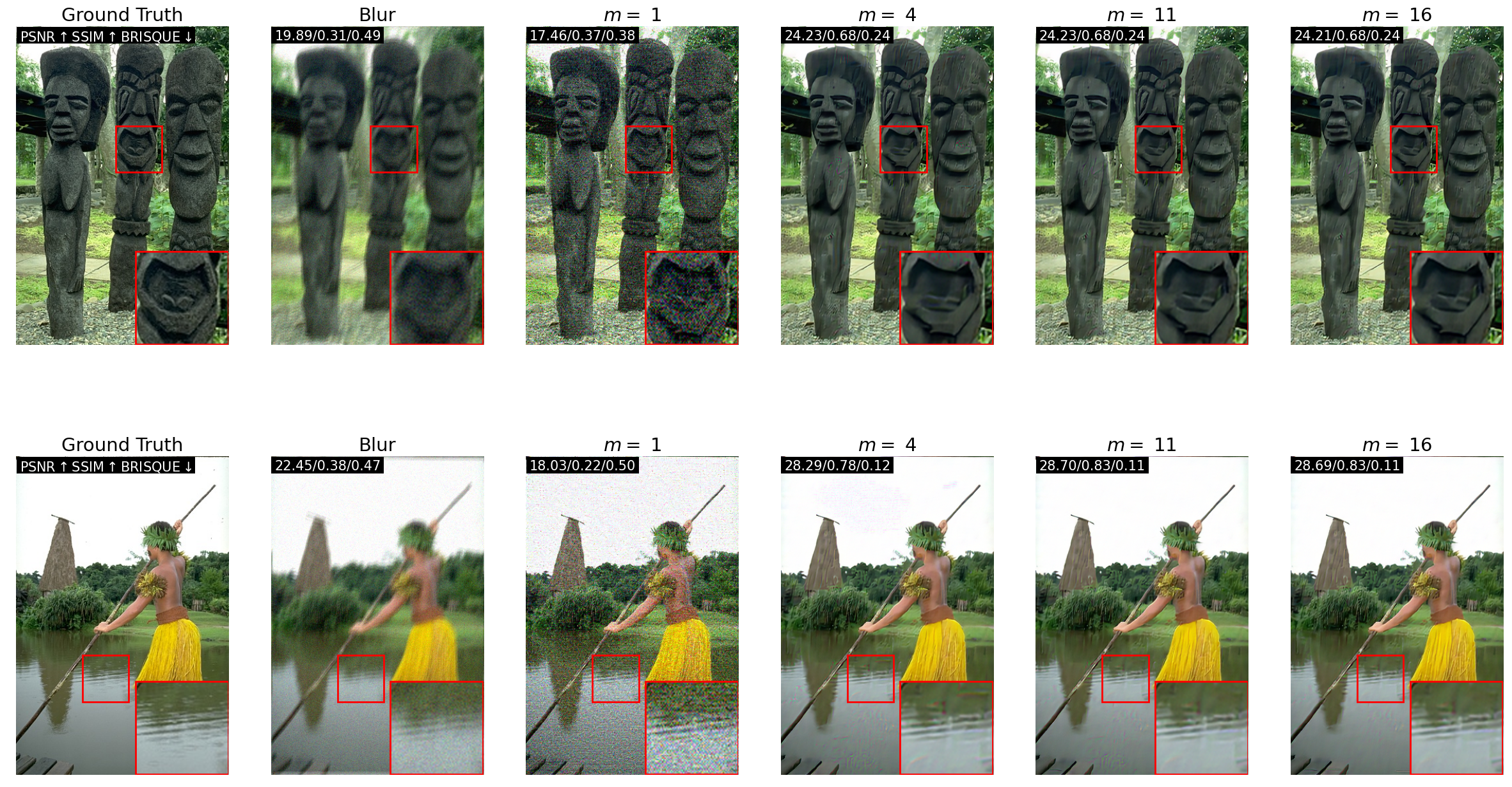}
        \caption{}
        \label{fig:annealing_influence_images}
    \end{subfigure}
    \caption{Influence of the number of annealing levels $m$ on the reconstruction with SNORE algorithm for a motion blur with a noise of standard deviation $\sigma_{\vy} = 10 /255$ and a fixed number of $1500$ iterations. \textit{Figure~\ref{fig:annealing_influence_metrics}:} Metrics evolution with different $m$. One can note that the number of annealing levels $m$ has a low influence on metrics values. \textit{Figure~\ref{fig:annealing_influence_images}:} Reconstructed images with the SNORE algorithm for different numbers of annealing levels $m$. One can note that, the larger $m$ is, the less artifacts are visible. However, compute a larger number of annealing levels impose to compute a larger number of iterations, to converge for each annealing parameters.}
    \label{fig:annealing_influence}
\end{figure}

\subsection{Uncertainty of SNORE}\label{sec:uncertainity_snore}

\paragraph{Seed sensitivity}
On Figure~\ref{fig:restauration_seed_sensitivity}, we illustrate the robustness of  of SNORE to stochasticity, by running the algorithm with different random seed and looking at the standard deviation of the corresponding reconstructions. We observe that our restoration has a low variability and thus a low uncertainty. This is a crucial behavior for the reliability of the algorithm. This experiment suggests that the restoration, and especially the reconstructed structures, is stable and thus reliable.

\begin{figure}[!ht]
    \centering
    \includegraphics[width=\textwidth]{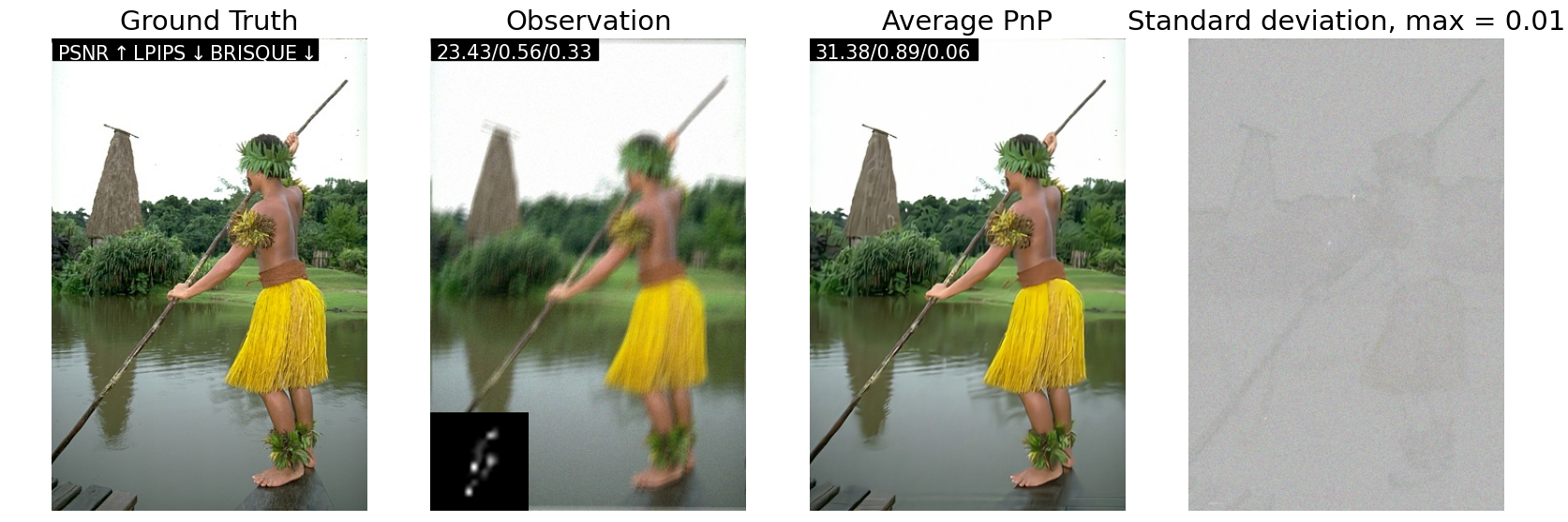}
    \caption{Uncertainty of SNORE to the algorithm randomness. \textit{Rightmost:} The algorithm has been run with $10$ different seeds and the standard deviation of restored images is shown. The blur kernel is shown on the observed images and the input noise level is $\sigma_{\vy} = 5/255$. Note how the algorithm is stable with a small standard deviation of maximum $0.01$ (pixel-values are in $[0,1]$). Especially, structures are stable with a particularly low standard deviation.}
    \label{fig:restauration_seed_sensitivity}
\end{figure}

\paragraph{Initialization sensitivity}
On Figure~\ref{fig:initialization_sensitivity}, three different initialization are shown for a deblurring task. We notice that the SNORE algorithm does not diverge, even with a random initialization. This is a remarkable property since we face a non-convex optimization for which the initialization of the algorithm is crucial. With a pure noise initialization, artifacts are nevertheless present on the restored images. We also observe that the gap between the algorithm run with an oracle initialization or the observation is tight. This observation suggests that the observation is a good initialization for the SNORE algorithm.

\begin{figure}[!ht]
    \centering
    \includegraphics[width=\textwidth]{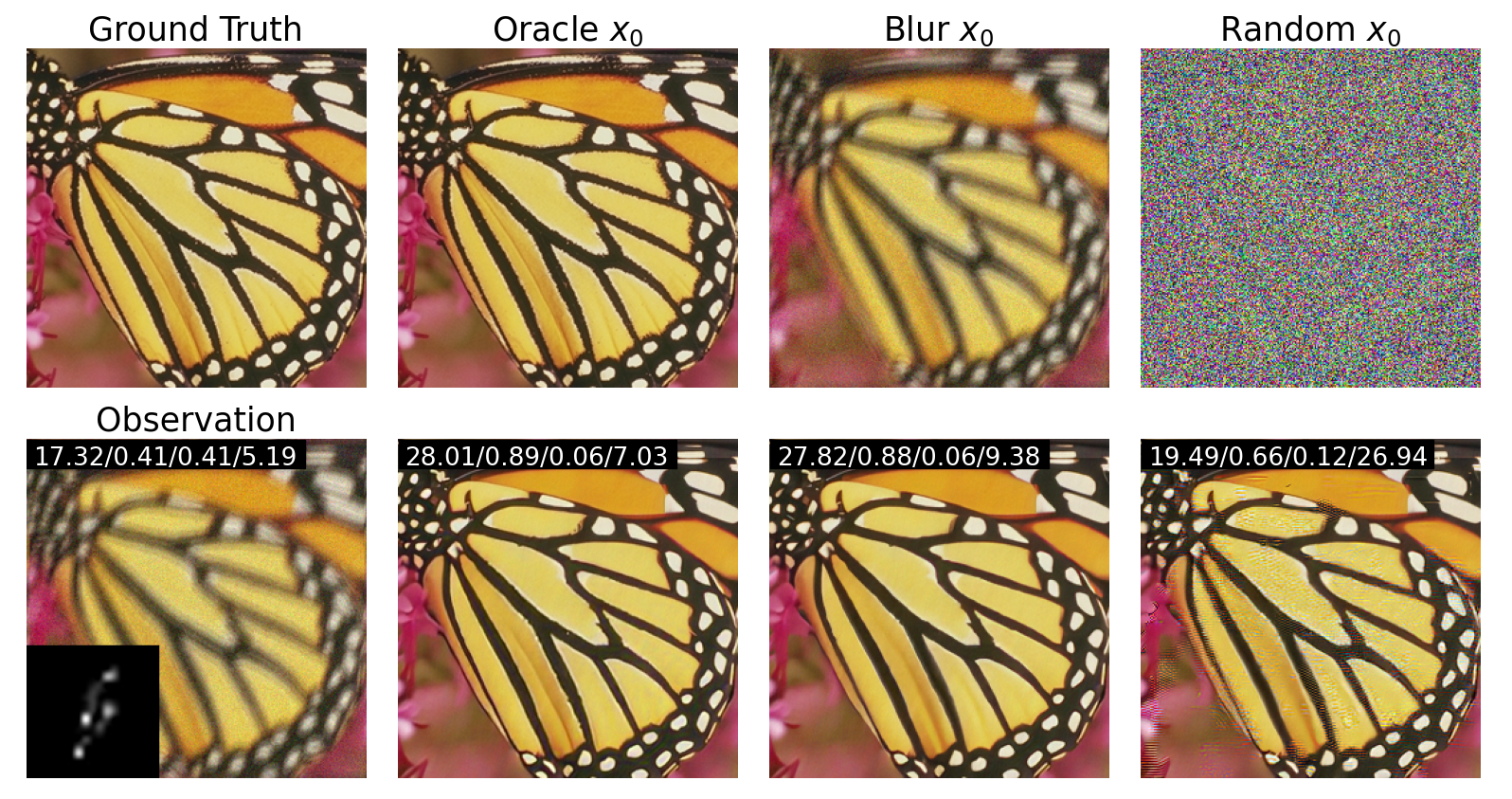}
    \caption{Sensitivity of SNORE to the algorithm initialization. A motion blur kernel and a noise of standard deviation $\sigma_{\vy} = 10/255$ are used to degrade the image. \textit{Top Three Leftmost:} Three initializations are used to start the SNORE algorithm: the ground-truth image (Oracle), the observation and a random image where each pixel is sampled uniformly in $[0,1]$. \textit{Bottom Three Rightmost:} Three corresponding reconstructions. Note how the algorithm succeeds to reconstruct a relevant image even with a random initialization.}
    \label{fig:initialization_sensitivity}
\end{figure}

\subsection{Inpainting}

The inpainting mask is created by sampling a Bernouilli law of success probability $p = 0.5$ for each pixel of the image.

\paragraph{Parameters setting} On Table~\ref{table:parameters_inpainting}, we detail the practical choice of parameters we made.
As suggested by~\citep{hurault2022gradient}, for image inpainting, we start by running the algorithm with a larger value $\sigma = 50/255$ for a number of iterations $n_{\text{init}}$. This allows the algorithm to tackle the ill-posedness of the inpainting task.

\begin{table}
\centering
\resizebox{0.65\linewidth}{!}{%
\begin{tabular}{ |c || c| c|c|c| }
\hline
Parameters & RED & RED Prox & SNORE & SNORE Prox \\
\hline
$\alpha$ & 0.15 & 0.15 & & \\
\hline
$\alpha_{0}$ &  &  & 0.15 & 0.15 \\
\hline
$\alpha_{m-1}$ &  &  & 0.4 & 0.15\\
\hline
$\sigma$ & 10/255 & 10/255 & & \\
\hline
$\sigma_{0}$ & & & 50/255 & 50/255 \\
\hline
$\sigma_{m-1}$ & & & 5/255 & 5/255\\
\hline
$n_{\text{init}} $ & 10 & 100 &  & \\
\hline
$\text{max}_{\text{itr}} $ & 500 & 500 & 500 & 500\\
\hline
$\delta$ (initialization) & $1/\alpha$ & 0.5 & 0.5 & 1\\
\hline
\end{tabular}
}
\caption{Parameters setting for image inpainting for the different implemented methods.}
\label{table:parameters_inpainting}
\end{table}

\begin{figure}[!ht]
    \centering
    \includegraphics[width=\textwidth]{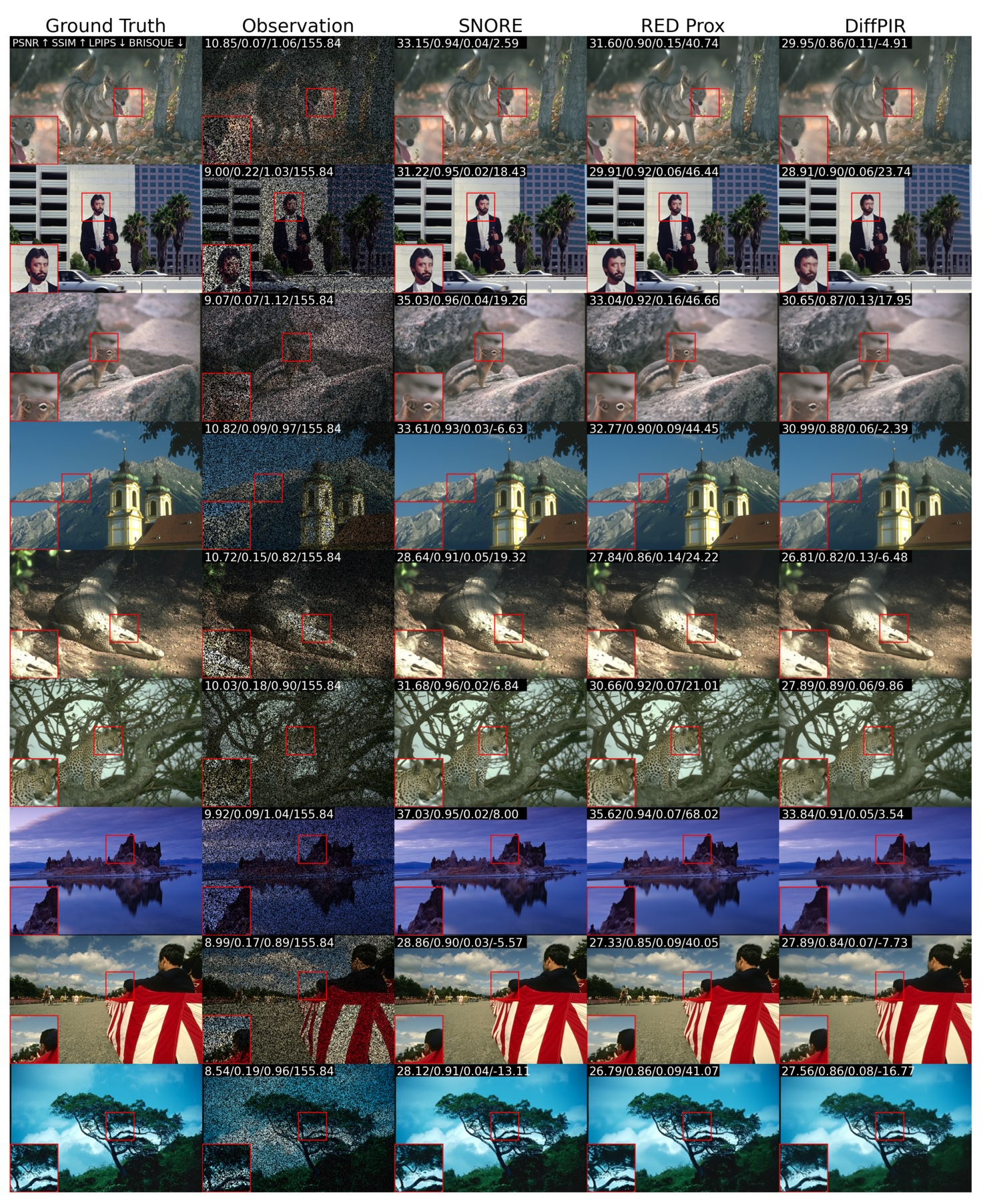}
    \caption{Restorations obtained with SNORE, RED and DiffPIR algorithms for various images from the dataset CBSD68 on the inpainting task with a random mask with a proportion $0.5$ of masked pixels.}
    \label{fig:sets_of_result_inpainting}
\end{figure}

\paragraph{More results} On Figure~\ref{fig:sets_of_result_inpainting}, we present various results of image inpainting for a better qualitative comparison between methods.

\begin{figure}[!ht]
    \centering
    \includegraphics[width=\textwidth]{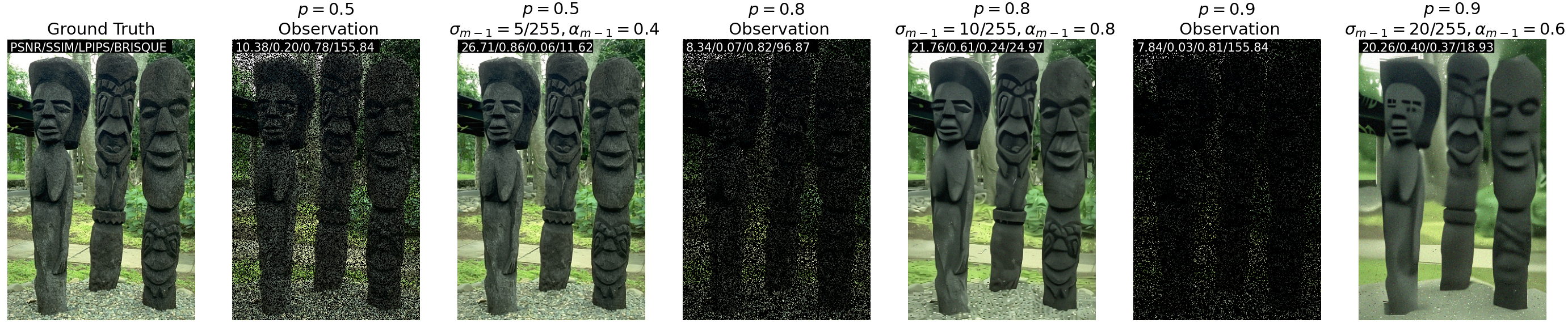}
    \caption{Restorations obtained with SNORE for inpainting, with a random mask of proportion $p$ of masked pixels, on one image from the dataset CBSD68. The last annealing parameters $(\alpha_{m-1}, \sigma_{m-1})$ are given for each restored image.}
    \label{fig:inpainting_various_masks}
\end{figure}

On Figure~\ref{fig:inpainting_various_masks}, we provide more results of SNORE algorithm on various inpainting problems (with of proportion $p \ge 0.5$ of masked pixels). As expected, we observe that the quality of the restoration decrease with the proportion of missing pixels.

\begin{figure}[!ht]
    \centering
    \includegraphics[width=\textwidth]{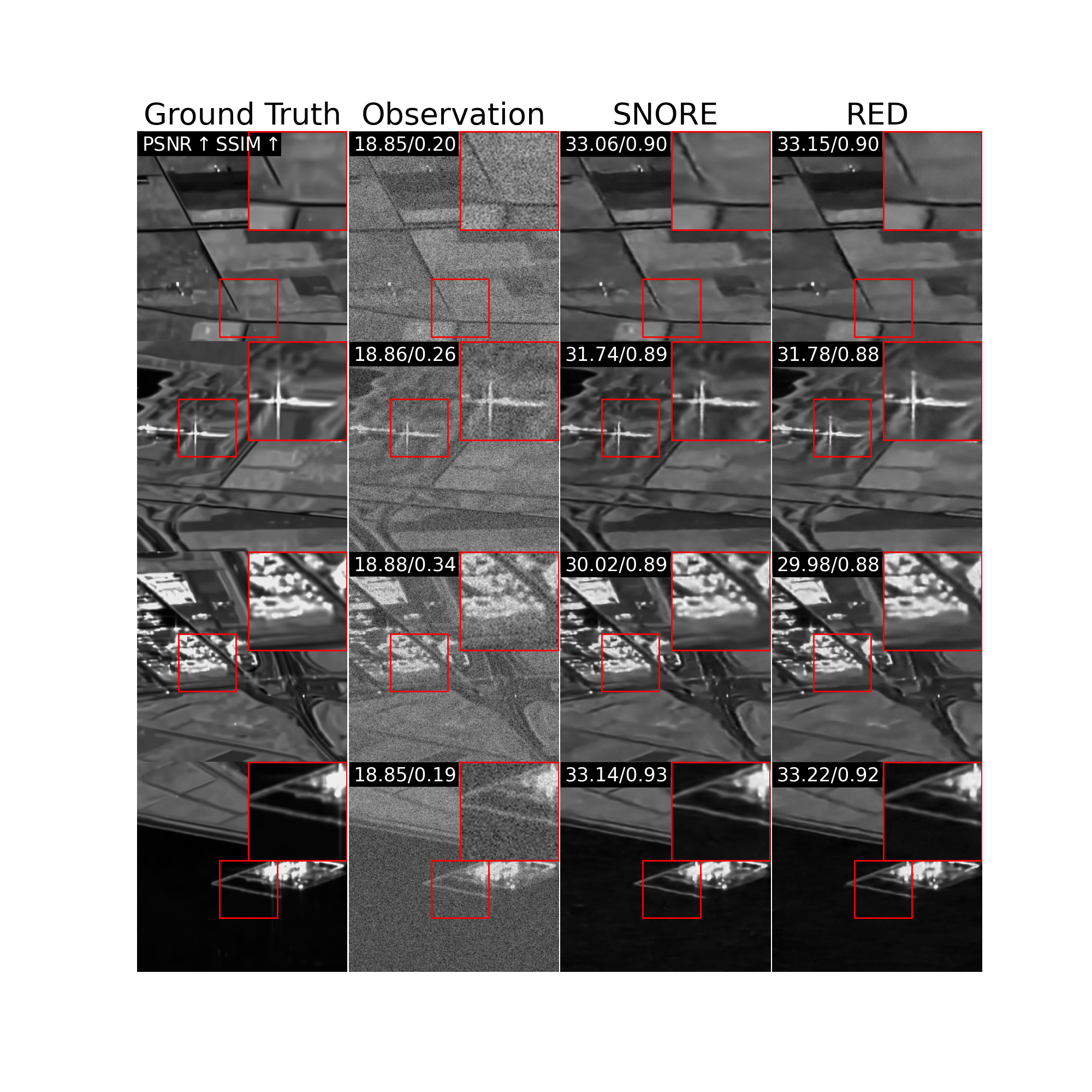}
    \caption{Image despeckling ($L = 1$) on various SAR images with RED and SNORE algorithm with a GS-denoiser trained on SAR images.
    }
    \label{fig:despeckle}
\end{figure}

\subsection{Super-resolution}
For image super-resolution, the observation $\vy \in \R^q$ is the low-resolution version of $\vx \in \R^d$, obtained by $\vy = \mS \mH \vx + \vn$, where $\mH$ is an anti-aliasing blur kernel and $\vn \sim \nN(0, \sigma_{\vy})$. $\mS$ is the standard down-sampling matrix with the super-resolution factor $s_f$. The data-fidelity is given by $\fF(\vx, \vy) = \frac{1}{2 \sigma_{\vy}}\|\mS \mH \vx - \vy \|^2$ and its proximal operator~\citep{zhao2016fast} by $\text{Prox}_{\delta \fF}(\vz) = \hat{\vz} - \frac{1}{s_f^2} \mF^\star \bar{\mLambda}^\star \left( \mI_q + \frac{\delta}{s_f^2} \bar{\mLambda} \bar{\mLambda}^\star \right)^{-1} \bar{\mLambda} \mF \hat{\vz}$, where $\hat{\vz} = \delta \mH^T \mS^T \vy + \vz$ and $\bar{\mLambda} = \left( \mLambda_1, \dots \mLambda_{s_f^2} \right) \in \R^{q\times d}$, with $\mLambda = \text{diag}\left(\mLambda_1, \dots \mLambda_{s_f^2}\right)$ a block-diagonal decomposition according to a $s_f \times s_f$ paving of the Fourier domain.

Thanks to the previous expression of the Proximal operator, SNORE Prox and RED Prox can be computed for super-resolution. On Figure~\ref{fig:SR_results}, we give qualitative results for image super-resolutions on a kernel of blur and various images. Note that SNORE Prox produce better perceptual results (LPIPS, BRISQUE) by creating local texture. However, SNORE Prox is worst in distortion metrics (PSNR, SSIM) as for image deblurring. These are preliminaries experiments and we leave for futur work to adapt each methods for image super-resolution and make a quantitative evaluation.

\begin{figure}[!ht]
    \centering
    \includegraphics[width=\textwidth]{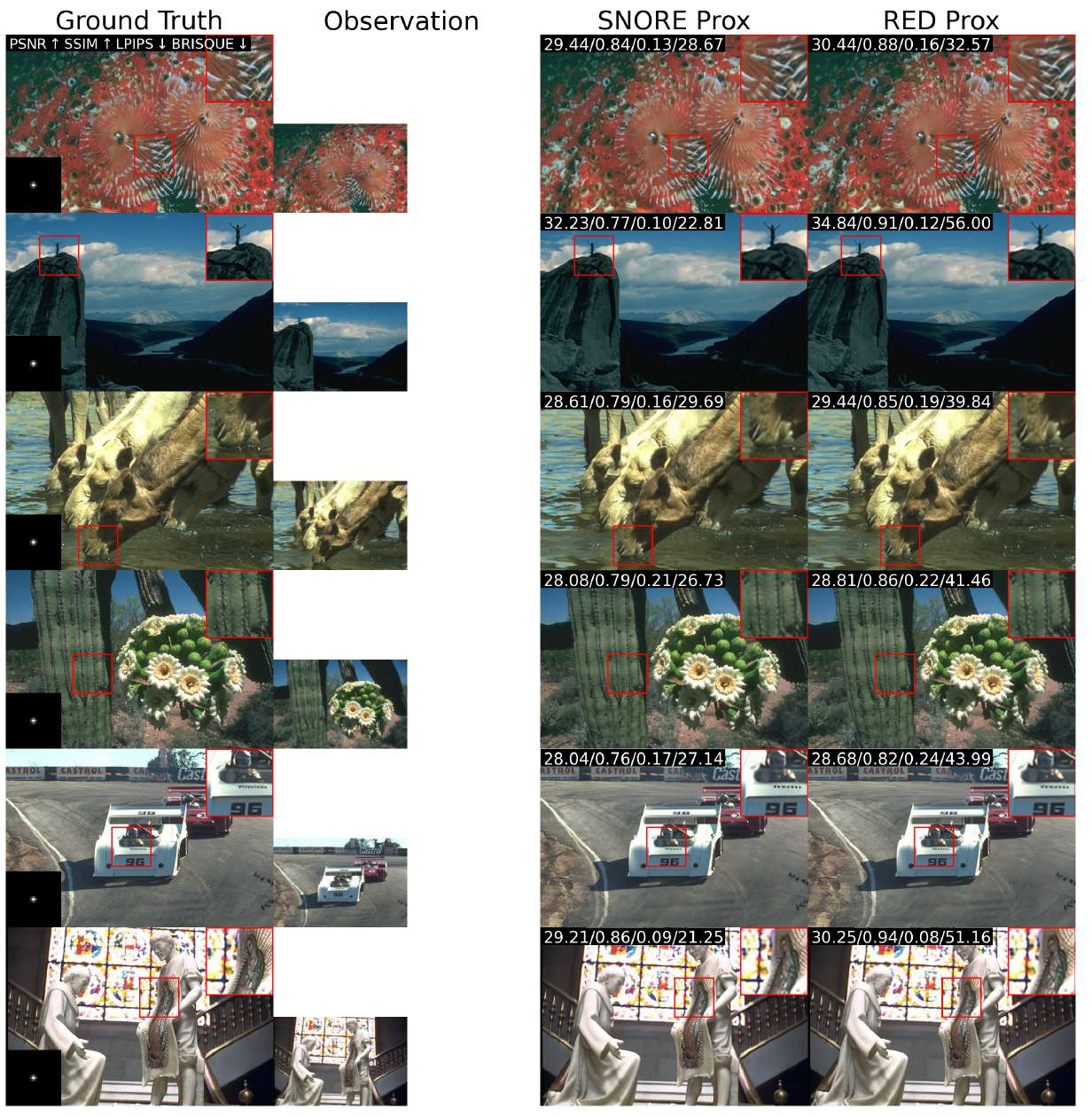}
    \caption{Image super-resolution obtained with SNORE Prox and RED Prox, with a super-resolution factor $s_f = 2$ and an input noise $\sigma_{\vy} = 5/255$, on various images from the dataset CBSD68.}
    \label{fig:SR_results}
\end{figure}

\paragraph{Parameters setting}
On Table~\ref{table:parameters_SR}, we details the practical choice of parameters we made. Note that the number of iterations of the algorithm are the same for image super-resolution.

\begin{table}
\centering
\resizebox{0.45\linewidth}{!}{%
\begin{tabular}{ |c || c| c|c|c| }
\hline
Parameters & RED Prox & SNORE Prox \\
\hline
$\alpha$ & 0.065 & \\
\hline
$\alpha_{0}$ & & 0.02 \\
\hline
$\alpha_{m-1}$ & & 0.3\\
\hline
$\sigma/\sigma_{\vy}$ & 2 & \\
\hline
$\sigma_{0}/\sigma_{\vy}$ & & 4 \\
\hline
$\sigma_{m-1}/\sigma_{\vy}$ & & 2\\
\hline
$\text{max}_{\text{itr}} $ & 400 & 400 \\
\hline
$\delta$ (initialization) & $1/\alpha$ & 1\\
\hline
\end{tabular}
}
\caption{Parameters setting for image super-resolution for the different implemented methods.}
\label{table:parameters_SR}
\end{table}

\subsection{SAR despeckling}
For Synthetic Aperture Radar (SAR) despeckling, we consider the Goodman's model~\citep{Goodman1976SomeFP} in which we aim at recovering  the underlying reflexivity $\vR \in \R_+^d$ from the observed intensity $\vI \in \R_+^d$, that is a noisy version of $\vR$ with a multiplicative gamma noise, $\vI = \vN \vR$, where $\vN \sim \Gamma(1, L)$ and $L>0$ is called the number of looks~\citep{Goodman1976SomeFP}. By tacking the log of the previous model and denoting $\vy = \log{\vI}$ and $\vx = \log{\vR}$, we turn the multiplicative noise into an additive noise, $\vy = \vx + \vn$, where $\vn$ is following the Fisher-Tippett distribution.

To solve this problem, we can solve the following variation problem~\citep{Deledalle_2017}
\begin{align*}
    \argmin_{\vx \in \R^d}{-\log{p(\vy|\vx)} + \alpha \fR(\vx)},
\end{align*}
where $-\log{p(\vy|\vx)} = L \sum_{k=1}^d x_k + e^{y_k - x_k} + \text{Cst.}$ ($x_k$ is the value of the $k$-pixel of $\vx$) and $\fR$ the regularization.

Working in the log-domain allow us to solve an unconstrained optimization with a convex data-fidelity term. However the gradient of this data-fidelity term is not Lipschitz. By solving this problem with the SNORE regularization, our theoretical analysis still holds (the data-fidelity is $C^{\infty}$). However, the boundedness hypothesis might be harder to verify because the objective function is not necessarily coercive (see Appendix G for more details).

To solve experimentally this problem, we have trained a GS-denoiser~\citep{hurault2022gradient} on the SAR-speckle free dataset developed by~\citet{Dalsasso_2020}. This denoiser is trained to remove additive gaussian noise to SAR images for $\sigma \in [0,50]$ with the same traning parameters than proposed by~\citet{hurault2022gradient}.

We use RED and SNORE algorithm to despeckle images with the parameters setting details in Table~\ref{table:parameters_despeckling}. On Figure~\ref{fig:despeckle}, we show SNORE and RED algorithm qualitative performance on various SAR images. We only compute the PSNR and SSIM metrics because the LPIPS and the BRISQUE are designed for color images. We can notice that SNORE and RED succeed to restore good quality images for the non-standard data-fidelity term of image despeckling.

\begin{table}
\centering
\resizebox{0.4\linewidth}{!}{%
\begin{tabular}{ |c || c| c|c|c| }
\hline
Parameters & RED & SNORE \\
\hline
$\alpha$ & 80 &  \\
\hline
$\alpha_{0} = \alpha_{m-1}$ &  & 80\\
\hline
$\sigma \times 255$ & 10 & \\
\hline
$\sigma_{0} \times 255$ &  & 30 \\
\hline
$\sigma_{m-1} \times 255$ &  & 10\\
\hline
$\text{max}_{\text{itr}} $ & 100 & 100\\
\hline
$\delta$ & 0.01 &  0.01\\
\hline
\end{tabular}
}
\caption{Parameters setting for image despeckling for the different implemented methods.}
\label{table:parameters_despeckling}
\end{table}

\section{SNORE applies the denoiser on its training domain}\label{sec:estimation_noise}
One motivation to use SNORE is to force the denoiser to be applied on its training domain. To do so, at each iteration, the denoiser $D_{\sigma}$ is not applied to the previous iteration $\vx_k$ (as in RED) but to a noisy version of $\vx_k$, $\tx_k = \vx_k + \sigma \epsilon$, where $\epsilon \sim \nN(0, \Id)$. In this noisy version $\tx_k$, the input noise level is exactly the noise level of the denoiser.

However, in pratice, there might be residual noise in the iteration $\vx_k$, so the noise level of $\tx_k$ might be higher than $\sigma$. In order to verify this experimentally, we use a robust wavelet-based noise estimator~\citep{Donoho1994} implemented in the library library scikit-image (aka skimage) as the function estimate\_sigma(). We define $\hat{\sigma}(\vx)$ the noise estimation of the image $\vx$.

\begin{figure}[!ht]
    \centering
    \includegraphics[width=0.5\textwidth]{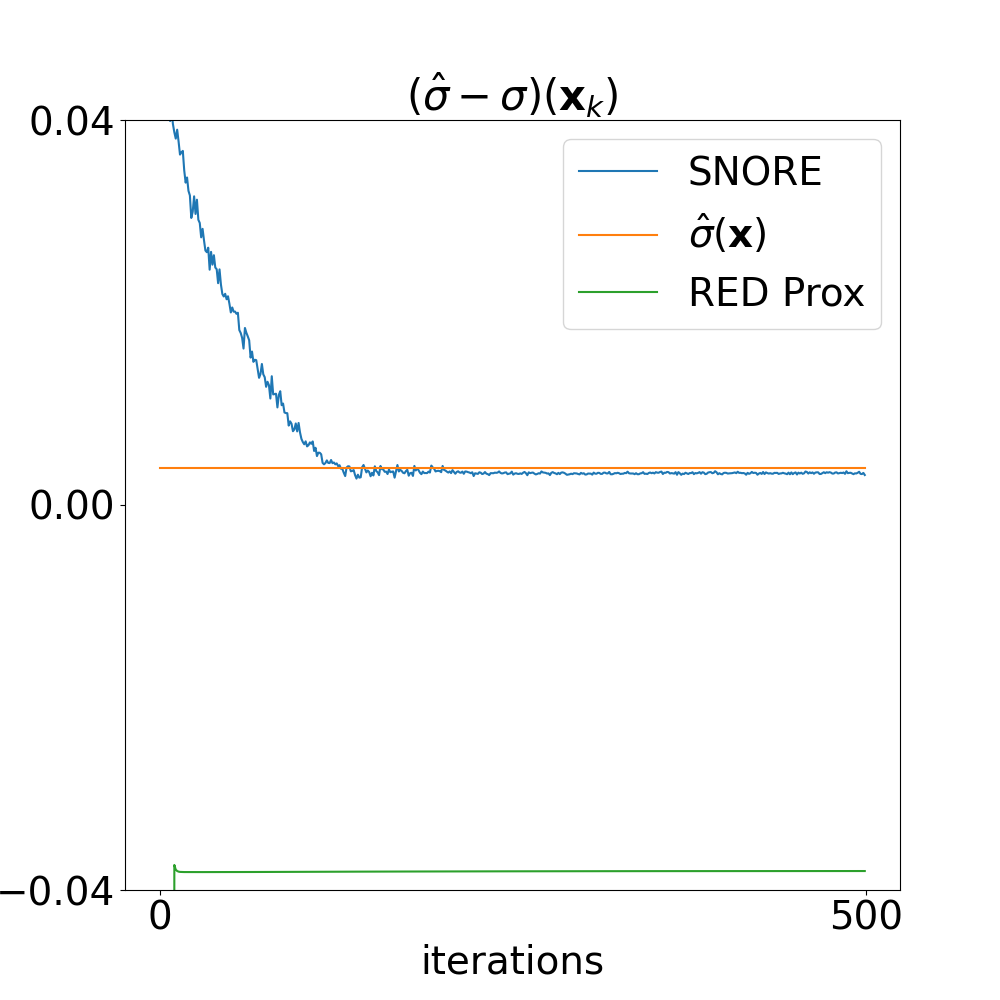}
    \caption{Difference between the estimated noise $\hat{\sigma}$ and the denoiser parameter $\sigma$ for iterations of SNORE and RED algorithms for inpainting (with a proportion $p = 0.5$ of masked pixels) on one image of the dataset CBSD68.
    In orange, the estimated noise on the clean image.
    }
    \label{fig:noise_estimation}
\end{figure}

On Figure~\ref{fig:noise_estimation}, we see that the annealing level are not visible for SNORE, which suggests that the algorithm is well adapt to the noise level. Moreover, the residual noise is decreasing and converging to the natural noise shift for SNORE. After a first phase of removing the noise of the initialization, SNORE algorithm effectively applies the denoiser to an image with the right noise level. For RED, the shift between the input noise level and the noise level of the denoiser is still large at convergence. This suggests that the denoiser is applied on an image out-of-distribution (without the right level of noise) at convergence.

\section{Discussion on the boundedness of $(\vx_k)_{k \in \N}$}\label{sec:boundness_sequence}

In Proposition~\ref{prop:convergence_unbiased}-\ref{prop:biais_convergence}, the convergence of Algorithm~\ref{alg:Average_PnP} is studied almost surely on $\Lambda_{\cK}$, the set of realizations where $(\vx_k)_{k \in \N}$ is bounded in a compact $\cK$. In what follows, we name that the \textit{boundedness assumption}. That kind of assumption is standard in stochastic gradient descent analysis with non-convex objective functions~\citep{Benaim1999,tadic2017asymptotic}.
However, one can remark that in similar non-stochastic Plug-and-Play methods~\citep[Appendix D]{hurault2022gradient} or in posterior sampling algorithms~\citep{Laumont_2022_pnpula, renaud2023plugandplay}, a projection or a penalty term can be added to guarantee a bounded sequence. 
Unfortunately, to our knowledge, a simple projected stochastic gradient descent step is not simple to analyze.

\citet{davis2018stochastic} prove convergence of a projected stochastic gradient descent algorithm, but the convergence analysis relies on a random choice of the ending step. 
\citet{ghadimi2013stochastic} develop a similar approach. The random choice of the ending step is not satisfying in our setting as we want to fix the number of iterations for a fair comparison with deterministic methods.

\begin{algorithm}
\caption{Randomly Projected SNORE}\label{alg:Projected_Average_PnP}
\begin{algorithmic}[1]
\STATE \textbf{input:} $\vx_0 \in \R^d$, $m \in \N$, $\delta > 0$, $\sigma > 0$, $\alpha > 0$, $N \in \N$, $\beta_0 > 0$, $\lambda_0 = 0$.
\FOR{$k = 0, 1, \dots, N-1$}
        \STATE $\delta_k \gets \frac{\delta}{k+1}$
        \STATE $\veps \gets \mathcal{N}(0, \mI_d)$
        \STATE $\tx_k \gets \vx_k + \sigma \veps$
        \STATE $\vz_{k+1} \gets \vx_k - \delta_k \nabla \fF(\vx_k, \vy) - \frac{\alpha \delta_k}{\sigma^2} \left(\vx_k - D_{\sigma}(\tx_k) \right)$
        \STATE $\vx_{k+1} \gets \vz_{k+1} \mathbf{1}_{\|\vz_{k+1}\| \le \beta_{\lambda_k}} + \vx_0 \mathbf{1}_{\|\vz_{k+1}\| > \beta_{\lambda_k}}$
        \STATE $\lambda_{k+1} \gets \lambda_k + \mathbf{1}_{\|\vz_{k+1}\| > \beta_{\lambda_k}}$
\ENDFOR
\end{algorithmic}
\end{algorithm}

Another way to ensure convergence is to use the random projected stochastic gradient descent algorithm proposed by~\cite{Nurminski1973}. As detailed in Algorithm~\ref{alg:Projected_Average_PnP}, at each iteration, this algorithm realizes a projection onto a ball parameterized by an increasing sequence of positive real numbers  $(\beta_n)_{n \in \mathbf{N}}$. This algorithm is proved to converge without the boundedness assumption. \citet[Theorem A1.1.]{tadic2017asymptotic} explored this perspective and demonstrated that the iterates of this algorithm are bounded. However, in our context, \citep[Assumption 1.2.]{tadic2017asymptotic} is difficult to verify. In fact, our demonstrations (Proof of Proposition~\ref{prop:convergence_unbiased} in Appendix~\ref{sec:proof_unbiased}) rely on an upper bound  $\eE\left(\|\xi_k\|^2 \right)$ which is obtained thanks to the boundedness assumption. In our context, Algorithm~\ref{alg:Projected_Average_PnP} has not been proved to converge.

Finally, we leave for future work the exploration of a strategy to demonstrate the convergence of Algorithm~\ref{alg:Average_PnP} without any boundedness assumption.

\end{document}